\newenvironment{disparray}[2][1]%
 {\everymath{\displaystyle\everymath{}}\array{#2}}%
 {\endarray}
\newcommand{\E}{\mathsf{E}}
\newcommand{\mydownarrow}{\mathord{\downarrow}}
\newcommand{\ded}{\ensuremath{\rhd}}
\newcommand{\qded}{\ensuremath{\bowtie}} 
\newcommand{\downR}{\ensuremath{\mydownarrow_{\mathcal{R}}}}
\newcommand{\rR}{\ensuremath{\to_{\cal R}}} 
\newcommand{\pub}{\mathsf{pub}}
\newcommand{\F}{\ensuremath{\mathcal{F}}}
\newcommand{\Fpub}{\ensuremath{\mathcal{F}_{\pub}}}
\newcommand{\R}{\ensuremath{\mathcal{R}}}
\newcommand{\W}{\ensuremath{\mathcal{W}}}
\newcommand{\X}{\ensuremath{\mathcal{X}}}
\newcommand{\ar}{\operatorname{ar}}
\newcommand{\var}{\operatorname{var}}
\newcommand{\param}{\operatorname{par}} 
\newcommand{\dom}{\operatorname{dom}}
\newcommand{\pos}{\operatorname{pos}}
\newcommand{\st}{\operatorname{st}}
\newcommand{\stext}{\st_{\mathsf{ext}}}
\newcommand{\im}{\operatorname{im}}
\newcommand{\eq}{\operatorname{eq}}
\newcommand{\cst}[1]{\mathsf{#1}}
\newcommand{\w}{\mathsf{w}}
\newcommand{\symbfun}[1]{\mathsf{#1}}
\newcommand{\f}{\symbfun{f}}
\newcommand{\g}{\symbfun{g}}
\newcommand{\h}{\symbfun{h}}
\newcommand{\mal}{\symbfun{mal}}
\newcommand{\dec}{\symbfun{dec}}
\newcommand{\enc}{\symbfun{enc}}
\newcommand{\proj}{\symbfun{proj}}
\newcommand{\checksign}{\mathsf{checksign}}
\newcommand{\sign}{\mathsf{sign}}
\newcommand{\blind}{\mathsf{blind}}
\newcommand{\unblind}{\mathsf{unblind}}
\newcommand{\okay}{\mathsf{ok}}
\newcommand{\Init}{\operatorname{Init}}
\newcommand{\Ctx}{\operatorname{Ctx}}
\newcommand{\homo}{\mathsf{hom}}
\newcommand{\pref}{\mathsf{pref}}
\newcommand{\add}{\mathsf{add}}
\newcommand{\Longrightdotarrow}{\stackrel{%
    \mbox{\raisebox{0pt}[0pt]{$\centerdot$}}}{\Longrightarrow}}
\newtheorem{theorem}{Theorem}[section]
\newtheorem{corollary}[theorem]{Corollary}
\newtheorem{proposition}[theorem]{Proposition}
\newtheorem{lemma}[theorem]{Lemma}
\newcommand{\BibTeX}{{\rm B\kern-.05em{\sc i\kern-.025em b}\kern-.08em
    T\kern-.1667em\lower.7ex\hbox{E}\kern-.125emX}}
\title{YAPA: {A generic tool for computing intruder
    knowledge}}
\author{Mathieu Baudet\\ MLstate, France\\
  Véronique Cortier\\LORIA - CNRS, France\\
  Stéphanie Delaune\\ LSV, ENS Cachan \& CNRS \& INRIA Saclay Île-de-France, France }
\begin{abstract}
Reasoning about the knowledge of an attacker is a necessary step in many 
formal analyses of security protocols. In the framework of the applied 
pi calculus, as in similar languages based on equational logics, 
knowledge is typically expressed by two relations: deducibility and 
static equivalence. Several decision procedures have been proposed for 
these relations under a variety of equational theories. However, each 
theory has its particular algorithm, and none has 
been implemented so far.

\smallskip{}

We provide a generic procedure for deducibility and static 
equivalence that takes as input any convergent rewrite system.
We show that our algorithm covers
most of the existing decision procedures for convergent theories. We also 
provide an efficient implementation, and compare it 
briefly with the tools ProVerif and KiSs.
\end{abstract}
\keywords{Security protocols, deduction, static equivalence}
\begin{document}

\begin{bottomstuff}
Author's address: S. Delaune, Laboratoire Sp\'ecification \& V\'erification -
61, avenue du pr\'esident Wilson - 94 230 Cachan.\newline
This work has been partly supported by the ANR-07-SESU-002 AVOTÉ.
A large part of it was done while the first author was working at the ANSSI.
\end{bottomstuff}
\maketitle

\section{Introduction}
\label{sec:intro}

%
%

Understanding security protocols often requires reasoning about the
information accessible to an on-line attacker. Accordingly, many formal
approaches to security rely on a notion of
\emph{deducibility}~\cite{lowe96breaking,MS02} that models whether
a piece of data, typically a secret, is retrievable from a finite set
of messages.
Deducibility, however, does not always suffice to reflect the
knowledge of an attacker. Consider for instance a protocol sending an
encrypted Boolean value, say, a vote in an electronic voting
protocol. Rather than deducibility, the key idea to express
confidentiality of the plaintext is that an attacker should
not be able to \emph{distinguish} between the sequences of messages
corresponding to each possible value. (Such security considerations 
typically motivate the
 use of randomized encryption.)

\medskip{}

%
%

In the framework of the applied pi-calculus~\cite{AbadiFournet2001},
as in similar languages based on equational
logics~\cite{BlanchetAbadiFournetJLAP07},
indistinguishability corresponds to a relation called \emph{static
  equivalence}: roughly, two sequences of messages are \emph{statically equivalent}
when they satisfy the same algebraic relations from the attacker's
point of view.
Static equivalence plays an important role in the study of guessing
attacks (e.g.~\cite{Corin_Doumen_Etalle_WISP04_off_line_guessing_attacks,baudet-ccs2005,AbadiBW06}), as well as for anonymity
properties and electronic voting protocols (e.g.~\cite{DKR-jcs08}).
Static equivalence is also used for specifying privacy in the context of RFID protocols~\cite{MyrtoRFID09}.
%
%
%
In several cases, this notion has also been shown to imply the
more complex and precise notion of cryptographic
indistinguishability~\cite{BCK-ICALP2005,AbadiBW06}, related to probabilistic polynomial-time Turing machines.
Two sequences of messages are \emph{cryptographically
  indistinguishable} when their corresponding bit-string implementations are
indistinguishable to any 
probabilistic polynomial-time Turing machine.

We emphasize that both deducibility and static equivalence apply to
observations on finite sets of messages, and do not take into account
the dynamic behavior of protocols. (This justifies the expression \emph{static equivalence}.)
Nevertheless, deducibility is used as a subroutine by many general
decision procedures~\cite{CLS03,RT03xor}.
Besides, it has been shown that observational equivalence in the
applied pi-calculus coincides with labeled bisimulation~\cite{AbadiFournet2001}, that is,
corresponds to checking a number of static equivalences and some standard
bisimulation conditions.

\medskip{}

%
%

Deducibility and static equivalence rely on an underlying equational
theory for axiomatizing the properties of cryptographic
functions. Many decision
procedures \cite{AbadiCortierTCS06,CortierDelaune-LPAR07-monoidal}
have been proposed to compute these relations under a variety of
equational theories, including symmetric and asymmetric encryptions,
signatures, exclusive OR, and homomorphic operators.
However, except for the class of subterm convergent
theories~\cite{AbadiCortierTCS06}, which covers the standard flavors of
encryption and signature, each of these decision results introduces
a new procedure, devoted to a particular theory. Even in the case of
the general decidability criterion given in~\cite{AbadiCortierTCS06},
we note that the algorithm underlying the proof has to be
adapted for each theory, depending on how the criterion is fulfilled.

Perhaps as a consequence of this fact, none of these decision
procedures has been implemented so far. When we began this work, 
the only tool able to verify static equivalence was
ProVerif~\cite{BlanchetCSFW01,BlanchetAbadiFournetJLAP07}. This general tool can
handle various equational theories and analyze security protocols
under active adversaries. However termination of the verifier is not
guaranteed in general, and protocols are subject to (safe)
approximations. Since then, a new tool, called KiSs, has been developed~\cite{CDK-cade09}. The
procedure implemented in KiSs has many concepts in common with 
 a preliminary version of this work~\cite{BCD-RTA09} but targets a different class of equational theories.
%

\medskip{}

The present work aims to fill this gap between theory and
implementation and propose an efficient tool for deciding deducibility
and static equivalence in a uniform way. It is initially inspired from
a procedure for solving more general constraint systems related to
active adversaries and equivalence of finite processes, presented
in~\cite{baudet-ccs2005}, with corrected extended version
in~\cite{THESE-baudet07} (in French). However, due to the complexity
of the constraint systems, this decision procedure was only studied for
subterm convergent theories, and remains too complex to enable an
efficient implementation.

%
%

\paragraph*{Our Contributions}
In this paper, we provide and study a generic
procedure for checking deducibility and static equivalence, taking as
input any convergent theory (that is, any equational theory described
by a finite convergent rewrite system). We prove the algorithm sound
and complete, up to explicit failure cases.
Note that (unfailing) termination cannot be guaranteed in general since the
problem of checking deducibility and static equivalence is
undecidable, even for convergent theories~\cite{AbadiCortierTCS06}.
To address this issue and turn our algorithm into a decision
procedure for a given convergent theory, we provide two criteria.
First, we define a syntactic criterion on the rewrite rules that
ensures that the algorithm never fails. This criterion is enjoyed in
particular by any convergent subterm theory, as well as the theories
of blind signature and homomorphic encryption. Termination often
follows from a simple analysis of the rules of the algorithm: as a
proof of concept, we obtain a new decidability result for deducibility
and static equivalence for the prefix theory, representing encryption
in CBC mode. 
Second, we provide a termination criterion based on deducibility:
provided that failure cannot occur, termination on a given input is
equivalent to the existence of some natural finite representation of deducible
terms.
As a consequence, we obtain that our algorithm can decide
deducibility and static equivalence for all the convergent theories
shown  to be decidable in~\cite{AbadiCortierTCS06}.

Our second contribution is an efficient implementation of this generic
procedure, called YAPA.
After describing the main features of the implementation, we report
several experiments suggesting that our tool computes static
equivalence faster and for more convergent theories than the general
tool ProVerif~\cite{BlanchetCSFW01,BlanchetAbadiFournetJLAP07}.
We also outline the main differences between YAPA and the recent tool Kiss.

%
%

\paragraph*{Outline}
We introduce our setting in Section~\ref{sec:prelim}, in particular the notion of term algebra and  equational theory, that are used to model cryptographic primitives.
Deducibility and static equivalence are defined in Section~\ref{sec:problem}.
We describe our procedure in Section~\ref{sec:rules} and prove its correctness and completeness in Section~\ref{sec:completeness}.
We provide criteria for preventing failure in Section~\ref{sec:fail} and for ensuring termination in Section~\ref{sec:termination}. The implementation of our procedure is discussed in Section~\ref{sec:experiment}.
Some concluding remarks and perspectives can be found in Section~\ref{sec:conclu}.
A number of technical proofs have been postponed to the appendix to ease the presentation.


\section{Preliminaries}
\label{sec:prelim}

\subsection{Term algebra}
\label{subsec:term}

We start by introducing the necessary notions to describe
cryptographic messages in a symbolical way.
For modeling cryptographic primitives, we assume given a set of
\emph{function symbols}~$\F$ together with an arity 
function~${\ar: \F \to \mathbb{N}}$. Symbols in~$\F$ of arity~$0$ 
are called \emph{constants}.
We consider a set of \emph{variables}~$\X$ and a set of additional
constants~$\W$ called $\emph{parameters}$.
The (usual, first-order) term algebra generated by~$\F$ over~$\W$ and~$\X$ 
is written $\F[\W \cup \X]$ with elements denoted by $T, U, T_1
\ldots$ More generally, we write $\F'[A]$ for the least set of terms
containing a set $A$ and stable by application of symbols
in $\F' \subseteq \F$.

We write~$\var(T)$ (resp.~$\param(T)$) for the set of variables
(resp. parameters) that occur in a term~$T$. These notations are
extended to tuples and sets of terms in the usual way.
The set of positions  of a term~$T$ is
written~$\pos(T) \subseteq \mathbb{N}^*$, and its set of subterms~$\st(T)$.  
The subterm of~$T$ at position~${p \in \pos(T)}$ is written~$T|_p$.  The
term obtained by replacing~$T|_p$ with a term~$U$ in~$T$ is
denoted~$T[U]_p$.

\medskip{}

A \emph{(finite, partial) substitution}~$\sigma$ is a mapping from a
finite subset of variables
, called its \emph{domain} and written~$\dom(\sigma)$, to terms. The
\emph{image} of a substitution is its image as a mapping
$\im(\sigma)=\{\sigma(x)~|~x\in \dom(\sigma)\}$.
Substitutions are extended to endomorphisms of $\F[\X \cup \W]$ as
usual. We use a postfix notation for their application.
A term~$T$ (resp. a substitution~$\sigma$) is \emph{ground} if
$\var(T)=\emptyset$ (resp. $\var(\im(\sigma))=\emptyset$).

\medskip{}

For our cryptographic purposes, it is useful to distinguish a subset~$\Fpub$ 
of~$\F$, made of \emph{public function symbols}, that is,
intuitively, the symbols made available to the attacker.
A \emph{recipe} (or \emph{second-order term}) $M$, $N$, $M_1$\ldots{}
is a term in ${\Fpub[\W \cup \X]}$, that is, a term containing no
\emph{private} (non-public) function symbols.
A \emph{plain term} (or \emph{first-order term}) $t$, $r$, $s$,
$t_1$\ldots{} is a term in $\F[\X]$, that is, containing no
parameters.
A \emph{(public, ground, non-necessarily linear) $n$-ary context}~$C$ 
is a recipe in $\Fpub[\w_1,\ldots, \w_n]$, where
we assume a fixed countable subset of parameters $\{\w_1,\ldots,
\w_n, \ldots\} \subseteq \W$. If~$C$ is a $n$-ary context, $C[T_1, \ldots,
  T_n]$ denotes the term obtained by replacing each
occurrence of $\w_i$ with~$T_i$ in~$C$.

\subsection{Rewriting}
\label{subsec:trs}

A \emph{rewrite system}~$\R$ is a finite set of \emph{rewrite rules}~${l \to
  r}$ where~${l,r \in \F[\X]}$ and such that~${\var(r) \subseteq \var(l)}$. A
term~$S$ \emph{rewrites} to~$T$ by~$\R$, denoted~${S \to_\R T}$, if there exist 
${l \to r}$ in~$\R$, ${p \in \pos(S)}$ and a substitution~$\sigma$ such that
${S|_p = l\sigma}$ and ${T = S[r\sigma]_p}$.
We write $\to^+_{\R}$ for the transitive closure of $\to_{\R}$,
$\to^*_{\R}$ for its reflexive and transitive closure,
and $=_{\R}$ for its reflexive, symmetric and transitive closure.

\medskip{}

\noindent A rewrite system~$\R$ is \emph{convergent} if it is:
\begin{itemize}
\item  \emph{terminating}, i.e. there is no infinite chains
$T_1 \to_\R T_2 \to_\R \ldots$; and
\item \emph{confluent}, i.e. for every terms $S$, $T$ such
that $S =_{\R} T$, there exists $U$ such that $S
\to^*_{\R} U$ and  $T \to^*_{\R} U$.
\end{itemize}

A term $T$ is \emph{$\R$-reduced} if there is no term~$S$ such that
$T \to_{\R} S$.
If $T \to^{*}_{\R} S$ and~$S$ is  $\R$-reduced
then $S$ is \emph{a $\R$-reduced form of $T$}. When this reduced form
is unique (in particular if $\R$ is convergent), we write $S = T\downR$ 
(or simply $T\mydownarrow$ when $\R$ is clear from the context).
%

\subsection{Equational theories}
\label{sec:eq}
We equip the signature~$\F$ with an equational theory represented by 
a set of equations~$\mathcal{E}$ of the form $s=t$ with $s,t \in \F[\X]$.
The equational theory~$\E$ generated by~$\mathcal{E}$ is the least set of
equations containing~$\mathcal{E}$ that is stable under the axioms of
congruence (reflexivity, symmetry, transitivity, application of
function symbols) and under application of substitutions. We
write~$=_\E$ for the corresponding relation on terms.
%
%
Equational theories 
have proved very useful for modeling algebraic properties of
cryptographic primitives (see e.g.~\cite{CDL05-survey} for a survey).

\medskip{}

We are particularly interested in theories~$\E$ that
can be represented by a convergent rewrite system~$\R$,
i.e. theories for which there exists a convergent rewrite
system~$\R$ such that the two relations $=_{\R}$ and $=_{\E}$ coincide. The
rewrite system~$\R$ ---and by extension the equational theory $\E$--- is
\emph{weakly subterm convergent} if, in addition, we have that for every rule
${l \to r \in \R}$, $r$ is either a subterm of~$l$ or a ground
$\R$-reduced term. This class encompasses the 
class of subterm convergent theories
  used in~\cite{AbadiCortierTCS06} (for every rule $l \to r \in \R$, $r$ is a
  subterm of~$l$ or a constant), the class of dwindling theories used
  in~\cite{ANR07}, and the class of public-collapsing theories
  introduced in~\cite{dj-ccs-2004}.

\begin{example}
\label{ex:theory}
Consider
the signature $\F_\enc =  \{\dec, \enc, \langle \_, \_\rangle, \proj_1,
\proj_2\}$.
The symbols~$\dec, \enc$ and $\langle \_, \_\rangle$ are functional symbols of arity 2 that
represent respectively the decryption, encryption and pairing functions, whereas
$\proj_1$ and~$\proj_2$ are functional symbols of arity 1 that
represent the projection function on the first and the
second component of a pair, respectively. 
The equational theory of pairing and symmetric (deterministic) encryption,
denoted by~$\E_\enc$, is
generated by the equations 
\[
{\cal E}_\enc = \{\dec(\enc(x,y),y) = x, \;\; \proj_1(\langle x,y \rangle ) =
x, \;\;\proj_2(\langle x,y \rangle ) = y\}.
\]

\medskip{}

Motivated by the modeling of the ECB mode of encryption, 
we may also consider an encryption symbol that is homomorphic with respect to pairing:
\[
{\cal E}_{\homo} = {\cal E}_\enc \cup \left\{
\begin{array}{rcl}
\enc(\langle x, y \rangle, z) & =& \langle \enc(x,z), \enc(y,z) \rangle\\
\dec(\langle x, y \rangle, z) & = & \langle \dec(x, z), \dec(y,z) \rangle
\end{array} 
\right\}.
\]
If we orient the equations from left to right, we obtain two rewrite
systems~$\R_\enc$ 
and~$\R_\homo$. Both rewrite systems are convergent, only~$\R_\enc$
is (weakly) subterm convergent.
Other examples of subterm convergent theories can be found in~\cite{AbadiCortierTCS06}.
\end{example}

From now on, we assume given a equational theory~$\E$ represented by 
a convergent rewrite system~$\R$.
A symbol~$f$ is \emph{free} if~$f$ does
not occur in~$\R$.
In order to model (an unbounded number of) random values possibly
generated by the attacker, we assume that~$\Fpub$ contains
infinitely many free public constants.
We will use free private constants to model
secrets, for instance the secret keys used to encrypt a message.
Private (resp. public) free constants are closely related to bound
(resp. free) \emph{names} in the framework of the applied pi
calculus~\cite{AbadiFournet2001}.
Our formalism also allows one to consider non-constant private
symbols.


\section{Deducibility and static equivalence}
\label{sec:problem}
In order to describe the cryptographic messages observed or inferred by an
attacker, we introduce the following notions of deduction facts and
frames.

\medskip{}

A \emph{deduction fact} is a pair, written $M \ded t$, made of a
recipe $M \in \Fpub[\W \cup \X]$ and a plain term $t \in \F[\X]$. Such a
deduction fact is \emph{ground} if $\var(M,t)=\emptyset$.
A \emph{frame}, denoted by letters $\varphi$, $\Phi$,
$\Phi_0$\ldots{}, is a finite set of ground deduction facts.
The \emph{image} of a frame is defined by $\im(\Phi) = \{t\mid M\ded
t\in\Phi\}$.  A frame $\Phi$ is \emph{one-to-one} if $M_1\ded t$,
$M_2\ded t\in\Phi$ implies $M_1=M_2$.

\medskip{}

A frame~$\varphi$ is \emph{initial} if it is of the form
$\varphi = \{w_1 \ded t_1, \ldots, w_\ell \ded t_\ell\}$ 
for some distinct parameters $w_1$, \ldots, $w_\ell \in \W$.
The parameters $w_i$ can be
seen as labels that refer to the messages observed by an attacker.
Initial frames are closely related to the notion of frames in the
applied pi-calculus~\cite{AbadiFournet2001}. 
The only difference is that, in initial frames,  values initially unknown to an attacker are modeled 
by private constants while they are modeled by \emph{restricted names} in the applied pi-calculus.
Name generation and binding are important features of the (general) applied
calculus but are unessential when considering 
finite processes, and in particular frames.
Given such an initial frame~$\varphi$, we denote by $\dom(\varphi)$ its
\emph{domain} $\dom(\varphi) = \{w_1, \ldots, w_\ell\}$. If
$\param(M)\subseteq \dom(\varphi)$, we write $M \varphi$ for the term
obtained by replacing each~$w_i$ by~$t_i$ in $M$.
 We note that if 
in addition $M$ is ground then $t= M \varphi$ is a ground plain term.

\subsection{Deducibility, recipes}
\label{subsec:deduction}

Classically (see e.g.~\cite{AbadiCortierTCS06}), a ground
term~$t$ is \emph{deducible} modulo~$\E$ from an initial
frame~$\varphi$, written $\varphi \vdash_\E t$,
if there exists $M \in
\Fpub[\dom(\varphi)]$ such that $M \varphi =_\E t$. This corresponds
to the intuition that the attacker may compute (infer)~$t$ from~$\varphi$.
For the purpose of our study, we generalize this notion to
arbitrary (i.e. non-necessarily initial)
frames, and even
sets of (non-necessarily ground) 
deduction facts $\phi$, using the
notations $\ded_\phi$ and $\ded^\E_\phi$  defined as follows.

\begin{definition}[Deducibility]
Let $\phi$ be finite set of deductions facts. 
We say that \emph{$M$ is a recipe of~$t$ in~$\phi$}, written
$M \ded_\phi t$, if there exist a 
(public, ground, non-necessarily linear) 
$n$-ary context $C$ and some deduction facts $M_1 \ded t_1$,
\ldots, $M_n \ded t_n$ in $\phi$ such that
$M = C[M_1, \ldots, M_n]$ and $t = C[t_1, \ldots, t_n]$. 
In that case, we say that $t$ is \emph{syntactically deducible} from~$\phi$, also written $\phi \vdash t$.

We say that \emph{$M$ is a recipe of $t$ in $\phi$ modulo $\E$}, written
$M \ded^\E_\phi t$, if there exists a term $t'$ such that
$M \ded_\phi t'$ and $t' =_\E t$.
In that case, we say that $t$ is \emph{deducible from~$\phi$ modulo $\E$}, written $\phi \vdash_\E t$.
\end{definition}

We note that $M \ded_\varphi t$ is equivalent to $M\varphi = t$ when
$\varphi$ is an initial frame and when~$t$ (or equivalently $M$) is ground.
We also note that in the case of a frame~$\varphi$, since our
contexts~$C$ are ground and public, $M \ded_{\varphi} t$ implies
$\var(M,t)=\emptyset$ and $\param(M) \subseteq \param(\varphi)$.

\begin{example}
\label{ex:deducibility}
Consider the equational theory $\E_\enc$ described in
Example~\ref{ex:theory}.
Let~$\varphi_0 = \{\w_1 \ded \enc(\cst c_0, \cst k),
\w_2 \ded \cst k\}$ where $\cst c_0$ is a public constant and $\cst k$ is a
private constant.
We have that $\varphi_0$ is a set of deduction facts. Since, these facts are
ground, $\varphi_0$ is actually a frame. Moreover, this frame is initial.
We have that $\langle \w_2, \w_2 \rangle \ded_{\varphi_0} \langle \cst k, \cst k
\rangle$, $\cst c_0 \ded_{\varphi_0} \cst c_0$,
and $\dec(\w_1,\w_2) \ded_{\varphi_0}^{\E_\enc} \cst c_0$.
\end{example}

\subsection{Static equivalence, visible equations}
\label{subsec:staticequivalence}

Deducibility does not always suffice for expressing the knowledge of an
attacker. In particular, it does not account for the partial
information that an attacker may obtain about secrets.
Sometimes, the attacker can deduce exactly the same set of terms from two different
frames but he could still be able to tell the difference between these
two frames.
This issue motivates the study of visible
equations and static equivalence (see~\cite{AbadiFournet2001}), defined as follows.

\begin{definition}[Static equivalence]
  Let~$\varphi$ be an initial frame. The set of \emph{visible equations
    of~$\varphi$ modulo~$\E$} is defined as
\[
\eq_\E(\varphi) = \{ M \qded N \;|\; M,N \in \Fpub[\dom(\varphi)],\; M \varphi
=_\E N \varphi \}
\]
where $\qded$ is a dedicated commutative symbol.
Two initial frames $\varphi_1$ and $\varphi_2$ with the same domain 
are \emph{statically
  equivalent} modulo $\E$, written~${\varphi_1 \approx_{\E}
  \varphi_2}$, if their sets of visible equations are equal, i.e.
 $\eq_{\E}(\varphi_1) = \eq_{\E}(\varphi_2)$.
\end{definition}

This definition is in line with static equivalence in the applied pi
calculus~\cite{AbadiFournet2001} where bounds names would be replaced by free private constants.

\begin{example}
\label{ex:static}
Consider again the equational theory~$\E_{\enc}$ given
in Example~\ref{ex:theory}.
Let $\varphi_0 = \{ \w_1 \ded \enc(\cst c_0, \cst k), \; \w_2 \ded \cst k \}$ and
$\varphi_1 = \{ \w_1 \ded \enc(\cst c_1,\cst k), \; \w_2 \ded \cst k \}$ where
$\cst c_0$, $\cst c_1$ are
public constants and~$\cst k$ is a private constant.
We have that:
\begin{itemize}
\item $(\enc(\cst c_0,\w_2) \qded \w_1) \in \eq_{\E_{\enc}}(\varphi_0)$, and
\item  $(\enc(\cst c_0,\w_2) \qded \w_1) \not\in \eq_{\E_{\enc}}(\varphi_1)$.
\end{itemize}
Hence, $\eq_{\E_\enc}(\varphi_0) \neq \eq_{\E_\enc}(\varphi_1)$ and the
two frames $\varphi_0$ and $\varphi_1$ are not statically equivalent.
However, it can be shown that $\{ \w_1 \ded \enc(\cst c_0, \cst k)\}
\approx_{\E_\enc} \{\w_1 \ded \enc(\cst c_1, \cst k)\}$.
\end{example}

\medskip{}

For the purpose of finitely describing the set of visible
equations~$\eq_\E(\varphi)$ of an initial frame, we introduce
\emph{quantified equations} of the form $\forall z_1,\ldots, z_q. M \qded N$
where $z_1$, \ldots, $z_q \in \X$, $q \geq 0$ and $\var(M,N) \subseteq
\{z_1, \ldots, z_q\}$.
In what follows, finite sets of quantified
equations are denoted $\Psi$, $\Psi_0$,\ldots{} We write $\Psi \models
M \qded N$
when the ground equation $M \qded N$ is a consequence of $\Psi$ in the
usual, first-order logics with equality axioms for the relation
$\qded$ (that is, reflexivity, symmetry, transitivity and compatibility with
symbols in $\Fpub$).
When no confusion arises, we may refer to quantified equations simply
as \emph{equations}. As usual, quantified equations are considered up to
renaming of bound variables.

\begin{example}
\label{ex:quantifiedeq}
Consider the equational theory~$\E_{\homo}$ given
in Example~\ref{ex:theory}. Let $\varphi = \{
\w_1 \ded \enc(\langle \cst c_0, \cst c_1\rangle, \cst k), \; 
\w_2 \ded \langle \enc(\cst c_0,\cst  k), \enc(\cst c_1, \cst k) \rangle, \; 
\w_3 \ded \cst k\}$ where $\cst c_0$ and $\cst c_1$ are public constants and
$\cst k$ is a private constant. In the set $\eq_{\E_{\homo}}(\varphi)$, we have,
among others, 
$\w_1 \qded \w_2$ and
 $\dec(\w_1, M) \qded \langle \dec(\proj_1(\w_1), M), \dec(\proj_2(\w_1),M)
\rangle$ for every term $M \in \F_\pub[\dom(\varphi)]$. 
Indeed, we have that:
\[
\begin{array}{rll}
\dec(\w_1, M)\varphi &=& \dec(\enc(\langle \cst c_0, \cst c_1\rangle, \cst k),
M\varphi) \\
&=_{\E_\homo}& \langle \dec(\enc(\cst c_0,\cst k), M\varphi),  \dec(\enc(\cst
c_1,\cst k), M\varphi) \rangle\\
& =_{\E_{\homo}}& \langle \dec(\proj_1(\w_1), M),
\dec(\proj_2(\w_1),M) \rangle \varphi
\end{array}
\]

\noindent This infinite set 
will be represented with the quantified equation:
\[
\forall z. \; \dec(\w_1, z) \qded \langle \dec(\proj_1(\w_1), z), \dec(\proj_2(\w_1),z)
\rangle.
\]
\end{example}


\section{Main procedure}
\label{sec:rules}

In this section, we describe our algorithms for checking deducibility
and static equivalence on convergent rewrite systems.
After some additional notations, we present the core of the procedure,
which consists of a set of transformation rules used to saturate a
frame and a finite set of quantified equations.
The result of the saturation can be
seen as a finite description of the deducible terms and visible
equations of the initial frame under consideration.
We then show how to use this procedure to decide deducibility and
static equivalence, provided that saturation succeeds.
(Recall that
static equivalence and deduction are undecidable for convergent
theories~\cite{AbadiCortierTCS06}.)

Soundness and completeness of the saturation procedure are detailed in
Section~\ref{sec:completeness}. We provide sufficient conditions on
the rewrite systems to ensure success of saturation and termination in
Section~\ref{sec:fail} and Section~\ref{sec:termination}.

\subsection{Decompositions of rewrite rules}

Before stating the procedure, we introduce the following
notion of \emph{decomposition}
to account for the possible
superpositions of an attacker's
context (that is, a recipe in our setting)
with a left-hand side of rewrite rule.

\begin{definition}[Decomposition] \label{def:decomp}
  Let $n,p,q$ be non-negative
  integers. 
  A \emph{$(n,p,q)$-decomposition} of a term~$l$ (and by an extension
  of any rewrite rule $l \to r$) is a (public, ground,
  non-necessarily linear) context $D \in \Fpub[\W]$ such that
  $\param(D) = \{\cst w_1, \ldots, \cst w_{n+p+q}\}$ and $l = D[l_1,
  \ldots, l_n,y_1, \ldots, y_p, z_1, \ldots, z_q]$ where
\begin{itemize}
\item $l_1, \ldots, l_n$ are mutually-distinct non-variable terms,
\item $y_1, \ldots, y_p$ and $z_1, \ldots, z_q$ are mutually-distinct
  variables, and
\item $y_1, \ldots, y_p \in \var(l_1, \ldots, l_n)$ whereas $z_1, \ldots,
  z_q \not\in \var(l_1, \ldots, l_n)$.
\end{itemize}
A decomposition $D$ is \emph{proper} if it is not a parameter (i.e. $D
\neq \w_1$).
\end{definition}
 In order to avoid unnecessary computations, $(n,p,q)$-decompositions are considered up to
 permutations of parameters in the sets
 $\{\cst w_1, \ldots, \cst w_{n}\}$, $\{\cst w_{n+1}, \ldots, \cst
 w_{n+p}\}$ and $\{\cst w_{n+p+1}, \ldots, \cst w_{n+p+q}\}$ respectively.

\begin{example}
\label{ex:decompo}
  Consider the rewrite rule $\dec(\enc(x,y),y) \to x$. This rule
  admits two proper decompositions up to permutation of parameters:
\begin{itemize}
\item $D_1 = \dec(\enc(\cst w_1, \cst w_2), \cst w_2)$ where $n=0$,
  $p=0$, $q=2$, $z_1=x$, $z_2=y$;
\item $D_2 = \dec(\cst w_1, \cst w_2)$ where $n=1$,
  $p=1$, $q=0$, $l_1=\enc(x,y)$ and $y_1=y$.
\end{itemize} 

Now, consider the rewrite rule $\dec(\langle x, y \rangle, z) \to \langle
\dec(x,z), \dec(y,z) \rangle$. This rule also admits two proper
decompositions:
\begin{itemize}
\item $D_3 = \dec(\langle \w_1,\w_2 \rangle, \w_3)$ where $n = 0$, $p=0$, $q=3$,
  $z_1 = x$, $z_2 = y$, $z_3=z$;
\item $D_4 = \dec(\w_1,\w_2)$ where $n=1$, $p=0$, $q=1$, $l_1=\langle x,
  y\rangle$, $z_1 = z$.
\end{itemize}
\end{example}

\subsection{Transformation rules}\label{subsec:rules}

To check deducibility and static equivalence, 
we proceed by saturating an
initial frame, adding some deduction facts and equations satisfied by the
frame.
We consider \emph{states} that are either the failure state $\bot$ or
a couple $(\Phi,\Psi)$ formed by
a one-to-one frame $\Phi$ in
$\R$-reduced form and
a finite set of quantified equations~$\Psi$.

\medskip{}

Given an initial frame $\varphi$, our procedure starts from an
initial state associated to $\varphi$, denoted by
$\Init(\varphi)$, obtained by reducing $\varphi$ and replacing
duplicated terms by equations. Formally, $\Init(\varphi)$
is the result of a procedure recursively defined as follows:
$\Init(\emptyset)  =  (\emptyset,\emptyset)$, and assuming
$\Init(\varphi) = (\Phi,\Psi)$, we have
\[
\Init(\varphi\uplus\{w\ded t\}) =
\begin{cases}
(\Phi,\Psi\cup\{w \qded w'\})& \text{ if there exists some } w' \ded t\downR \in \Phi\\
(\Phi\cup\{w\ded t\downR\},\Psi) & \text{ otherwise.}
\end{cases}
\]

\begin{example}
Consider the frames $\varphi_0$, $\varphi_1$ and $\varphi$ introduced
respectively in Example~\ref{ex:static} and Example~\ref{ex:quantifiedeq}.
We have that $\Init(\varphi_0) = (\varphi_0,\emptyset)$, $\Init(\varphi_1) =
(\varphi_1,\emptyset)$ and $\Init(\varphi) = (\{\w_1 \ded \langle \enc(\cst c_0,
\cst k),\enc(\cst c_1,\cst k)\rangle,\w_3 \ded \cst k\}, \; \{\w_1 \qded \w_2\})$.
\end{example}

\begin{figure}[t]
\begin{flushleft}

\textbf{A. Inferring deduction facts and equations by context reduction}

\smallskip{}

Assume that
\[\begin{disparray}{l}
l=D[l_1,\ldots,l_n,y_1,\ldots,y_p,z_1,\ldots,z_q] \text{ is a proper decomposition of }(l \to r) \in \R\\
M_1 \ded t_1, \ldots, M_{n+p} \ded t_{n+p} \in \Phi \\
(l_1,\ldots,l_n,y_1,\ldots,y_p)\,\sigma = (t_1, \ldots, t_{n+p})\\
\end{disparray}
\]
\begin{minipage}{\textwidth}
\begin{enumerate}
\item If there exists $M = \Ctx(\Phi \cup \{z_1 \ded z_1, \ldots,
  z_q \ded z_q\} \vdash^?_\R r\sigma)$, then
\begin{equation}
(\Phi,\Psi) \Longrightarrow (\Phi,\Psi \cup \{\forall z_1, \ldots,
  z_q.D[M_1, \ldots,M_{n+p}, z_1 \ldots, z_q] \qded M\})\hfill
  \tag{\textbf{A.1}}
\end{equation}

\item Else, if $(r\sigma)\downR$ is ground, then
\begin{equation}
\begin{disparray}{r@{}l}
  (\Phi, \Psi) \Longrightarrow (&\Phi \cup \{M_0 \ded (r\sigma)\downR\},\\
  &\Psi \cup \{\forall z_1, \ldots, z_q. D[M_1, \ldots, M_{n+p}, z_1
  \ldots, z_q] \qded M_0\})
 \end{disparray}
  \tag{\textbf{A.2}}
\end{equation}

where $M_0=D[M_1, \ldots, M_{n+p}, \cst a, \ldots, \cst a]$ for some fixed
public constant~$\cst a$.

\smallskip{}

\item Otherwise, 
$(\Phi, \Psi) \Longrightarrow \bot$
\hfill (\textbf{A.3})
\end{enumerate}
\end{minipage}

\bigskip

\textbf{B. Inferring deduction facts and equations syntactically}

\smallskip{}

Assume that $M_0 \ded t_0, \ldots, M_{n} \ded t_{n} \in \Phi \qquad
t = f(t_1, \ldots, t_n) \in \st(t_0) \qquad f\in\Fpub$

\medskip{}

\begin{minipage}{\textwidth}
\begin{enumerate}
\item If there exists $M$ such that $(M \ded t) \in \Phi$, 
\begin{equation}
(\Phi, \Psi) \Longrightarrow (\Phi, \Psi \cup \{f(M_1, \ldots, M_n)
  \bowtie M\})
  \tag{\textbf{B.1}}
\end{equation}

\item Otherwise, 
$(\Phi, \Psi) \Longrightarrow (\Phi \cup \{f(M_1,
  \ldots, M_n) \ded t\}, \Psi)$ \hfill (\textbf{B.2})
\end{enumerate}
\end{minipage}

\end{flushleft}

\caption{Transformation rules}
\label{fig:rules}
\end{figure}

The main part of our procedure consists in saturating a state $(\Phi,
\Psi)$ by means of the transformation rules described in
Figure~\ref{fig:rules}.  The \textbf{A} rules are designed for
applying a rewrite step on top of existing deduction facts. If the
resulting term $(r\sigma)\downR$ is already deducible (in some specific sense that we make precise below) then a corresponding equation is added
(rule~\textbf{A.1}); or else if it is ground, the corresponding
deduction fact is added to the state (rule~\textbf{A.2}); otherwise,
the procedure may fail (rule~\textbf{A.3}). The \textbf{B} rules are
meant to add syntactically deducible subterms (rule~\textbf{B.2}) or
related equations (rule~\textbf{B.1}).

\medskip{}

For technical reasons, rule~\textbf{A.1} is parametrized by a
function~$\Ctx$ that outputs either a recipe $M$ or the special symbol $\bot$.
This function has to satisfy the following properties:
\begin{enumerate}[(a)]
\item\label{item:ctx_sufficient} if $\phi \vdash t\downR$, then
  $\Ctx(\phi \vdash^?_\R t) \neq \bot$;
\item\label{item:ctx_correct} if $M=\Ctx(\phi \vdash^?_\R t)$ then there exists~$s$ such that $M \ded_\phi s$ and $t \rR^* s$.
(This justifies the notation $\phi \vdash^?_\R t$ used to denote a specific deducibility problem.)
\end{enumerate}
Property~(\ref{item:ctx_sufficient}) ensures that the rules
transform a state into a state (and more precisely that the resulting frame in \textbf{(A.2)}
is still one-to-one). 
Property~(\ref{item:ctx_correct}) guarantees the soundness of the new equation in \textbf{(A.1)}. Requiring $t \rR^* s$ instead $t =_\E s$ is necessary for the proof of completeness.
In what follows, a \emph{function $\Ctx$} is any function satisfying the two properties (\ref{item:ctx_sufficient}) and (\ref{item:ctx_correct}).

\medskip{}

A simple choice for $\Ctx(\phi \vdash^?_\R t)$
is to solve the deducibility problem $\phi \vdash^? t\downR$ in the
empty equational theory, and then return a corresponding recipe $M$,
if any. (This problem is easily solved by induction on~$t\downR$.)
We will see in Section~\ref{sec:fail} that this choice is 
sufficient to avoid failure for a large class of equational theories, namely
the class of layered convergent theories. However the proof of this fact relies on an intermediate result that uses a different choice of $\Ctx$.
%

\begin{example}
\label{ex:saturationEnc}
  Consider the frame $\varphi_0$ previously described in
  Example~\ref{ex:static}. We can apply rule~\textbf{A.1} as
  follows. Consider the rewrite rule $\dec(\enc(x,y), y) \to x$, the
  decomposition $D_2$ given in Example~\ref{ex:decompo} and $t_1 =
  \enc(\cst c_0, \cst k)$. 
 We have that 
$\Init(\varphi_0) = (\varphi_0, \emptyset)
  \Longrightarrow (\varphi_0, \{\dec(\w_1,\w_2) \bowtie \cst c_0\}).$ 
In
  other words, since we know the key~$\cst k$ through~$\w_2$, we can check
  that the decryption of~$\w_1$ by~$\w_2$ leads to the public
  constant~$\cst c_0$. 
Next we apply rule \textbf{B.1} as follows:
\[
 (\varphi_0, \{\dec(\w_1,\w_2) \bowtie
  \cst c_0\})\Longrightarrow(\varphi_0, \{\dec(\w_1,\w_2) \bowtie \cst c_0,
  \enc(\cst c_0,\w_2) \bowtie \w_1\}).
\]
No more rules can then modify the
  state.
Similarly for $\varphi_1$, we obtain that:
\[
\begin{array}{rcl}
\Init(\varphi_1) &=& (\varphi_1, \emptyset)\\
&  \Longrightarrow &(\varphi_1, \{\dec(\w_1,\w_2) \bowtie \cst c_1\}) \\
&\Longrightarrow & (\varphi_1, \{\dec(\w_1,\w_2) \bowtie \cst c_1,
  \enc(\cst c_1,\w_2) \bowtie \w_1\}).
\end{array}
\]
\end{example}

\begin{example}
\label{ex:saturationHom}
Consider the frame $\varphi$ described in
Example~\ref{ex:quantifiedeq}. We can apply rule~\textbf{A.1} as follows.
Consider the rewrite rule $\dec(\langle x, y \rangle, z) \to \langle
\dec(x,z), \dec(y,z) \rangle$, the decomposition $D_4$ given in
Example~\ref{ex:decompo} and $t_1 = \langle \enc(\cst c_0, \cst k), \enc(\cst
c_1, \cst k) \rangle$. We have that $r\sigma =  \langle \dec(\enc(\cst
c_0, \cst k), z_1), \dec(\enc(\cst c_1, \cst k), z_1) \rangle$, and thus
$\Init(\varphi)  \Longrightarrow \bot$.
We have that $r\sigma\downR = r\sigma$. The condition required in case~{(1)} is not fulfilled
and the condition stated in case (2) is false.

However, note that another strategy of rules application allows us to consider
this decomposition. For this, it is sufficient to apply first \textbf{B} rules to
 add the deduction facts $\proj_1(\w_1) \ded \enc(\cst c_0, \cst k)$ and
$\proj_2(\w_1) \ded \enc(\cst c_1, \cst k)$. Now, we have that $r\sigma\downR$ is
syntactically deducible: the condition required in case~{(1)} is full-filled
and we finally add the equation:
$\forall z_1.  \dec(\w_1,z_1) \bowtie \langle \dec(\proj_1(\w_1), z_1),
\dec(\proj_2(\w_1), z_1) \rangle$.
\end{example}




We write $\Longrightarrow^*$ for the transitive and reflexive closure of
$\Longrightarrow$.
The definitions of $\Ctx$  and of the transformation rules
ensure that whenever $S\Longrightarrow^* S'$ and $S$ is a state, then $S'$
is also a state, with the same parameters unless~${S'=\bot}$.

\subsection{Main theorem} 
We now state the soundness and the completeness of the transformation
rules provided that a \emph{saturated state} is reached, that is, a
state $S\neq \bot$ such that $S \Longrightarrow S'$ implies $S' = S$.
The technical lemmas involved in the
proof of this theorem
are detailed in Section~\ref{sec:completeness}.

\medskip{}

\newsavebox{\theocompleteness}
\sbox{\theocompleteness}{\vbox{%
\begin{theorem}[soundness and completeness]\label{theo:soundcomp}
Let~$\E$ be an equational theory generated by a convergent rewrite
system~$\R$.  Let $\varphi$ be an initial frame and $(\Phi,\Psi)$ be a
saturated state such that $\Init(\varphi) \Longrightarrow^*
(\Phi,\Psi)$. 
\begin{enumerate}
\item For all $M \in \Fpub[\param(\varphi)]$ and $t \in \F[\emptyset]$, we
  have that:
\[
M \varphi =_\E t \quad\Leftrightarrow\quad\exists N \text{ such that }
      \Psi \models M \qded N \text{ and } N \ded_{\Phi} t\downR.
\]

\item  For all $M$, $N \in \Fpub[\param(\varphi) \cup \X]$, we have that:
\[
    M \varphi =_\E N \varphi \,\Leftrightarrow\,
    \Psi \models M \qded N.
\]

\end{enumerate}
\end{theorem}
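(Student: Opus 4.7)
The plan is to handle the two directions separately: the $\Leftarrow$ directions (soundness) by an invariant argument along the saturation, and the $\Rightarrow$ directions (completeness) by exploiting the fact that no transformation rule can further modify $(\Phi,\Psi)$. First I would identify a soundness invariant for reachable states from $\Init(\varphi)$: (a) every fact $(M \ded t) \in \Phi$ satisfies $M\varphi =_\E t$ with $t$ being $\R$-reduced; (b) every ground consequence $M \qded N$ of $\Psi$ in equational logic satisfies $M\varphi =_\E N\varphi$. These invariants clearly hold for $\Init(\varphi)$; for the preservation, rule \textbf{A.1}/\textbf{A.2} rely on $l\sigma \to_\R r\sigma$ together with property~(\ref{item:ctx_correct}) of $\Ctx$ (namely $M \ded_\Phi s$ with $r\sigma \to_\R^* s$), while rules \textbf{B} follow from pure congruence reasoning. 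From (a)--(b), the $\Leftarrow$ direction of part~1 is immediate (compose the $\Phi$-witness with the equational consequence), and the $\Leftarrow$ direction of part~2 is a direct instance of (b).

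For the $\Rightarrow$ direction of part~1, I would prove the strengthened statement: for every ground recipe $M \in \Fpub[\param(\varphi)]$, there exists $N$ with $\Psi \models M \qded N$ and $N \ded_\Phi (M\varphi)\mydownarrow_\R$, by well-founded induction on the pair $(\#\text{rewrite steps from } M\varphi \text{ to } (M\varphi)\mydownarrow_\R,\ |M|)$. In the base case, $M\varphi$ is $\R$-reduced and I induct on $M$: if $M = w_j$, then either $(w_j \ded t_j{\downarrow}) \in \Phi$ directly or $\Init$ has added $w_j \qded w_{j'}$ with $(w_{j'} \ded t_j{\downarrow}) \in \Phi$; if $M = f(M_1,\dots,M_k)$ with $f \in \Fpub$, each $M_i\varphi$ is also $\R$-reduced and I combine recipes from the IH using congruence.

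The main obstacle is the inductive step: pick the leftmost-outermost rewrite step in $M\varphi$, at position $p$, by a rule $l \to r$ with substitution $\sigma$. Because every $t \in \im(\Phi)$ is $\R$-reduced (invariant (a)), the redex $l\sigma$ cannot lie entirely inside one $t_j$; thus there exists a proper decomposition $l = D[l_1,\dots,l_n,y_1,\dots,y_p,z_1,\dots,z_q]$ such that the $l_i\sigma$ are precisely the maximal subterms of $l\sigma$ supplied by $\Phi$, and the $z_i\sigma$ come from strict subrecipes of $M$. Saturation of rule \textbf{A} forbids \textbf{A.3} (else $(\Phi,\Psi) = \bot$), so either \textbf{A.1} fired, supplying a recipe $M^\star$ with $M^\star \ded_\Phi s$ and $r\sigma \to_\R^* s$, or \textbf{A.2} fired, extending $\Phi$ with $M_0 \ded (r\sigma)\mydownarrow_\R$; in either case a corresponding equation relating $D[\dots]$ with $M^\star$ (or $M_0$) sits in $\Psi$. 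Substituting this locally inside $M$ at position $p$ produces a recipe $M'$ with $\Psi \models M \qded M'$ and $M'\varphi \to_\R^* (M\varphi)\mydownarrow_\R$ in strictly fewer steps, so the IH on $M'$ concludes.

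For part~2 $\Rightarrow$, apply the above to both sides: since $M\varphi =_\E N\varphi$, both reduce to the common normal form $t = (M\varphi)\mydownarrow_\R = (N\varphi)\mydownarrow_\R$, giving $M',N'$ with $M' \ded_\Phi t$, $N' \ded_\Phi t$, $\Psi \models M \qded M'$, $\Psi \models N \qded N'$. It then suffices to establish a separate lemma stating that if $M' \ded_\Phi t$ and $N' \ded_\Phi t$ in a \textbf{B}-saturated one-to-one frame, then $\Psi \models M' \qded N'$. I would prove this by structural induction on the witness contexts of syntactic deducibility, using \textbf{B.1} at each step to identify alternative recipes of the same term, and the one-to-one property of $\Phi$ to pin down the base case of parameter recipes; transitivity then yields $\Psi \models M \qded N$. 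The delicate point throughout will be book-keeping the variables $y_i$ versus $z_i$ in the decomposition and ensuring that the equation added by \textbf{A.1}/\textbf{A.2} yields a valid ground instance after substitution at position $p$.
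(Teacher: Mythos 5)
Your overall architecture matches the paper's: soundness via an invariant maintained along the saturation (the paper's Lemma~\ref{lem:sound}), completeness of part~1 by locating the redex at a position of the recipe, extracting a proper decomposition, and exploiting the fact that \textbf{A.3} cannot fire on a saturated state (the paper's Lemmas~\ref{lem:decomp} and~\ref{lem:comp}), and part~2 reduced to part~1 plus a lemma identifying any two syntactic recipes of the same term in a \textbf{B}-saturated one-to-one frame (the paper's Lemmas~\ref{lem:compbasebase} and~\ref{lem:compbase}). All of these ingredients are correct and are the ones actually used.

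There is, however, a genuine gap in your well-founded measure for the completeness induction. You induct on the number of rewrite steps from $M\varphi$ to $(M\varphi)\downR$, and claim the locally rewritten recipe $M'$ satisfies $M'\varphi \rR^* (M\varphi)\downR$ in strictly fewer steps. This does not follow. The recipe $M^\star$ (resp.\ $M_0$) provided by \textbf{A.1} (resp.\ \textbf{A.2}) satisfies $M^\star \ded_\Phi s$ with $r\sigma \rR^* s$, which is a statement about \emph{syntactic deduction from the saturated frame}; evaluating under $\varphi$ only yields $M^\star\varphi =_\E s$, not $M^\star\varphi = s$. The frame $\Phi$ typically contains facts $N \ded u$ whose recipes evaluate under $\varphi$ to terms lying arbitrarily far \emph{above} $u$ in the rewrite order (e.g.\ $\proj_1(\w_1) \ded \enc(\cst c_0,\cst k)$ in Example~\ref{ex:saturationHom}, where $\proj_1(\w_1)\varphi$ needs two reduction steps to reach $\enc(\cst c_0,\cst k)$). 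Hence $M'\varphi$ is in general not a reduct of $M\varphi$; it merely shares its normal form, and its reduction length can grow, so neither component of your lexicographic pair is controlled. The correct measure is the one the paper uses: track the term \emph{syntactically} deduced by the current recipe, i.e.\ maintain $N \ded_\Phi u$ with $u \rR^* t\downR$, and observe that one step of your argument replaces $u$ by some $u''$ with $u \to^+_{\R} u''$ (this is exactly the conclusion $M'\ded_{\Phi'} t''$ with $t'\rR^* t''$ of Lemma~\ref{lem:comp}); well-foundedness then comes from termination of $\R$. With the induction re-based on that term, the rest of your argument, including the reduction of part~2 to part~1 plus the syntactic-equations lemma, goes through as in the paper.
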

}}

\noindent\usebox{\theocompleteness}

We note that this theorem applies to any saturated state
reachable from the initial frame. Moreover, 
while the saturation procedure is sound and complete, it may not
terminate, or it may \emph{fail} if rule~\textbf{A.3} becomes the only applicable
rule at some point of computation.
In Section~\ref{sec:fail} and Section~\ref{sec:termination}, we explore several sufficient conditions
to prevent failure and ensure termination.

\subsection{Application to deduction and static equivalence}

Decision procedures for deduction and static equivalence modulo~$\E$
follow from Theorem~\ref{theo:soundcomp}.

\smallskip{}

\paragraph*{Algorithm for deduction}
Let~$\varphi$ be an initial frame and~$t$ be a ground term.
The procedure for checking $\varphi \vdash_\E t$ runs as follows:
\begin{enumerate}
\item Apply the transformation rules 
to obtain (if any) a
  saturated state $(\Phi, \Psi)$ such that $\Init(\varphi)
  \Longrightarrow^* (\Phi, \Psi)$;
\item Return \emph{yes} if there exists $N$ such that $N \ded_\Phi 
  t\downR$ (that is, the $\R$-reduced form of~$t$ is syntactically deducible from~$\Phi$); otherwise return~\emph{no}.
\end{enumerate}

\begin{proof}
If the algorithm returns \emph{yes}, this means that
 there exists $N$
  such that $N \ded_{\Phi}  t\downR$. Thanks to Theorem~\ref{theo:soundcomp} (1),
  we have that $N\varphi =_\E t$, i.e. $N \ded_{\varphi}^\E t$.

Conversely, if $t$ is deducible from $\varphi$, then  there exists $M$ such that
  $M\varphi =_\E t$. By Theorem~\ref{theo:soundcomp} (1), there exists $N$ such that $N \ded_{\Phi}
  t\downR$. The algorithm returns \emph{yes}. \qed
\end{proof}

\begin{example}
Consider the frame $\varphi_0 =  \{\w_1 \ded \enc(\cst c_0, \cst k),
\w_2 \ded \cst k\}$ introduced in Example~\ref{ex:deducibility} and let $t_1 =
\langle \cst k, \cst k \rangle$ and $t_2 = \cst c_0$.
Let $(\Phi_0, \Psi_0)$ be the saturated state described in
Example~\ref{ex:saturationEnc}. We have that:
\[
(\Phi_0, \Psi_0) = (\varphi_0, \{\dec(\w_1,\w_2) \bowtie \cst c_0,
  \enc(\cst c_0,\w_2) \bowtie \w_1\}).
\]
Then, it is easy to see that our algorithm for deduction will return
\emph{yes} for both terms~$t_1$
and~$t_2$. Indeed, those terms are syntactically deducible from~$\varphi_0$.
\end{example}

\paragraph*{Algorithm for static equivalence}

Let $\varphi_1$ and $\varphi_2$ be two initial frames. The
procedure for checking $\varphi_1 \approx_\E \varphi_2$ runs as follows:

\begin{enumerate}
\item Apply the transformation rules to obtain (if possible)
  two saturated states $(\Phi_1,\Psi_1)$ and $(\Phi_2, \Psi_2)$ such
  that $\Init(\varphi_i) \Longrightarrow^* (\Phi_i,\Psi_i)$,
  ${i=1,2}$;
\item For $\{i,j\}=\{1,2\}$, for every equation $(\forall z_1,\ldots,z_\ell. M
  \qded N)$ in $\Psi_i$, check that $M\varphi_j =_\E N\varphi_j$~---~that is, in
  other words, $(M\varphi_j) \downR = (N\varphi_j) \downR$;
\item If so return \emph{yes}; otherwise return \emph{no}.
\end{enumerate}

\begin{proof}
If the algorithm returns \emph{yes}, this means that $M\varphi_2 =_{\E}
N\varphi_2$ for every
  equation $(\forall z_1, \ldots, z_\ell. M \qded N)$ in $\Psi_1$.
  Let $M \qded N \in \eq_\E(\varphi_1)$. By definition of
  $\eq_\E(\varphi_1)$, we have that $M\varphi_1 =_\E N\varphi_1$. Thanks
  to Theorem~\ref{theo:soundcomp} (2), we have that $\Psi_1 \models M
  \qded N$. As all the equations in $\Psi_1$ are satisfied by
  $\varphi_2$ modulo $\E$, we deduce that $M\varphi_2 =_\E N\varphi_2$,
  i.e. $M \qded N \in \eq(\varphi_2)$.
  The other inclusion, $\eq_\E(\varphi_2) \subseteq
  \eq_{\E}(\varphi_1)$, is proved in the same way.

\smallskip{}

Conversely, assume now that $\varphi_1 \approx_\E \varphi_2$,
  i.e. $\eq_\E(\varphi_1) = \eq_\E(\varphi_2)$. Consider a quantified equation $\forall
  z_1,\ldots,z_\ell. M \qded N$ in $\Psi_1$ and let us show that
  $M\varphi_2 =_\E N\varphi_2$. (The other case is done in a similar
  way, and we will conclude that the algorithm returns \emph{yes}.)
  Let $\cst c_1, \ldots, \cst c_\ell$ be free public constants not occurring in
  $M$ and $N$, and let $(M',N')=(M,N)\{z_1 \mapsto \cst c_1,\ldots,z_\ell
  \mapsto \cst c_\ell \}$. Since $\Psi_1 \models M' \qded N'$, by
  Theorem~\ref{theo:soundcomp} (2), we have that $M' \varphi_1 =_\E N'
  \varphi_1$. Besides, $M'$ and $N'$ are ground and $\param(M',N')
  \subseteq \param(\Psi_1) \subseteq \param(\varphi_1)$. 
Thus, $(M' \qded N') \in
  \eq_\E(\varphi_1) \subseteq \eq_\E(\varphi_2)$ and $M' \varphi_{2}
  =_{\E} N'\varphi_{2}$. As the constants $\cst c_1, \ldots, \cst c_\ell$ are
  free in $\E$ and do not occur in $M$ and $N$, by replacement, we
  obtain that $M \varphi_{2} =_{\E} N\varphi_{2}$. \qed
\end{proof}

\begin{example}
Consider the frames $\varphi_i =  \{\w_1 \ded \enc(\cst c_i, \cst k),
\w_2 \ded \cst k\}$ introduced in Example~\ref{ex:static}.
Let $(\Phi_0, \Psi_0)$ and $(\Phi_1, \Psi_1)$ be the two saturated states  
described in Example~\ref{ex:saturationEnc}. We have
that $\dec(\w_1,\w_2) \bowtie \cst c_0 \,\in \, \Psi_0$, and
\[
(\dec(\w_1,\w_2)\varphi_1 =_{\E_\enc} \cst c_1 \not = _{\E_\enc} \cst c_0 =
\cst c_0\varphi_1.
\] Hence, our algorithm returns \emph{no}. The two frames
$\varphi_0$ and $\varphi_1$ are not statically equivalent.
\end{example}


\section{Soundness and completeness of the
saturation}\label{sec:completeness}


%
%
%

The goal of this section is to prove Theorem~\ref{theo:soundcomp}. Section~\ref{subsec:soundness}
is devoted to establish soundness of our saturation procedure, i.e. the $\Leftarrow$
direction of Theorem~\ref{theo:soundcomp}. Showing the other direction, i.e.
completeness, is more
involved and is detailed in Section~\ref{subsec:completeness}.

\subsection{Soundness}
\label{subsec:soundness}
First, the transformation rules are sound in the sense that, along the
saturation process, we add only deducible terms and valid equations
with respect to the initial frame.

\begin{lemma}[soundness]
\label{lem:sound}
Let~$\varphi$ be an initial frame and~$(\Phi, \Psi)$ be a state such
that $\Init(\varphi) \Longrightarrow^* (\Phi,
\Psi)$. Then, we have that
\begin{enumerate}
\item \label{soundcond1} $M \ded_{\Phi} t \; \Rightarrow \; M
  \varphi =_\E t$ \;\; for all $M\in\Fpub[\dom(\varphi)]$ and
  $t\in\F[\emptyset]$;
\item \label{soundcond2} $\Psi \models M \bowtie N \; \Rightarrow \; 
M\varphi =_\E N\varphi$ \;\; for all $M,N\in\Fpub[\dom(\varphi) \cup \X]$.
\end{enumerate}
\end{lemma}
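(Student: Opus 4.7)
The plan is to proceed by induction on the length of the derivation $\Init(\varphi) \Longrightarrow^* (\Phi,\Psi)$, after reducing both statements to the following atomic forms:

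(i) for every $(M \ded t) \in \Phi$, $M\varphi =_\E t$;

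(ii) for every $(\forall z_1,\dots,z_q.\, M \qded N) \in \Psi$ and every ground plain substitution $\tau$ of $\{z_1,\dots,z_q\}$, $M\tau\varphi =_\E N\tau\varphi$.

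Statement (\ref{soundcond1}) will then follow from (i) by a straightforward induction on the context~$C$ witnessing $M \ded_\Phi t$, using that $\varphi$ commutes with every public symbol and that $=_\E$ is a congruence. Statement (\ref{soundcond2}) will follow from (ii) by structural induction on the derivation of $\Psi \models M \qded N$ in first-order logic with equality for $\qded$, with universally quantified equations instantiated at each use.

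For the base case $\Init(\varphi)$, every fact $(w \ded t{\downR}) \in \Phi$ originates from some $(w \ded t) \in \varphi$, so $w\varphi = t =_\E t{\downR}$; every equation $(w \qded w') \in \Psi$ is added precisely when $w$ and $w'$ point in $\varphi$ to terms sharing the reduced form $t{\downR}$, so $w\varphi =_\E w'\varphi$. For the inductive step I would case-split on the applied rule. Rules~\textbf{B.1} and~\textbf{B.2} are immediate: the new fact or equation is obtained by applying a public symbol $f$ to recipes, and the conclusion follows by congruence from the induction hypothesis (i). For rule~\textbf{A.2}, using (i) on each $(M_i \ded t_i)$, I get $M_0 \varphi =_\E D[t_1,\dots,t_{n+p},\cst a,\dots,\cst a] = l\sigma'$ with $\sigma' = \sigma \cup \{z_i \mapsto \cst a\}$; then $l\sigma' \to_\R r\sigma'$, and since rewriting commutes with substitution and $(r\sigma){\downR}$ is ground, $r\sigma' = (r\sigma)\{z_i \mapsto \cst a\} \to^*_\R (r\sigma){\downR}\{z_i\mapsto\cst a\} = (r\sigma){\downR}$, yielding $M_0\varphi =_\E (r\sigma){\downR}$; the accompanying equation is checked analogously. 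Rule~\textbf{A.3} is excluded since $(\Phi,\Psi)\neq\bot$ by hypothesis.

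The main obstacle is rule~\textbf{A.1}, where the quantifier $\forall z_1,\dots,z_q$ must be handled carefully. Here the added equation is $\forall \bar z.\, D[M_1,\dots,M_{n+p},z_1,\dots,z_q] \qded M$ with $M = \Ctx(\Phi \cup \{z_i \ded z_i\} \vdash^?_\R r\sigma)$. By property~(b) of $\Ctx$, there exists $s'$ with $r\sigma \to^*_\R s'$ and $M \ded_{\Phi \cup \{z_i \ded z_i\}} s'$. Decomposing this deducibility witness, $M = C[M_1',\dots,M_k',z_1,\dots,z_q]$ and $s' = C[t_1',\dots,t_k',z_1,\dots,z_q]$ for some ground public context~$C$ and facts $(M_j' \ded t_j') \in \Phi$. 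For any ground substitution $\tau$ of $\bar z$, the induction hypothesis (i) gives $M\tau\varphi =_\E s'\tau$ on the right, while on the left (i) combined with the definition of~$\sigma$ gives $D[M_1,\dots,M_{n+p},z_1,\dots,z_q]\tau\,\varphi =_\E D[t_1,\dots,t_{n+p},z_1\tau,\dots,z_q\tau] = l(\sigma\cup\tau) \to_\R r(\sigma\cup\tau) = r\sigma\tau$; finally, stability of rewriting under substitution lifts $r\sigma \to^*_\R s'$ to $r\sigma\tau \to^*_\R s'\tau$, so both sides equal $s'\tau$ modulo $\E$, as required.
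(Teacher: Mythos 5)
Your proposal is correct and follows essentially the same route as the paper's proof: induction on the derivation, reduction to the newly added facts and equations, a congruence argument for the \textbf{B} rules and for \textbf{A.2}, and property~(\ref{item:ctx_correct}) of $\Ctx$ together with stability of rewriting under substitution for \textbf{A.1}. The only difference is presentational: you discharge the quantified equations by ranging over all ground instantiations $\tau$, whereas the paper proves the open equation $M'\varphi =_\E N'\varphi$ once and relies on closure of $=_\E$ under substitution.
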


\begin{proof}
We prove this result by induction on the derivation
$\Init(\varphi) \Longrightarrow^* (\Phi, \Psi)$.

\medskip{}

\noindent\emph{Base case:}
We have that $(\Phi, \Psi) = \Init(\varphi)$
 and we easily conclude. 

\medskip{}

\noindent\emph{Induction case:} 
 In such a case, we have
$\Init(\varphi) \Longrightarrow^* (\Phi', \Psi') \Longrightarrow (\Phi,
\Psi)$.

Let us first notice two facts.
\begin{enumerate}
\item Let~$M$ and~$t$ be such that $M \ded_{\Phi} t$. 
By definition of~$\ded_{\Phi}$, 
there exist a public context~$C$ and some deduction facts $M'_1 \ded t'_1, \ldots, M'_n \ded t'_n \in \Phi$ such that
$M = C[M'_1, \ldots M'_n]$ and $t = C[t'_1, \ldots, t'_n]$. In order to
prove~\emph{\ref{soundcond1}.}, it is sufficient to show that $M' \ded_{\varphi}^\E t'$
for every $M' \ded t' \in \Phi$. By induction hypothesis, this holds for
the deduction facts in $\Phi'$, thus it remains to show that $M' \ded_{\varphi}^\E t'$ for
every fact $M' \ded t' \in \Phi - \Phi'$.

\item Let $M, N$ be two terms such that $\Psi \models M \bowtie N$.
  To establish \emph{\ref{soundcond2}.}, it is sufficient to prove
  that $M' \varphi =_\E N' \varphi$ for every $(\forall z_1,\ldots, z_q. M'
  \bowtie N')$ in $\Psi$. By induction hypothesis, this holds for the
  equations in $\Psi'$, thus it remains to show that $M'\varphi =_\E
  N'\varphi$ for every equation $(\forall z_1,\ldots, z_q. M' \bowtie N')$
  in $\Psi - \Psi'$.
\end{enumerate}

\noindent Next we perform a case analysis on the inference rule used
in $(\Phi', \Psi') \Longrightarrow (\Phi, \Psi)$.

\medskip{}

First, consider 
the case of rule $\textbf{A}$.
Let $l \to r \in \R$ be the rewrite rule, $D$ the decomposition, and $M_1
\ded t_1, \ldots, M_{n+p} \ded t_{n+p}$ the facts involved in this
step.

\smallskip{}

\noindent \emph{Rule \textbf{A.2}}:
We need to show that
\begin{itemize}
\item $D[M_1, \ldots ,M_{n+p}, \cst a, \ldots, \cst a] \varphi =_\E (r\sigma)\downR$, and
\item $D[M_1, \ldots, M_{n+p}, z_1, \ldots, z_q] \varphi =_\E D[M_1,
  \ldots, M_{n+p}, \cst a, \ldots, \cst a]\varphi$.
\end{itemize}
We note that $D[t_1, \ldots, 
t_{n+p}, z_1, \ldots, z_q]=l\sigma \to r\sigma \to^* (r\sigma)\downR$.
Besides, by induction hypothesis we have that $M_i \varphi =_\E t_i$ for $1
\leq i \leq n+p$.
Given that~$(r\sigma)\downR$ is ground, and applying the substitution
$\{z_1 \mapsto \cst a, \ldots, z_q \mapsto \cst a\}$ to the equation $D[t_1,
\ldots, t_{n+p}, z_1, \ldots, z_q] =_\E (r\sigma)\downR$,
we obtain:
\begin{eqnarray*}
D[M_1, \ldots, M_{n+p}, z_1, \ldots, z_q] \varphi &\; =_\E \; & D[t_1,
\ldots, 
t_{n+p}, z_1, \ldots, z_q] \\
&=_\E& (r\sigma)\downR\\
&=_\E& D[t_1,
\ldots, 
t_{n+p}, \cst a, \ldots, \cst a]\\
&=_\E& D[M_1, \ldots ,M_{n+p}, \cst a, \ldots, \cst a]
\varphi
\end{eqnarray*}

\noindent \emph{Rule \textbf{A.1}}: We need to show $D[M_1, \ldots,
M_{n+p}, z_1, \ldots, z_q] \varphi =_\E M \varphi$.
As before, we have $D[M_1, \ldots, M_{n+p}, z_1, \ldots,
z_q]\varphi =_\E (r\sigma)\downR$.
%
We also know that there exists~$s$ such that $M\ded_{\Phi^+} s$ and $r\sigma
\to^*_{\R} s$ where $\Phi^+ = \Phi \cup
\{z_1 \ded z_1,\ldots, z_q \ded z_q\}$
thanks to
property~(\ref{item:ctx_correct}) of $\Ctx$. 
Let $\theta$ be the substitution
$\{z_1 \mapsto \cst a, \ldots, z_q \mapsto \cst a\}$.
We have that $M\theta\ded_{\Phi} s$.
Hence, using the induction hypothesis,
we have that {$M\theta\varphi =_{\E}
s$} thus {$M\varphi =_{\E}
s$}, i.e. $M\varphi =_{\E} (r\sigma)\downR$. This allows us to conclude.
\smallskip{}

\noindent \emph{Rule \textbf{A.3}}:
In such a case, the result trivially holds.

\medskip{}

\noindent Second, we consider the case of \textbf{B} rules. Let $t = f(t_1,\ldots,
t_n) \in \st(t_0)$, $f \in \F_{\pub}$ and $M_0 \ded t_0, \ldots, M_{n} \ded
t_{n} \in \Phi$ be involved in the step
$(\Phi', \Psi') \Longrightarrow (\Phi, \Psi)$.

\medskip{}

\noindent \emph{Rule \textbf{B.1}}:
By induction hypothesis, $M_i \varphi =_\E t_i$ for every $1\leq i
\leq n$, hence $f(M_1,\ldots, M_n) \varphi =_\E f(t_1,\ldots, t_n)
= t$.\medskip{}

\noindent \emph{Rule \textbf{B.2}}: By induction hypothesis, $M_i
\varphi =_\E t_i$ for every $1\leq i \leq n$ and $M \varphi =_\E t$,
hence $f(M_1,\ldots, M_n) \varphi =_\E f(t_1,\ldots, t_n) = t =_\E M\varphi$.
%
\qed
\end{proof}


\subsection{Completeness}
\label{subsec:completeness}

The next three lemmas are dedicated to the completeness of \textbf{B} rules
(Lemma~\ref{lem:compbasebase} and Lemma~\ref{lem:compbase}) and
\textbf{A} rules (Lemma~\ref{lem:comp}). 

\medskip{}

Lemma~\ref{lem:compbasebase} ensures that a 
saturated state $(\Phi, \Psi)$ contains all the deduction facts~${M \ded t}$ where~$t$ is a
subterm of~$\Phi$ that is syntactically deducible, whereas
Lemma~\ref{lem:compbase} ensures that
saturated states account for all the syntactic equations possibly
visible on the frame.

\newsavebox{\lemcompsyntded}
\sbox{\lemcompsyntded}{\vbox{%
\begin{lemma}[completeness, syntactic deduction]
\label{lem:compbasebase}
Let $(\Phi, \Psi)$ be a state, $M_0 \ded t_0 \in \Phi$. Let $N$, $t$
be two terms such that $t \in \st(t_0)$ and $N \ded_{\Phi} t$. Then
there exists $(\Phi', \Psi')$ and $N'$ such that:
\begin{itemize}
\item $(\Phi, \Psi) \Longrightarrow^* (\Phi', \Psi')$ using \textbf{B} rules, and
\item $N'\ded t \in \Phi'$ and $\Psi'\models N \bowtie N'$.
\end{itemize}
\end{lemma}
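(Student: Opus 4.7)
The plan is to induct on the size of the public context $C$ witnessing $N \ded_{\Phi} t$, written $N = C[M_1, \ldots, M_n]$ and $t = C[t_1, \ldots, t_n]$ with $M_i \ded t_i \in \Phi$ for each $i$.

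If $C$ is a single parameter, then $N = M_1$ and $t = t_1$, so $M_1 \ded t$ already belongs to $\Phi$. I take $(\Phi', \Psi') = (\Phi, \Psi)$ and $N' = M_1 = N$; reflexivity of $\qded$ closes this case.

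For the inductive step, $C = f(C_1, \ldots, C_k)$ with $f \in \Fpub$ and proper subcontexts $C_i$. Setting $s_i = C_i[t_1, \ldots, t_n]$ and $P_i = C_i[M_1, \ldots, M_n]$, each $s_i$ is a subterm of $t$, hence of $t_0$, and $P_i \ded_{\Phi} s_i$ by construction. I would invoke the induction hypothesis successively on the pairs $(P_i, s_i)$, chaining the \textbf{B}-derivations $(\Phi, \Psi) \Longrightarrow^* (\Phi_1, \Psi_1) \Longrightarrow^* \cdots \Longrightarrow^* (\Phi_k, \Psi_k)$ and extracting recipes $N'_i$ such that $N'_i \ded s_i \in \Phi_k$ and $\Psi_k \models P_i \qded N'_i$. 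Since the \textbf{B} rules only extend $\Phi$ and $\Psi$, the distinguished fact $M_0 \ded t_0$ persists throughout, as do previously produced facts and equations, so the hypotheses required for each successive application of the induction remain valid. Then rule \textbf{B} applies to $t = f(s_1, \ldots, s_k) \in \st(t_0)$ from the facts $N'_i \ded s_i \in \Phi_k$: either some $M^* \ded t$ already lies in $\Phi_k$, in which case \textbf{B.1} adjoins the equation $f(N'_1, \ldots, N'_k) \qded M^*$ and we set $N' = M^*$, or else \textbf{B.2} adjoins the fact $N' \ded t$ with $N' = f(N'_1, \ldots, N'_k)$. In either case $N' \ded t \in \Phi'$ and $\Psi' \models f(N'_1, \ldots, N'_k) \qded N'$; combining this with the preserved equations $\Psi' \models P_i \qded N'_i$ and congruence of $\qded$ under $\Fpub$, we conclude $\Psi' \models N = f(P_1, \ldots, P_k) \qded f(N'_1, \ldots, N'_k) \qded N'$.

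The only nontrivial bookkeeping is checking that the chained induction calls preserve every invariant they depend on, in particular that $M_0 \ded t_0$ and each previously produced $N'_j \ded s_j$ survives further extensions of the state; this holds precisely because \textbf{B} rules are purely additive on both the frame and the equation set. The degenerate case where $C$ is a public constant ($k = 0$) is subsumed by the inductive step, with the equation obtained by congruence becoming trivial.
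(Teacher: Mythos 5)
Your proof is correct and follows essentially the same route as the paper's: structural induction on the witnessing context $C$, with the hole/parameter case handled by reflexivity and the step case closed by one application of a \textbf{B} rule after chaining the inductive calls. You are in fact slightly more explicit than the paper about the bookkeeping (persistence of $M_0 \ded t_0$ and of previously produced facts and equations under the purely additive \textbf{B} rules), which is a welcome clarification rather than a deviation.
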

}}

\noindent\usebox{\lemcompsyntded}
The proof of Lemma~\ref{lem:compbasebase} is postponed to the appendix. It uses
 a simple
induction on the context~$C$ witnessing the fact that~$t$ is syntactically
deducible from~$\Phi$.

\newsavebox{\lemcompsynteq}
\sbox{\lemcompsynteq}{\vbox{%
\begin{lemma}[completeness, syntactic equations]
\label{lem:compbase}
Let $(\Phi, \Psi)$ be a state, and $M$, $N$ be two terms such that
${M\ded_{\Phi} t}$ and~${N \ded_{\Phi} t}$ for some term~$t$. 
Then there exists  $(\Phi', \Psi')$ such that:
\begin{itemize}
\item  $(\Phi, \Psi) \Longrightarrow^* (\Phi', \Psi')$ using \textbf{B} rules, and
\item  $\Psi'\models M \bowtie N$.
\end{itemize}
\end{lemma}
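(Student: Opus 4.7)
The plan is to proceed by induction on $|M|+|N|$, where $|\cdot|$ denotes recipe size, with the case split driven by whether $M$ (resp.\ $N$) is already a direct recipe of~$\Phi$ or decomposes through a public function symbol at the root.

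Suppose first that $M \ded t \in \Phi$. Then $t$ is (trivially) a subterm of the plain term of the deduction fact $M \ded t$ of~$\Phi$, so Lemma~\ref{lem:compbasebase} applied to the pair $(N,t)$ yields a derivation $(\Phi,\Psi) \Longrightarrow^* (\Phi',\Psi')$ using only \textbf{B} rules, together with a recipe $N'$ such that $N' \ded t \in \Phi'$ and $\Psi' \models N \qded N'$. Since \textbf{B} rules only enlarge the frame, $M \ded t$ still belongs to $\Phi'$; because $\Phi'$ is one-to-one by definition of a state, this forces $N' = M$, so $\Psi' \models M \qded N$. The symmetric case $N \ded t \in \Phi$ is identical with the roles of $M$ and $N$ exchanged.

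Otherwise neither $M$ nor $N$ occurs directly in $\Phi$, so the contexts witnessing $M \ded_\Phi t$ and $N \ded_\Phi t$ both have a function symbol at the root. Since both recipes deduce the same ground term~$t$, this root symbol has to be $f = \operatorname{root}(t)$, giving $M = f(M_1,\ldots,M_n)$ and $N = f(N_1,\ldots,N_n)$ with $M_i \ded_\Phi t_i$ and $N_i \ded_\Phi t_i$ where $t = f(t_1,\ldots,t_n)$. Each pair $(M_i,N_i)$ is strictly smaller than $(M,N)$, so the induction hypothesis applies. I would chain its applications: first on $(M_1,N_1)$ from $(\Phi,\Psi)$, then on $(M_2,N_2)$ from the resulting state, and so on; at each stage the remaining pairs are still deducible because $\ded_{(\cdot)}$ is monotone in the frame and \textbf{B} rules only add facts. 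This produces $(\Phi,\Psi) \Longrightarrow^* (\Phi',\Psi')$ with $\Psi' \models M_i \qded N_i$ for every~$i$. Since $f \in \Fpub$, congruence of $\qded$ under public function symbols yields $\Psi' \models f(M_1,\ldots,M_n) \qded f(N_1,\ldots,N_n)$, that is, $\Psi' \models M \qded N$.

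The main obstacle I anticipate is purely organizational: a recipe $M$ with $M \ded t \in \Phi$ may itself have a public root (\textbf{B.2} adds compound facts), so the two cases are not disjoint. Resolving this is just a matter of giving the ``recipe in $\Phi$'' case priority over the decomposition case; all the real work is carried by Lemma~\ref{lem:compbasebase} and the one-to-one invariant on saturated states, which together collapse the base case to reflexivity of $\qded$.
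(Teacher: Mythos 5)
Your proof is correct and follows essentially the same route as the paper's: the paper inducts on the (smaller of the two) witnessing contexts rather than on $|M|+|N|$, but its base case is likewise "one recipe is a hole, i.e.\ a fact of $\Phi$", resolved by Lemma~\ref{lem:compbasebase} plus the one-to-one invariant, and its inductive step likewise peels off the common root symbol, chains the induction hypothesis over the arguments, and closes with congruence of $\qded$ under $\Fpub$. Your explicit remarks on prioritizing the base case and on monotonicity of $\Phi$ and $\Psi$ along \textbf{B}-derivations are points the paper leaves implicit, but they do not change the argument.
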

}}

\noindent\usebox{\lemcompsynteq}

\begin{proof}(sketch)
Let $C$, $C'$ be the contexts witnessing ${M\ded_{\Phi} t}$
  and~${N \ded_{\Phi} t}$. Assume that $C$ is smaller than $C'$.  The proof is
  done by induction on $C$. When $C$ is reduced to an hole, we apply
  Lemma~\ref{lem:compbasebase} to conclude. Otherwise, we have that $C = f(C_1,\ldots, C_r)$
  and $C' = f(C'_1,\ldots, C'_r)$. We easily conclude by
  applying our induction hypothesis on $C_i, C'_i$ for each $1\leq i \leq r$. 
The detailed proof is presented in appendix~\ref{app:app-comp}. \qed
\end{proof}


Now, we know that terms that are syntactically deducible from the frame 
and syntactic equation visible on the frame will be added during our
saturation procedure. It remains to take into account the underlying equational
theory. This is the purpose of Lemma~\ref{lem:comp} that 
deals with the reduction of a deducible term
along the rewrite system~$\R$.  Using that $\R$ is convergent, this
allows us to prove that every deducible term 
from a saturated frame is syntactically deducible.

\newsavebox{\lemcompred}
\sbox{\lemcompred}{\vbox{%
\begin{lemma}[completeness, context reduction]
\label{lem:comp}
Let $(\Phi, \Psi)$ be a state and $M$, $t$, $t'$ be three terms such that $M
\ded_{\Phi} t$ and $t \rR t'$. Then, either $(\Phi, \Psi)
\Longrightarrow^* \bot$ 
or there exist $(\Phi', \Psi')$, $M'$ and $t''$ such that
\begin{itemize}
\item 
$(\Phi, \Psi) \Longrightarrow^* (\Phi', \Psi')$,
\item 
$M'\ded_{\Phi'} t''$ with $t'\rR^* t''$, and
\item 
$\Psi'\models M \bowtie M'$.
\end{itemize}

Besides, in both cases, the corresponding derivation from $(\Phi,
\Psi)$ can be chosen to consist of a number of \textbf{B} rules,
possibly followed by one instance of \textbf{A} rule involving the
same rewrite rule $l \to r$ as the rewrite step $t \rR t'$.
\end{lemma}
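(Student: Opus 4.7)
The plan is to witness $M \ded_\Phi t$ as $M = C[M_1, \ldots, M_k]$, $t = C[t_1, \ldots, t_k]$ with $M_i \ded t_i \in \Phi$, then perform a few \textbf{B}-rule steps, and finally apply one \textbf{A}-step using the very rule $l \to r$ used by $t \rR t'$. Since each $t_i$ is $\R$-reduced, the rewrite position~$p$ cannot lie at or below any hole of $C$ (otherwise $l\sigma$ would sit inside some $t_i$), so the holes of $C$ strictly below~$p$ correspond to some positions $q_1, \ldots, q_m$ in~$l\sigma$. I then classify each $q_i$ as (A)~a non-variable position of~$l$, (B)~a variable position of~$l$, or (C)~a position strictly below a variable position of~$l$, and build the \emph{natural} decomposition $D$ of~$l$: cut as $l_1, \ldots, l_n$ the distinct non-variable subterms $l|_{q_i}$ for Case~(A) positions, then split the remaining variables occurring in the frame of~$D$ into $y_j$'s (those also occurring inside some~$l_i$) and $z_k$'s. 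The decomposition is proper since $l$ is not a variable, and the matching $l_i\sigma = t_{j_i}$ holds by construction.

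The next step prepares the state for the \textbf{A}-rule. The frame position $\mathrm{pos}_y$ of any $y_j$ has $l|_{\mathrm{pos}_y} = y_j$, a variable and hence a leaf of~$l$, so no $q_i \in \pos(l)$ can lie strictly below~$\mathrm{pos}_y$; combined with the classification, this forces $p \cdot \mathrm{pos}_y$ to lie in the skeleton of~$C$, so $\sigma(y_j) = t|_{p \cdot \mathrm{pos}_y}$ is syntactically deducible from~$\Phi$ via a sub-context of~$C$. Since moreover $y_j$ occurs inside some~$l_i$, we have $\sigma(y_j) \in \st(l_i\sigma) \subseteq \st(\im(\Phi))$, so Lemma~\ref{lem:compbasebase} yields by successive \textbf{B}-steps a state containing a fact $N_{n+j} \ded \sigma(y_j)$ for each~$y_j$. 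The same argument provides, for each~$z_k$, a syntactic recipe~$N''_k$ of $\sigma(z_k)$ in the current frame; Lemma~\ref{lem:compbase} then gives further \textbf{B}-steps that equate $M|_p$ with $D[N_1, \ldots, N_{n+p}, N''_1, \ldots, N''_q]$, both being syntactic recipes of~$l\sigma$.

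Now I apply one \textbf{A}-step at the prepared state with $l \to r$, $D$, matching substitution~$\sigma$, and facts $N_1, \ldots, N_{n+p}$. Case~\textbf{A.3} yields $(\Phi, \Psi) \Longrightarrow^* \bot$ and we are done. Otherwise, \textbf{A.1} or~\textbf{A.2} introduces an equation $\forall z_1, \ldots, z_q.\ D[N_1, \ldots, N_{n+p}, z_1, \ldots, z_q] \qded M_r$, where $M_r$ is a recipe for some~$s$ with $r\sigma \rR^* s$. Let $\theta_r = \{z_k \mapsto N''_k\}_k$ and $\theta_t = \{z_k \mapsto \sigma(z_k)\}_k$. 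Instantiating the equation with~$\theta_r$ and combining with the syntactic equation added earlier gives $\Psi' \models M|_p \qded M_r\theta_r$. Setting $M' = M[M_r\theta_r]_p$ and $t'' = t[s\theta_t]_p$, the congruence of~$\qded$ under public function symbols lifts this to $\Psi' \models M \qded M'$, while closure of rewriting under substitution gives $t' \rR^* t''$ and makes $M' \ded_{\Phi'} t''$ immediate.

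The hardest parts, I expect, are (i)~checking that the natural decomposition~$D$ is genuinely matched by~$\sigma$, in particular ruling out that a Case~(C) hole forces $\sigma(y_j)$ to straddle other holes of~$C$ (this is exactly what the leaf argument for $\mathrm{pos}_y$ handles); and (ii)~carrying out the final lifting cleanly, which has to combine the local equation from~\textbf{A}, the syntactic equations from Lemma~\ref{lem:compbase}, and the congruence of~$\qded$ to reach simultaneously the global equation $\Psi' \models M \qded M'$ and the syntactic recipe $M' \ded_{\Phi'} t''$.
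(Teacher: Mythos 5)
Your proof is correct and follows essentially the same route as the paper's: the paper merely factors your position-based extraction of the proper decomposition $D$ from the overlap of $C|_p$ with $l$ into a separate auxiliary lemma (Lemma~\ref{lem:decomp}), and otherwise proceeds exactly as you do --- \textbf{B}-steps via Lemmas~\ref{lem:compbasebase} and~\ref{lem:compbase} to install the facts $N_{n+j}\ded y_j\sigma$ and the equation $M|_p \qded D[\ldots]$, then a single \textbf{A}-step with the case split over \textbf{A.1}/\textbf{A.2}/\textbf{A.3}. One small imprecision: properness of $D$ does not follow from $l$ being a non-variable alone (that only rules out a $y/z$-hole at the root) but also needs that $p$ is not itself a hole of $C$ --- a fact you did establish from the $\R$-reducedness of $\im(\Phi)$, just not the one you cited.
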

}}

\noindent\usebox{\lemcompred}
 
\begin{proof}(sketch)
The detailed proof of Lemma~\ref{lem:comp} is left to the appendix. We describe here its main arguments.
Since $t \rR t'$, there exist a
position~${\alpha}$, a substitution~$\sigma$ and a
  rewrite rule ${l \to r \in \R}$ such that $t|_{\alpha} = l\sigma$ and
  $t'= t[r\sigma]_{\alpha}$. 
Let~$C$ be a context witnessing the fact that  $M \ded_{\Phi} t$.
Since terms in $\im(\Phi)$ are $\R$-reduced, $\alpha$ is actually a position in $C$.
Thus, the rewriting step mentioned above corresponds to a proper
  $(n,p,q)$-decomposition $D$ of $l$:
    $l = D[l_1,\ldots,l_n,y_1,\ldots y_{p}, z_1,\ldots z_{q}]$. 
We can show that
    $M|_{\alpha} \ded_{\Phi} l\sigma$ and 
    $D[M_1,\ldots, M_n, N_1, \ldots, N_{p+q}] \ded_{\Phi} l\sigma$
where    
\begin{itemize}
\item $M_1 \ded t_1$, \ldots,  $M_n \ded t_n$ are deduction facts in $\Phi$,
\item   for every $1 \leq j \leq p$, $N_j \ded_\Phi y_j \sigma$, and
\item for every $1 \leq k \leq q$, $N_{p+k} \ded_\Phi z_k \sigma$.
  \end{itemize}

  Thus, by Lemma~\ref{lem:compbase}, there exists a derivation $(\Phi,
  \Psi) \Longrightarrow^* (\Phi_1, \Psi_1)$ using~\textbf{B}~rules such
  that $\Psi_1 \models M|_{\alpha} \qded D[M_1,\ldots, M_n, N_1, \ldots, N_{p+q}]$.

Besides, $y_j\sigma$ is a subterm of some
  $l_{i}\sigma=t_{i}$. Since $N_j \ded_\Phi y_j \sigma$, by
  applying Lemma~\ref{lem:compbasebase} repeatedly, we deduce that
  there exist some term $M_{n+1}$, \ldots, $M_{n+p}$ and a derivation
  $(\Phi_1, \Psi_1) \Longrightarrow^* (\Phi_2, \Psi_2)$ using
  \textbf{B}~rules such that for all $j$,
  \begin{itemize}
  \item $M_{n+j} \ded y_j \sigma$ is in $\Phi_2$, and
  \item $\Psi_2 \models M_{n+j} \qded N_j$.
  \end{itemize}

  Let $N = D[M_1,\ldots, M_{n+p}, N_{p+1}, \ldots, N_{p+q}]$.  We
  deduce that $N \ded_{\Phi_2} l\sigma$, and \[\Psi_2 \models M|_{\alpha}
  \qded D[M_1,\ldots, M_n, N_1, \ldots, N_{p+q}] \qded N\]


  \noindent We now consider the application to $(\Phi_2,\Psi_2)$ of a
  \textbf{A}~rule that involves the rewrite rule $l\to r$, the
  decomposition $D$, the plain terms
  $(t_1,\ldots,t_{n+p})=(l_1,\ldots,l_n, y_1,\ldots,y_p)\sigma$. Depending on
  whether $(r\sigma)\downR$ is ground and $\Ctx(\Phi_2^+  \vdash^?_\R
  r\sigma') = \bot$, we conclude by applying \textbf{A.1},
  \textbf{A.2} or \textbf{A.3}. \qed
\end{proof}

\subsection{Main theorem}

We are now able to prove soundness and completeness of our transformation
rules provided that a saturated state is reached.

\noindent\usebox{\theocompleteness}

\begin{proof}
Let~$\varphi$ be an initial frame and~${(\Phi, \Psi)}$ be a saturated
state such that $\Init(\varphi) \Rightarrow^* (\Phi,\Psi)$.

\medskip{}

\noindent $1. (\Leftarrow)$
 Let $M$, $N$ and $t$ be such that $\Psi \models M \bowtie N$ and $N \ded_\Phi
 t\downR$ (thus in particular $N \ded^\E_\Phi
 t$).  Thanks to Lemma~\ref{lem:sound}, we have that $M\varphi =_\E N\varphi  =_\E t$.

\smallskip{}
\noindent \phantom{$1$} ($\Rightarrow$)
Let $M$ and $t$ be such that $M\varphi =_\E t$. We have that $M \ded_{\Phi} t_0
 \to^* t\downR$ for some term~$t_0$. We show the result by induction on $t_0$
equipped with the order $<$ induced by the rewriting relation ($t < t'$ if and 
only if $t'\to^+ t$).

\smallskip{}

\noindent \emph{Base case: $M \ded_{\Phi} t_0 = t\downR$.}
Let $N = M$,  we have $\Psi \models M \bowtie N$ and $N \ded_{\Phi} t\downR$.

\smallskip{}

\noindent \emph{Induction case: $M \ded_{\Phi} t_0 \to^+ t\downR$.}
Let~$t'$ be such that $M \ded_{\Phi} t_0 \to t'\to^* t\downR$.  Thanks
to Lemma~\ref{lem:comp} and since $(\Phi, \Psi)$ is already
saturated\footnote{Note that rule \textbf{A.3} is never applicable on a 
  saturated state.},  we deduce that there exist~$N'$ and~$t''$ such
that $N' \ded_{\Phi} t''$, $t'\to^* t''$, and $\Psi\models M \bowtie
N'$.  We have that $N'\ded_{\Phi} t''\to^* t\downR$ and $t'' \leq t' <
t_0$. Thus, we can apply our induction hypothesis and we obtain that
there exists~$N$ such that $\Psi \models N'\bowtie N$ and
$N\ded_{\Phi} t\downR$.

\medskip{}

\noindent $2. (\Leftarrow)$
By Lemma~\ref{lem:sound},
$\Psi \models M  \bowtie N$ implies $M\varphi =_\E N\varphi$. 

\smallskip{}

\noindent \phantom{$2$} ($\Rightarrow$)
Let $M$ and $N$ such that $M\varphi =_\E N\varphi$. This means that there
exists $t$ such that $M\varphi =_{\E} t$ and $N\varphi =_{\E} t$.
By applying~$1$, we deduce that there exists $M'$, $N'$ such that:
 $\psi \models M \bowtie M'$, $M'\ded_{\Phi} t\downR$,  
$\psi \models N \bowtie N'$ and $N'\ded_{\Phi} t\downR$.
Thanks to Lemma~\ref{lem:compbase} and since $(\Phi, \Psi)$ is already
saturated, we easily deduce that $\Psi \models M' \bowtie N'$, and thus
$\Psi \models M \bowtie N$.
\qed 
\end{proof}

We proved that saturated
frames yield sound and complete characterizations of deducible terms and
visible equations of their initial frames. Yet, the saturation
procedure may still not terminate, or fail due to rule~\textbf{A.3}.
%


\section{Non-failure}
\label{sec:fail}

As shown by the following example (from~\cite{CDK-cade09}), our procedure may fail.

\begin{example}
\label{ex:malleableenc}
Consider the theory $\E_\mal$ given below:
\[
\E_\mal = \{\dec(\enc(x,y),y) = x,\;
  \mal(\enc(x,y),z) = \enc(z,y) \}.
\]
The $\mal$ function symbol allows one to arbitrarily change the
plaintext of an encryption. Such a malleable encryption is not realistic. It 
is only used for illustrative purpose. 

By orienting from left to right the equations, we obtain a convergent
rewrite system. Thus, $\E_\mal$ is a convergent equational theory.
Let $\varphi = \{\w_1 \ded \enc(\cst \cst s, \cst k)\}$
where $\cst s$ and $\cst k$ are
private constants. The only rule that is applicable is an instance of 
an \textbf{A} rule. Consider the rewrite rule $\mal(\enc(x,y),z) \to
\enc(z,y)$ and the only deduction fact in $\Init(\varphi) = (\varphi,\emptyset)$. We obtain
$r\sigma\downR = \enc(z, \cst k)$. This term is not ground and the condition
required in case (1) is not fulfilled. Thus, we have that 
$\Init(\varphi) \Longrightarrow \bot$. Note that, since no other rule is
applicable, there is no hope to find a strategy of rule applications 
to handle this case.
\end{example}

In this section, we identify a class of theories, called 
\emph{layered convergent} theories, (a syntactically defined class of theories) for which failure is guaranteed not to occur.


\subsection{Layered convergent theories}
We prove that the algorithm never fails for \emph{layered convergent} theories.
Layered convergent theories consist in a generalization of subterm theories, considering each decomposition of the rewrite rules of the theory.


\begin{definition}[layered rewrite system] \label{def:layered} A
  rewrite system $\R$, and by extension its equational theory $\E$, are
  \emph{layered} if there exists an ascending chain of subsets
  $\emptyset = \R_0 \subseteq \R_1 \subseteq \ldots \subseteq \R_{N+1}
  = \R$ $(N \geq 0)$, such that for every $0 \leq i \leq N$, for every
  rule $l \to r$ in $\R_{i+1}-\R_{i}$, for every 
  $(n,p,q)$-decomposition $l=D[l_1, \ldots, l_n, y_1, \ldots, y_p,
  z_1, \ldots, z_q]$, 
  one of the following two conditions holds: 
  \begin{enumerate}[(i)]
  \item $\var(r) \subseteq \var(l_1, \ldots, l_n)$;
  \item there exist $C_0, C_1,\ldots,C_k$ and $s_1,\ldots,s_k$ such that
    \begin{itemize}
    \item $r=C_0[s_1, \ldots, s_k]$;
    \item for each $1 \leq i \leq k$, $C_i[l_1, \ldots, l_n, y_1,
      \ldots, y_p,z_1, \ldots, z_q]$ rewrites to $s_i$ in zero or one
      step of rewrite rule in head position along $\R_i$.
    \end{itemize}
  \end{enumerate}
  In the latter case, we say that the context $C=C_0[C_1,\ldots,C_k]$
  is \emph{associated} to the decomposition $D$ of $l \to r$. Note
  that $C[l_1, \ldots, l_n, y_1, \ldots, y_p,z_1, \ldots, z_q]
  \to_{\R_i}^* r$. 
\end{definition}

The large class of weakly subterm convergent is an (easy) particular case of layered convergent theories.
\begin{lemma}
  Any weakly subterm convergent rewrite system~$\R$ is layered
  convergent.
\end{lemma}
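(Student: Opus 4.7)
The plan is to take the trivial chain $\emptyset = \R_0 \subsetneq \R_1 = \R$ (i.e., $N = 0$) and verify, for every rule $l \to r \in \R$ and every $(n,p,q)$-decomposition $l = D[l_1, \ldots, l_n, y_1, \ldots, y_p, z_1, \ldots, z_q]$, that either condition~(i) or condition~(ii) of Definition~\ref{def:layered} holds. Since $\R_0 = \emptyset$ has no rules, the ``zero or one step along $\R_0$'' requirement in condition~(ii) collapses to exactly zero steps, so condition~(ii) amounts to exhibiting a public context $C \in \Fpub[\W]$ with $C[l_1, \ldots, l_n, y_1, \ldots, y_p, z_1, \ldots, z_q] = r$.

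By weak subterm convergence, $r$ is either a ground $\R$-reduced term or a subterm of $l$. In the ground case, condition~(i) holds immediately because $\var(r) = \emptyset \subseteq \var(l_1, \ldots, l_n)$. Otherwise, I would split on the position $\pi \in \pos(l)$ at which $r$ occurs. If $\pi$ strictly descends past a parameter leaf of $D$, then $r$ is a proper subterm of one of the plugged-in terms; that term cannot be one of the variables $y_j$ or $z_k$ (variables have no proper subterms), so it must be some $l_i$, and hence $\var(r) \subseteq \var(l_i) \subseteq \var(l_1, \ldots, l_n)$, yielding condition~(i). If instead $\pi$ stays within $D$ (possibly ending exactly at a leaf), then $r = D'[t_{i_1}, \ldots, t_{i_m}]$, where $D'$ is the subcontext of $D$ rooted at $\pi$ and each $t_{i_j}$ belongs to $\{l_1, \ldots, l_n, y_1, \ldots, y_p, z_1, \ldots, z_q\}$. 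I would then witness condition~(ii) by setting $C_0 = D'$ viewed as an $m$-ary context, $s_j = t_{i_j}$, and $C_j = \w_{i_j}$, so that each $C_j[l_1, \ldots, z_q] = t_{i_j} = s_j$ holds with zero rewrite steps while $C_0[s_1, \ldots, s_m] = r$.

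The proof is essentially a routine case split, and the only (minor) obstacle is the position bookkeeping: one must carefully separate positions lying inside $D$ from those descending into a plugged-in term, and rule out descent into a variable leaf by observing that variables have no proper subterms.
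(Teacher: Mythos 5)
Your proposal is correct and follows essentially the same route as the paper's proof: take $N=0$, $\R_1=\R$, and observe that a ground right-hand side gives condition~(i) vacuously, a right-hand side that is a subterm of some $l_i$ gives condition~(i), and a right-hand side sitting at a position of the context $D$ gives condition~(ii) with zero rewrite steps. Your version just spells out the position bookkeeping that the paper leaves implicit.
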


\begin{proof}
Let $N=0$ and $\R_1=\R$.  For any $l \to r$ in
  $\R$ and for every decomposition $l=D[l_1, \ldots, l_n, y_1, \ldots,
  y_p, z_1, \ldots, z_q]$, the term $r$ is a subterm of $l$, thus either $r =
  C[l_1, \ldots, l_n, y_1, \ldots, y_p,z_1, \ldots, z_q]$ for some
  context $C$, or $r$ is a subterm of some~$l_i$ thus $\var(r)
  \subseteq \var(l_1, \ldots, l_n)$. \qed
\end{proof}

Consider the convergent  theories of blind signatures $\E_\blind$ and prefix
encryption~$\E_\pref$ defined by the following sets of equations.
\[
{\cal E}_{\blind} =  \left\{
\begin{array}{rcl}
\checksign(\sign(x,y),\pub(y)) &=& \okay\\
\unblind(\blind(x, y),y) & =& x\\
\unblind(\sign(\blind(x,y),z),y) & = & \sign(x,z)
\end{array} 
\right\}
\]
\[
{\cal E}_{\pref} = {\cal E}_\enc \cup \left\{
\begin{array}{rcl}
\pref(\enc(\langle x, y \rangle, z)) & =& \enc(x,z)
\end{array} 
\right\}
\]
The theory $\E_\blind$ models primitives used in e-voting
protocols~\cite{DKR-jcs08}.
The prefix theory represents the property of many chained modes of
encryption (e.g. CBC) where an attacker can retrieve any encrypted
prefix out of a ciphertext.

\begin{lemma}
The rewrite system  associated to 
the theory of homomorphism $\E_\homo$ defined in
Section~\ref{sec:eq} as well as the rewrite systems obtained by orienting from
left to right the equations in $\E_\blind$ and $\E_\pref$ are layered
convergent.
\end{lemma}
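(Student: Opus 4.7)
The plan is to exhibit, for each of the three rewrite systems, an explicit ascending chain of rule subsets witnessing Definition~\ref{def:layered}. In all three cases the chain will have length at most two: an inner layer consisting of weakly subterm (or trivially collapsing) rules already covered by the previous lemma, and an outer layer containing only the ``genuinely non-subterm'' rules whose analysis requires condition~(ii). For each new rule in the outer layer and each of its proper decompositions up to parameter permutation, I will either observe that condition~(i) is immediate, or else produce explicit contexts $C_0, C_1, \ldots, C_k$ together with $s_1, \ldots, s_k$ realizing condition~(ii). A uniform pattern emerges in the latter case: each $z_k$-variable of $r$ is captured as an $s_i$ via the zero-step choice $C_i = \w_{n+p+k}$; each variable of $r$ that lies strictly inside some $l_i$ but is not among the $y_j$ is captured via one head application of a projection-like rule from $\R_1$; and $C_0$ is then the public scaffold reproducing the top-level shape of $r$ around these extracted pieces.

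For $\R_\homo$, I take $\R_1 = \R_\enc$ and $\R_2 = \R_\homo$. The preceding lemma handles $\R_1 \setminus \R_0$ since $\R_\enc$ is weakly subterm convergent. Each of the two homomorphism rules $\enc(\langle x,y\rangle, z) \to \langle\enc(x,z),\enc(y,z)\rangle$ and $\dec(\langle x,y\rangle, z) \to \langle\dec(x,z),\dec(y,z)\rangle$ admits, up to parameter permutation, exactly two proper decompositions: the fully exploded one, and the one with $l_1 = \langle x, y\rangle$. The fully exploded case satisfies~(ii) with zero head steps (scaffold $\langle\enc(\w_1,\w_3),\enc(\w_2,\w_3)\rangle$ or its $\dec$-variant). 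The other case satisfies~(ii) by extracting the two components of $l_1$ through a single head application each of $\proj_1$ and $\proj_2$ in $\R_1$, and assembling the results with $z_1$ into the same scaffold.

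For $\R_\blind$, I place the two subterm-like rules $\checksign(\sign(x,y),\pub(y)) \to \okay$ and $\unblind(\blind(x,y),y) \to x$ into $\R_1$, and the remaining rule $\unblind(\sign(\blind(x,y),z),y) \to \sign(x,z)$ into $\R_2 \setminus \R_1$. Up to parameter permutation, this last rule has exactly three proper decompositions: the fully exploded one (trivial, zero head steps); the decomposition with $l_1 = \sign(\blind(x,y),z)$, for which condition~(i) holds since $\var(r) = \{x,z\} \subseteq \var(l_1)$; and the key case $l_1 = \blind(x,y)$, where $z \notin \var(l_1)$ so~(i) fails, but~(ii) is realized by extracting $x$ via the single head step $\unblind(\blind(x,y),y) \to x$ (a rule in $\R_1$), taking $z$ in zero steps, and reassembling with the scaffold $\sign(\w_1,\w_2)$.

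For $\R_\pref$, I set $\R_1 = \R_\enc$ and $\R_2 = \R_\pref$, so that only $\pref(\enc(\langle x,y\rangle, z)) \to \enc(x,z)$ sits in $\R_2 \setminus \R_1$. Its three proper decompositions (up to parameter permutation) are: fully exploded (trivial); $l_1 = \enc(\langle x,y\rangle, z)$, for which~(i) holds since $\var(r) \subseteq \var(l_1)$; and $l_1 = \langle x, y\rangle$, for which~(ii) is realized by extracting $x$ via one head application of $\proj_1 \in \R_1$, taking $z$ in zero steps, and assembling with the scaffold $\enc(\w_1,\w_2)$. The main obstacle throughout is purely bookkeeping: for each theory one must enumerate every proper decomposition of each left-hand side (up to parameter permutation) and, in each failing case, identify the correct inner-layer rule capable of extracting each variable of $r$ that is hidden strictly inside some $l_i$. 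Once these choices are spelled out, every verification of~(i) or~(ii) is routine.
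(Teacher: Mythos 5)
Your proof is correct and follows essentially the same route as the paper: the same two-layer chains (inner layer of weakly subterm rules, outer layer containing the genuinely non-subterm rule(s)), the same enumeration of proper decompositions, and the same witnesses for condition (ii) (e.g.\ $\enc(\proj_1(\w_1),\w_2)$ for the prefix rule, the $\unblind(\blind(x,y),y)\to x$ head step for the blind-signature rule). The only difference is that the paper works out $\E_\pref$ in detail and declares $\E_\homo$ and $\E_\blind$ ``similar,'' whereas you spell out all three.
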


\begin{proof}
Let us check for instance that the prefix theory~$\E_\pref$ is layered. Let $N=1$, $\R_1$ be the rewrite system obtained from
${\cal E}_\enc$ by orienting the equations from left to right, and
$\R_2 = \R_1\cup\{\pref(\enc(\langle x, y \rangle, z)) \rightarrow
\enc(x,z)\}$.
The rewrite rules of~$\R_1$ satisfy the assumptions since $\R_1$
forms a convergent subterm rewrite system.
The additional rule $\pref(\enc(\langle x, y \rangle, z)) \rightarrow
\enc(x,z)$ admits three
decompositions up to permutation of parameters:
\begin{itemize}
\item $l = \pref(l_1)$, in which case $\var(r) \subseteq \var(l_1)$;
\item $l = \pref(\enc(l_1,z))$, in which case 
$\enc(\proj_1(l_1),z)\rightarrow_{\R_1} r$;
\item $l = \pref(\enc(\langle x, y \rangle,z))$, in which case 
$r = \enc(x,z)$.
\end{itemize}

Verifying that the convergent theories $\E_\homo$ and
$\E_\blind$ are layered is similar.
\end{proof}


%
%

\subsection{A syntactic criterion}

\begin{definition}[Maximal]
We say that the function $\Ctx$ is \emph{maximal} if for every
  $\phi$ and $t$, if there exists $s$ such that $\phi \vdash s$ and $t
  \rR^* s$, then  $\Ctx(\phi \vdash^?_\R t) \neq \bot$.
\end{definition}

\begin{proposition}\label{prop:criterion}
  Assume that the function $\Ctx$ in use is \emph{maximal}. 
Then, provided that $\R$ is layered
  convergent, there exists no state $(\Phi,\Psi)$
  from which $(\Phi,\Psi) \Longrightarrow \bot$ is the only applicable
  derivation.
\end{proposition}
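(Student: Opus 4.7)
My plan is to argue by contradiction, via a minimality argument on the layer index of the rewrite rules involved in \textbf{A.3} firings. Suppose some state $(\Phi,\Psi)$ admits only rule \textbf{A.3} as an applicable transformation. Among all \textbf{A.3} firings available in $(\Phi,\Psi)$, pick one whose rewrite rule $l\to r \in \R_{i+1}\setminus\R_i$ has \emph{minimal} layer index $i$, with proper decomposition $D$ (writing $l = D[l_1,\ldots,l_n,y_1,\ldots,y_p,z_1,\ldots,z_q]$) and matching substitution $\sigma$ (so that $M_j \ded t_j \in \Phi$ and $(l_1,\ldots,l_n,y_1,\ldots,y_p)\sigma = (t_1,\ldots,t_{n+p})$).

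Since rule \textbf{A.2} is not applicable, $(r\sigma)\downR$ is non-ground; and because $\sigma$ is ground on $\var(l_1,\ldots,l_n)$, we deduce $\var(r) \not\subseteq \var(l_1,\ldots,l_n)$. Hence case (i) of Definition~\ref{def:layered} is ruled out, and case (ii) applies: we obtain $r = C_0[s_1,\ldots,s_k]$ with $C_j[l_1,\ldots,z_q] \to^{\leq 1}_{\R_i,\mathrm{head}} s_j$ for each $j$. \emph{Sub-case A: all these head rewrites have length zero.} Then $s_j = C_j[l_1,\ldots,z_q]$ for every $j$; letting $\hat{C} = C_0[C_1,\ldots,C_k]$, we have $r = \hat{C}[l_1,\ldots,z_q]$, so $r\sigma = \hat{C}[t_1,\ldots,t_{n+p},z_1,\ldots,z_q]$ is syntactically deducible from $\Phi^+ = \Phi \cup \{z_1 \ded z_1,\ldots,z_q \ded z_q\}$ via the recipe $\hat{C}[M_1,\ldots,M_{n+p},z_1,\ldots,z_q]$. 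Choosing $s = r\sigma$ in the maximality condition yields $\Ctx(\Phi^+ \vdash^?_\R r\sigma) \neq \bot$, so rule \textbf{A.1} (not \textbf{A.3}) fires for $(l\to r,D,\sigma)$, contradicting the assumption.

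\emph{Sub-case B: some $s_j$ arises from a genuine one-step head rewrite} $C_j[l_1,\ldots,z_q] = l'\rho \to_{\R_i,\mathrm{head}} r'\rho = s_j$ with $l'\to r' \in \R_i$. From the equality $l'\rho\sigma = C_j[t_1,\ldots,t_{n+p},z_1,\ldots,z_q]$, I will construct a proper decomposition $D'$ of $l'$ and a matching substitution $\sigma'$ realising an \textbf{A}-invocation of $l'\to r'$ in the very same state $(\Phi,\Psi)$: positions of $l'$ whose $\rho\sigma$-image contains some $z_k$ yield the $z'$-components of $D'$, while positions whose image equals some $t_i \in \im(\Phi)$ yield the $l'$- or $y'$-components matching that $t_i$. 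If this new invocation fires \textbf{A.1} or \textbf{A.2}, then a non-\textbf{A.3} transformation is applicable in $(\Phi,\Psi)$, contradicting the assumption. If it fires \textbf{A.3}, we have an \textbf{A.3} firing in $(\Phi,\Psi)$ whose rewrite rule $l' \to r' \in \R_i$ lies in a strictly lower layer than $l\to r$, contradicting the minimality of $i$.

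The main obstacle is the explicit construction of $D'$ and $\sigma'$ in sub-case B. One must verify that $D'$ is proper (which follows from $l'$ being a non-variable term, as a rewrite rule's left-hand side) and that its $l'$- and $y'$-components map under $\sigma'$ exactly to elements of $\im(\Phi)$, not merely to subterms thereof. The key observation is that in $l'\rho\sigma = C_j[t_1,\ldots,t_{n+p},z_1,\ldots,z_q]$ each ground $t_i \in \im(\Phi)$ occupies a specific position; by choosing the $l'$- or $y'$-components of $D'$ to be precisely the $l'$-subterms sitting at those positions, the required matching to $\im(\Phi)$ holds by construction.
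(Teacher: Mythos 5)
Your overall strategy is sound and genuinely different from the paper's. The paper proceeds by induction on the layer index and uses Lemma~\ref{lem:comp} to extract, for each one-step head reduct $u_i\to_{\R_i}u_i'$, a syntactically deducible term $u_i''$ with $u_i'\rR^* u_i''$; these are then reassembled into a term $s$ with $r\sigma\rR^* s$ and $\Phi\cup\{z_1\ded z_1,\ldots,z_q\ded z_q\}\vdash s$, contradicting the full strength of maximality. You instead take a minimal counterexample over the layer index and, in the recursive case, exhibit a lower-layer \textbf{A}-instance directly in $(\Phi,\Psi)$, so that the trichotomy \textbf{A.1}/\textbf{A.2}/\textbf{A.3} yields a contradiction either with ``only \textbf{A.3} is applicable'' or with minimality. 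This is logically equivalent to the paper's induction, uses maximality only in the zero-step sub-case, and avoids invoking Lemma~\ref{lem:comp} as a black box --- at the price of having to redo its decomposition argument inline.

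That inline argument is where the gap lies. Your ``key observation'' --- that each $l'$- or $y'$-component of $D'$ can be chosen as the $l'$-subterm sitting at a position occupied by some $t_i$, so that ``the required matching to $\im(\Phi)$ holds by construction'' --- is false in general. A variable $x$ of $l'$ may sit strictly \emph{above} the positions of the $t_i$ in $C_j[t_1,\ldots,t_{n+p},z_1,\ldots,z_q]$, so that $x\rho\sigma$ is, say, $f(t_1,t_2)$ for a public symbol $f$ of $C_j$: this term is none of the $t_i$, yet if $x\in\var(l'_1,\ldots,l'_{n'})$ it is a $y'$-component and rule \textbf{A} demands a fact $M\ded f(t_1,t_2)$ \emph{in} $\Phi$. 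The cure is not ``by construction'' but by saturation: since only \textbf{A.3} is applicable, $(\Phi,\Psi)$ is in particular saturated under the \textbf{B} rules, and $f(t_1,t_2)$ is a syntactically deducible subterm of some $l'_i\rho\sigma\in\im(\Phi)$, so Lemma~\ref{lem:compbasebase} (together with \textbf{B}-saturation) guarantees it already occurs in $\im(\Phi)$. The clean way to organize this is to apply Lemma~\ref{lem:decomp} of the appendix to $C_j[M_1,\ldots,M_{n+p},z_1,\ldots,z_q]\ded_{\Phi^+} l'\rho\sigma$, which also settles properness of $D'$: your justification (``$l'$ is a non-variable term'') rules out $l'\rho\sigma=z_k$ but not $l'\rho\sigma=t_i$ for some $t_i\in\im(\Phi)$; the latter is excluded because $t_i$ is $\R$-reduced whereas $l'\rho\sigma$ is an instance of a left-hand side. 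With these repairs (all of which are available in the paper's toolbox), your sub-case~B goes through and the proof is complete.
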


\begin{proof}
  By contradiction, let $(\Phi, \Psi)$ be a state from which $(\Phi,
  \Psi) \Longrightarrow \bot$ is the only applicable derivation, and
  let $l \to r$ be the rewrite rule involved in the corresponding
  instance of~\textbf{A.3}.
We prove the property by induction on the index $i \in \{0\ldots N\}$
  such that $l \to r \in \R_{i+1}-\R_{i}$.
Using the notations of Figure~\ref{fig:rules} for the instance of
\textbf{A.3} under consideration and the assumption on $\Ctx$,
we have that:
\begin{enumerate}[(a)]
\item for every $r\sigma \rR^* s$,\; $\Phi \cup \{z_1 \ded z_1, \ldots,
  z_q \ded z_q\} \not\vdash s$,\;and
\item $(r\sigma)\downR$ is not ground.
\end{enumerate}

In particular, (b) implies that $\var(r)$ is not included in
$\var(l_1,\ldots,l_n)$, otherwise we would have
\begin{eqnarray*}
\var((r\sigma)\downR) &\subseteq& \var(r\sigma) \subseteq
\var(\var(r)\sigma)\\ &\subseteq& \var(\var(l_1,\ldots,l_n)\sigma)
\subseteq \var(t_1,\ldots,t_n)=\emptyset
\end{eqnarray*}

By assumption on the decomposition $l = D[l_1, \ldots, l_n, y_1,
\ldots, y_p, z_1,\ldots, z_q]$ of $l \to r \in \R_{i+1}-\R_{i}$, we
deduce that there exists some contexts $C_0,\ldots,C_k$ and some
terms $s_1, \ldots, s_k$ such that:
\begin{itemize}
\item $r=C_0[s_1, \ldots, s_k]$;
\item for each $1 \leq i \leq k$, $C_i[l_1, \ldots, l_n, y_1,
  \ldots, y_p,z_1, \ldots, z_q]$ rewrites to $s_i$ in zero or one
  step of rewrite rule in head position along $\R_i$.
\end{itemize}
Let $C=C_0[C_1,\ldots,C_k]$ and $t_0=C[l_1, \ldots, l_n, y_1, \ldots,
y_p,z_1, \ldots, z_q]$. Note that $t_0 \to^*_{\R_i} r$.
If $t_0=r$, we obtain that $r\sigma = C[t_1, \ldots, t_{n+p}, z_1,
\ldots, z_q]$ is syntactically deducible from $\Phi \cup \{z_1 \ded
z_1, \ldots, z_q \ded z_q\}$, which contradicts (a). Hence $t_0
\to^+_{\R_i} r$, and in particular $i > 0$.

Let $\mu$ be a substitution mapping the variables $z_j$ to distinct
fresh public constants~$\cst a_j$. For each $1 \leq i \leq k$, let $u_i 
=C_i[l_1, \ldots, l_n, y_1, \ldots,
y_p,z_1, \ldots, z_q]\sigma\mu$ .
The term $u_i= C_i[t_1, \ldots, t_{n+p}, \cst a_1,
\ldots, \cst a_q]$ is syntactically deducible from $\Phi$, and reduces to
$u_i'=s_i\sigma\mu$ in zero or one step (in head position) along~$\R_i$.

By induction hypothesis on~$i-1$, no applicable rule \textbf{A.3} from
$(\Phi, \Psi)$ may involve a rule in $\R_i$. Besides, by assumption,
$(\Phi, \Psi)$ is saturated for the rules \textbf{B.1}, \textbf{B.2},
\textbf{A.1} and~\textbf{A.2}.
Therefore, Lemma~\ref{lem:comp} applied to $\Phi \vdash u_i$ and $u_i
\to_{\R_i} u_i'$ implies that there exists~$u_i''$ such that $u_i'\rR^* u_i''$
and $\Phi \vdash u_i''$. The same conclusion trivially holds if $u_i'=u_i$.
Let $s = C_0[u_1'', \ldots, u_k'']\mu^{-1}$ be the term obtained by
replacing each $\cst a_i$ by $z_i$ in $C[u_1'', \ldots, u_k'']$. Since the
$\cst a_i$ do not occur in $\R$ nor in $\Phi$, we deduce that~$s$ satisfies
$r\sigma = C_0[s_1\sigma, \ldots, s_k\sigma] = C_0[u_1', \ldots, u_k']\mu^{-1}
\rR^* s$ 
and $\Phi \cup \{z_1 \ded z_1, \ldots, z_q \ded
z_q\}\vdash s$, in contradiction with the condition (a) stated at the
beginning of the proof.\qed
\end{proof}

%
%


\subsection{Practical considerations.}
\label{sec:termination-fail}


Unfortunately, such a maximal $\Ctx$ is too inefficient in
practice as one has to consider the syntactic deducibility problem $\phi
 \vdash s$ for every $t \rR^* s$. Proposition~\ref{pro:maxequivsimple} below shows that the
 simple function context is actually sufficient to ensure non-failure when we
 know that another function $\Ctx$ already prevents failure on any state (reachable or
 not).

\begin{proposition}
\label{pro:maxequivsimple}
Let $\R$ be a convergent rewrite system and $\Ctx_0$ be an arbitrary function
$\Ctx$. If there exists no state
$(\Phi,\Psi)$ from which $(\Phi,\Psi)\Longrightarrow \bot$ is the only
applicable derivation when the function $\Ctx$ in use is $\Ctx_0$, then
there  exists no state
$(\Phi,\Psi)$ from which $(\Phi,\Psi)\Longrightarrow \bot$ is the only
applicable derivation for any choice of~$\Ctx$.
\end{proposition}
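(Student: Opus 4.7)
I plan to argue by contraposition. Suppose some valid context function $\Ctx$ (satisfying properties~(a) and~(b)) admits a state $(\Phi, \Psi)$ whose only applicable derivation is $(\Phi, \Psi) \Longrightarrow \bot$. I will exhibit a state $(\Phi, \tilde\Psi)$ whose only applicable derivation under $\Ctx_0$ is $\Longrightarrow \bot$, contradicting the hypothesis on $\Ctx_0$. The crucial observation, obtained from the contrapositive of property~(a), is the inclusion $\{(\phi, t) : \Ctx_0(\phi \vdash^?_\R t) = \bot\} \subseteq \{(\phi, t) : \phi \not\vdash t\downR\}$. Since the right-hand side is exactly the $\bot$-set of the simple $\Ctx$, this tells us that whenever $\Ctx_0$ returns $\bot$ the simple $\Ctx$ also does.

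A preliminary step will reduce to the case where $\Ctx$ is the simple $\Ctx$. Starting from a state stuck under an arbitrary $\Ctx$, I enrich $\Psi$ by the simple-A.1 equations and iteratively extend $\Phi$ via simple-A.2 (adding ground facts when applicable) until the state is closed under simple's non-A.3 rules; since simple-A.2 only adds facts corresponding to ground terms not yet deducible, and the original A.3 candidate $I^\star$ from the $\Ctx$-stuckness has the simple $\Ctx$ also returning $\bot$ on $r\sigma$ (by the observation above), I will argue that $I^\star$ remains an A.3-firing instance under simple in the enriched state. Having reduced to the simple case, I define $\tilde\Psi = \Psi \cup \{\forall z_1, \ldots, z_q.\; D[M_1, \ldots, M_{n+p}, z_1, \ldots, z_q] \qded M\}$ ranging over all applicable A instances at $\Phi$ where $M = \Ctx_0(\Phi \cup \{z_i \ded z_i\} \vdash^?_\R r\sigma) \neq \bot$; this set is finite.

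Verifying that $(\Phi, \tilde\Psi)$ is stuck under $\Ctx_0$ boils down to four points: (i) B rules remain self-loops since $\Phi$ is unchanged and $\tilde\Psi \supseteq \Psi$; (ii) A.1 under $\Ctx_0$ is a self-loop by construction; (iii) A.2 under $\Ctx_0$ is not applicable, because every instance on which $\Ctx_0$ returns $\bot$ is one on which the simple $\Ctx$ also returns $\bot$ (by the containment), and simple-stuckness forces $(r\sigma)\downR$ to be non-ground on those; and (iv) some A.3 fires under $\Ctx_0$. The main obstacle is point~(iv). On the A.3-firing instance $I^\star$ from simple, either $\Ctx_0$ is also $\bot$ at $I^\star$ (and A.3 fires under $\Ctx_0$ immediately), or $\Ctx_0$ returns a recipe $M$ for some intermediate term $s$ in the rewrite chain of $r\sigma$ with $s \neq (r\sigma)\downR$ and $s \rR^+ (r\sigma)\downR$ by convergence. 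In the latter subcase I plan to iterate Lemma~\ref{lem:comp} along this reduction chain, treating each $z_i$ as a fresh public constant so as to fit the lemma's state-based setting; each application either reaches $\bot$ immediately — already yielding a contradiction — or produces new recipe-term pairs for a strictly shorter subchain, and since B, A.1 and A.2 are already self-loops in $(\Phi, \tilde\Psi)$, the eventual non-self-loop step exposed by the iteration must be an A.3 firing under $\Ctx_0$.
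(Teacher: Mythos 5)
Your step~2 --- freezing the frame, closing $\Psi$ under the finitely many \textbf{A.1} equations for $\Ctx_0$, and then walking down the chain $r\sigma \rR^* s \rR^* (r\sigma)\downR$ with Lemma~\ref{lem:comp} after instantiating the $z_i$ by fresh public constants --- is essentially the paper's own argument, with the contradiction merely rerouted (you conclude that $(\Phi,\tilde\Psi)$ is stuck for $\Ctx_0$, the paper concludes $\Phi\cup\{z_i\ded z_i\}\vdash (r\sigma)\downR$ and invokes property~(a) against the original choice of $\Ctx$). That part is sound.

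The gap is the preliminary reduction to the simple $\Ctx$. First, ``iteratively extend $\Phi$ via simple-\textbf{A.2} until the state is closed'' has no termination guarantee: each ground fact added by \textbf{A.2} can enable new \textbf{B} and \textbf{A} instances, which can enable further \textbf{A.2} steps, and the non-termination of exactly this saturation is the subject of Section~\ref{sec:termination} (cf.\ the rule $\f(\g(x))\to\g(\h(x))$). Without termination you have no finite state to hand to step~2. Second, even granting termination, the claim that $I^\star$ ``remains an \textbf{A.3}-firing instance under simple in the enriched state'' is asserted, not proved, and it is precisely what is in doubt: the facts added by \textbf{A.2} enlarge the set of syntactically deducible terms, so the enriched frame $\Phi'$ may satisfy $\Phi'\cup\{z_i\ded z_i\}\vdash (r\sigma)\downR$ for $I^\star$, turning it into an \textbf{A.1} instance for the simple $\Ctx$; nothing guarantees that any \textbf{A.3} instance survives, i.e.\ that the enriched state is stuck for the simple $\Ctx$ at all. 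The paper sidesteps both problems by never modifying the frame: it saturates only the equation set, with \textbf{A.1} under $\Ctx_0$ (finitely many \textbf{A} instances exist over a fixed frame, and \textbf{A.1} triggers no other rule), uses the hypothesis on $\Ctx_0$ to conclude that the resulting state $(\Phi_0,\Psi_0')$ is saturated, and thereby excludes the $\bot$ branch of Lemma~\ref{lem:comp} outright. Note also that your point~(iv) never actually needs the simple $\Ctx$: since $I^\star$ is an \textbf{A.3} instance for the original $\Ctx$, property~(a) already gives $\Phi\cup\{z_i\ded z_i\}\not\vdash(r\sigma)\downR$. The detour is only doing work in point~(iii), so the cleanest repair is to drop it and rule out \textbf{A.2} under $\Ctx_0$ directly from the stuckness of the original state.
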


\begin{proof}
Let $\Ctx_0$ and $\Ctx'_0$ be two arbitrary functions $\Ctx$  ({\it i.e.} they satisfy
properties~(\ref{item:ctx_sufficient}) and~(\ref{item:ctx_correct})).
Assume that there exists no state $(\Phi, \Psi)$ from which  $(\Phi,\Psi)\Longrightarrow \bot$ is the only
applicable derivation when the function $\Ctx$ in use is $\Ctx_0$. Assume by
contradiction that there exists a state $(\Phi_0, \Psi_0)$ from which 
$(\Phi_0,\Psi_0)\Longrightarrow \bot$ is the only
applicable derivation for~$\Ctx'_0$. 
This means that there exist:
\begin{itemize}
\item a rewrite rule $l \to r \in \R$,
\item  a proper decomposition
$D[l_1,\ldots, l_n,y_1,\ldots,y_p, z_1, \ldots,z_q]$ of $l$,
\item some deduction facts $M_1 \ded t_1, \ldots, M_{n+p} \ded t_{n+p} \in
  \Phi_0$, and
\item a substitution $\sigma$ such that $(l_1, \ldots, l_n,y_1,\ldots, y_p)\sigma = (t_1,\ldots, t_{n+p})$. 
\end{itemize}
Moreover, since this instance corresponds to an instance of \textbf{A.3}, 
we have that $r\sigma\downR$ is not ground. When the function $\Ctx$ in use is
$\Ctx_0$, this instance has to correspond to an instance of \textbf{A.1} 
(\textbf{A.2} and \textbf{A.3} are impossible).
Hence, we have that $\Ctx_0(\Phi_0 \cup \{z_1 \ded z_1, \ldots, z_q \ded z_q\}
\vdash_{\R}^{?} r\sigma) \neq \bot$. This means that there exists~$s$ 
such that $r\sigma \to_{\R}^* s$ and
$\Phi_0 \cup \{z_1 \ded z_1, \ldots, z_q \ded z_q\} \vdash s$. 
Since $\R$ is convergent, we have that $s \to_{\R}^* r\sigma\downR$.

Let $\mu$ be a substitution mapping the variables $z_j$ to distinct fresh
public constants~${\cst a_j}$. We have that
$s\mu \to_{\R}^* (r\sigma\downR)\mu$ and also that $\Phi_0 \vdash s\mu$.
Since $(\Phi_0,\Psi_0)\Longrightarrow \bot$ is the only
applicable derivation for~$\Ctx'_0$, the rules \textbf{A.2}, \textbf{B.1}, and \textbf{B.2} cannot be applicable, even for~$\Ctx_0$. We saturate $(\Phi_0,\Psi_0)$ with the $\textbf{A.1}$ rule for~$\Ctx_0$, reaching a state of the form $(\Phi_0,\Psi_0')$ since only equations can be added to the state. Note also that the $\textbf{A.1}$ rule can only be applied a finite a number of time and does not trigger the other rules. Thus $(\Phi_0,\Psi_0')$ is saturated for~$\Ctx_0$. 
Using  Lemma~\ref{lem:comp} (with the function $\Ctx_0$), we obtain
that $\Phi_0 \ded (r\sigma\downR)\mu$, and thus $\Phi_0 \cup \{z_1 \ded z_1,
\ldots, z_q \ded z_q\} \vdash r\sigma\downR$.
This contradicts the fact that \textbf{A.1}
does not apply  on $(\Phi_0, \Psi_0)$ when the function $\Ctx$ in use is
$\Ctx'_0$. Hence, the result.
\qed
\end{proof}

\begin{corollary}
Let $\R$ be a layered convergent rewrite system and consider an arbitrary
function $\Ctx$ in use. There exists no state
$(\Phi,\Psi)$ from which $(\Phi,\Psi) \Longrightarrow \bot$ is the only
applicable derivation.
\end{corollary}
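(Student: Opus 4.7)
The plan is to derive this corollary as an immediate consequence of the two preceding results, Proposition~\ref{prop:criterion} and Proposition~\ref{pro:maxequivsimple}. The strategy is to exhibit at least one admissible choice of $\Ctx$ for which non-failure is already established, and then transfer the property to any $\Ctx$ via Proposition~\ref{pro:maxequivsimple}.

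First, I would observe that a maximal function $\Ctx_0$ does exist. Indeed, one can define $\Ctx_0(\phi \vdash^?_\R t)$ by searching through all terms $s$ with $t \rR^* s$ and $\phi \vdash s$: whenever such an $s$ exists, return a public context $M$ witnessing $M \ded_\phi s$ (there are only finitely many such syntactic derivations to consider once $s$ is fixed), and otherwise return $\bot$. By construction this function satisfies property~(\ref{item:ctx_correct}), and it satisfies property~(\ref{item:ctx_sufficient}) because $\phi \vdash t\downR$ with $t \rR^* t\downR$ provides in particular such an $s$; the stronger requirement of maximality then holds by definition.

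Second, since $\R$ is assumed layered convergent, Proposition~\ref{prop:criterion} applied to this maximal $\Ctx_0$ directly yields that there is no state $(\Phi,\Psi)$ from which $(\Phi,\Psi) \Longrightarrow \bot$ is the only applicable derivation when $\Ctx_0$ is in use. Finally, Proposition~\ref{pro:maxequivsimple}, which transfers non-failure from any particular admissible function to an arbitrary one, lets us conclude for the $\Ctx$ fixed in the corollary.

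There is essentially no obstacle to overcome: the only mildly subtle point is checking that a maximal $\Ctx_0$ truly satisfies properties (\ref{item:ctx_sufficient}) and (\ref{item:ctx_correct}) so that it qualifies as a legitimate choice and may be plugged into both propositions. Once this is in place, the corollary is a one-line composition of the two earlier propositions.
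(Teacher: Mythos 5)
Your proposal is correct and follows exactly the route the paper intends: exhibit a maximal $\Ctx_0$ (checking it satisfies properties (\ref{item:ctx_sufficient}) and (\ref{item:ctx_correct})), apply Proposition~\ref{prop:criterion} to it, and transfer to an arbitrary $\Ctx$ via Proposition~\ref{pro:maxequivsimple}. The explicit verification that a maximal function qualifies as a legitimate $\Ctx$ is a worthwhile detail the paper leaves implicit.
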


\section{Termination}
\label{sec:termination}
In the previous section, we have described a sufficient criterion for
non-failure. As shown by the example given below, 
this criterion does not ensure the termination of our saturation
procedure.

\begin{example}
Consider the following layered convergent rewrite system $\f(\g(x)) \to \g(\h(x))$ where $\f$ is
a public function symbol whereas $\g$ and $\h$ are private function symbols.
Let $\varphi = \{\w_0 \ded \g(\cst a)\}$ where $\cst a$ is a private constant.
By repeatedly applying the \textbf{A} rule on the newly generated deduction
fact, we generate an infinite number of deduction facts of the form:
\[
\f(\w_0) \ded \g(\h(a)),  \; \f(\f(\w_0)) \ded \g(\h(\h(a)), \;
\f(\f(\f(\w_0))) \ded \g(\h(\h(\h(a))), \; \ldots
\]
\end{example}

To obtain decidability for a
given layered convergent theory, there remains only to provide a 
termination argument.
Such an argument is generally easy to develop by hand as we illustrate
on the example of the prefix theory. For the case of existing
decidability results from~\cite{AbadiCortierTCS06}, such as the
theories of blind signature and homomorphic encryption, we also
provide a semantic criterion that allows us to directly conclude
termination of the procedure.
Note that this semantic criterion does not apply only to layered convergent theories but to any convergent theories (for which failure is guaranteed not to happen).

\subsection{Termination of \textbf{B}~rules}

To begin with, we note that 
\textbf{B}~rules always terminate after a polynomial number of
steps. Let us write $\Longrightdotarrow^n$ for the relation made of
exactly $n$ \emph{strict applications} of rules ($S \Longrightdotarrow
S'$ iff $S \Longrightarrow S'$ and $S \neq S'$).

\begin{proposition} \label{prop:B_termination} For every states
  $S=(\Phi,\Psi)$ and~$S'$ such that
  $S\Longrightdotarrow^n S'$ using only \textbf{B}
  rules, $n$ is polynomially bounded in the 
size of $\im(\Phi)$.
\end{proposition}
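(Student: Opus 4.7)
The plan is to exhibit a structural invariant preserved by \textbf{B}~rules, and then use it to bound separately the number of strict applications of \textbf{B.2} (which extend $\Phi$) and \textbf{B.1} (which extend $\Psi$).

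First, I will establish the invariant: if $(\Phi, \Psi) \Longrightarrow^* (\Phi', \Psi')$ using only \textbf{B}~rules, then $\im(\Phi') \subseteq \st(\im(\Phi))$. The proof is by induction on the length of the derivation. Rule~\textbf{B.1} leaves $\Phi$ unchanged, so the invariant is trivially preserved. Rule~\textbf{B.2} adds a single fact $f(M_1,\ldots,M_n) \ded t$ where, by the side condition, $t \in \st(t_0)$ for some $t_0 \in \im(\Phi')$; by the induction hypothesis $t_0 \in \st(\im(\Phi))$, hence $t \in \st(\im(\Phi))$ as well.

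Next, I will bound the number of strict \textbf{B.2} applications. By its premise, \textbf{B.2} is only applied when no fact with right-hand side $t$ is yet in $\Phi'$, so each strict application of \textbf{B.2} increases $|\im(\Phi')|$ by exactly one while preserving one-to-oneness (the fresh recipe is attached to a fresh plain term). Combined with the invariant, this gives a bound of $|\st(\im(\Phi))| - |\im(\Phi)|$ strict \textbf{B.2} applications, which is at most the total number of subterm occurrences in $\im(\Phi)$, and thus linear in the size of $\im(\Phi)$.

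The third step is to bound the number of strict \textbf{B.1} applications. Each such step adjoins a syntactically new equation $f(M_1, \ldots, M_n) \qded M$ to $\Psi'$. The key observation is that, since $\Phi'$ remains one-to-one throughout the derivation, the recipes $M_1, \ldots, M_n$ are uniquely determined by the plain terms $t_1, \ldots, t_n$, and $M$ is uniquely determined by $t$; moreover the equation does not depend on the witness $t_0$. Hence the equation produced by \textbf{B.1} is entirely determined by the subterm $t = f(t_1, \ldots, t_n) \in \st(\im(\Phi))$ of the form ``$\Fpub$-head applied to elements of $\im(\Phi')$'', giving at most $|\st(\im(\Phi))|$ distinct new equations.

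Summing the two bounds, $n \leq 2\,|\st(\im(\Phi))|$, which is polynomial (indeed linear) in the size of $\im(\Phi)$. The only subtle point I anticipate is the uniqueness argument in the \textbf{B.1} bound: I need to observe carefully that although the rule fires with a witness $t_0$ and a displayed context depending on several facts, the resulting equation mentions only the $M_i$ and $M$, each of which is pinned down by one-to-oneness of $\Phi'$, so distinct firings with the same $t$ yield the same equation and therefore at most one strict application.
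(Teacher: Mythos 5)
Your proof is correct and follows exactly the argument the paper sketches in one sentence after the proposition (one-to-oneness of frames plus the fact that \textbf{B.2} only adds facts whose plain term is a subterm of $\im(\Phi)$); you have merely filled in the details, including the right observation that one-to-oneness pins down each equation added by \textbf{B.1} to the subterm $t$ it concerns.
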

%
%
This is due to the fact that frames are one-to-one and that the rule
\textbf{B.2} only adds deduction facts $M\ded t$ such that $t$ is a
subterm of an existing term in 
$\Phi$.

\subsection{Proving termination by hand.} 
%

For proving termination, we observe that it is sufficient
to provide a function $s$ mapping each frame $\Phi$ to a finite set of
terms $s(\Phi)$ including the subterms of $\im(\Phi)$ and
such that rule~\textbf{A.2} only adds deduction facts $M\ded t$
satisfying $t\in s(\Phi)$.

\medskip{}

For subterm theories, we obtain polynomial termination by choosing
$s(\Phi)$ to be the subterms of $\im(\Phi)$ together with the
ground right-hand sides of $\R$.

\begin{proposition}
  Let $\E$ be a 
weakly subterm convergent theory. For every 
  \mbox{$S=(\Phi,\Psi)$} and $S'$ such that $S\Longrightdotarrow^n S'$, $n$ is
  polynomially bounded in the 
size of $\im(\Phi)$.
\end{proposition}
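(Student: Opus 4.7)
The plan is, following the hint in the paragraph above, to define a set $s(\Phi)$ of terms such that (i) $|s(\Phi)| = O(|\im(\Phi_0)|)$ throughout, where $\Phi_0$ is the initial frame, and (ii) every term added to the frame by rule~\textbf{A.2} lies in $s(\Phi)$. I would take $s(\Phi) = \st(\im(\Phi)) \cup G$ where $G$ is the (fixed, finite) set of right-hand sides of rules of $\R$ that are ground. Property~(i) then follows from~(ii) because any new term added to $\im(\Phi)$ via \textbf{A.2} is by~(ii) already a subterm of an earlier image term or in $G$, so $\st(\im(\Phi))$ never grows beyond the initial $\st(\im(\Phi_0)) \cup \st(G)$, and $\st(G)$ is a constant depending only on $\R$.

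The combinatorial core is the invariance~(ii). I would prove it by case analysis on the rule $l \to r \in \R$ triggering \textbf{A.2} and on the position $\alpha$ of $r$ in the decomposition $l = D[l_1, \ldots, l_n, y_1, \ldots, y_p, z_1, \ldots, z_q]$, exploiting weak subterm convergence. If $r$ is ground and $\R$-reduced, then $(r\sigma)\downR = r \in G$. Otherwise $r \in \st(l)$. If $\alpha$ lies at or below a parameter of $D$, then $r$ is a subterm of some $l_i$ (or equals a variable $y_j \in \var(l_1, \ldots, l_n)$, which itself occurs inside some $l_i$), so $r\sigma$ is a subterm of $t_i = l_i\sigma \in \im(\Phi)$; since $t_i$ is $\R$-reduced so is $r\sigma$, giving $(r\sigma)\downR = r\sigma \in \st(\im(\Phi))$. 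The subcase $r = z_k$ cannot trigger \textbf{A.2} because $(z_k\sigma)\downR = z_k$ is not ground.

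The delicate subcase, which I expect to be the main obstacle, is when $\alpha$ lies strictly above some parameter of $D$: then $r\sigma$ has a mixed form $D'[t_{i_1}, \ldots, t_{i_m}, \ldots]$ where $D'$ is a ground public sub-context of $D$. Such an $r\sigma$ is ground (since \textbf{A.2} requires $(r\sigma)\downR$ ground, which rules out occurrences of any $z_k$ in $r$) but is neither necessarily $\R$-reduced nor \emph{a~priori} a subterm of $\im(\Phi)$. I would handle this by inducting along the reduction chain $r\sigma \to_\R^* (r\sigma)\downR$, using that every step employs a weakly subterm rule $l'' \to r''$ and so either (a) replaces a redex $l''\tau$ by some $r''\tau \in \st(l''\tau)$ when $r'' \in \st(l'')$, or (b) replaces it by a fixed ground element of $G$. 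A careful bookkeeping of the intermediate terms shows that the maximal subterms of $(r\sigma)\downR$ lying outside $G$-pieces are subterms of the $t_i$'s, placing $(r\sigma)\downR$ in $\st(\im(\Phi)) \cup G = s(\Phi)$. Should the direct induction prove cumbersome, a fallback is to enlarge $s(\Phi)$ to include the (still polynomially many) ground terms obtained by plugging elements of $G$ into the subterm skeletons of left-hand sides of $\R$; this accommodates the mixed structure without affecting the final count.

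With (i) and (ii) established, the counting closes the proof. Since frames remain one-to-one, \textbf{A.2} fires at most $O(|\im(\Phi_0)|)$ times. Rule \textbf{A.1} only adds quantified equations, and each new equation is determined by one of the finitely many decompositions $D$ of rules of $\R$ together with a tuple $(M_1, \ldots, M_{n+p})$ chosen from $\Phi$, with $n+p$ bounded by the maximal arity of left-hand sides in $\R$; this yields at most $O(|\Phi|^{c})$ equations for some fixed $c$, polynomial in $|\im(\Phi_0)|$ since $|\Phi|$ itself is. Finally, by the preceding proposition on \textbf{B}-termination, the number of \textbf{B} steps between successive \textbf{A} steps is polynomial in the current $|\im(\Phi)|$, which remains polynomial throughout. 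Summing these three bounds gives the claimed polynomial bound on $n$.
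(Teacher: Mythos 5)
Your overall strategy (take $s(\Phi)=\st(\im(\Phi))\cup G$ with $G$ the set of ground right-hand sides, show that \textbf{A.2} only adds terms of $s(\Phi)$, then count) is exactly the one the paper sketches, and your treatment of the easy subcases is fine. The gap is in your ``delicate subcase,'' and it is real: the bookkeeping along $r\sigma \to^*_{\R} (r\sigma)\downR$ only establishes that the normal form is a \emph{public context applied to} terms of $\st(\im(\Phi))\cup\st(G)$, and such a term need not itself belong to $\st(\im(\Phi))\cup G$ when that context is nontrivial. Concretely, take $\R=\{\f(\g(\h(x)))\to\g(\h(x))\}$ with $\f,\g$ public and $\h$ private (a weakly subterm convergent system), and $\Phi=\{\w_1\ded \h(\cst c)\}$ with $\cst c$ a private constant: the decomposition $D=\f(\g(\w_1))$ with $l_1=\h(x)$ yields $(r\sigma)\downR=r\sigma=\g(\h(\cst c))$, which is ground and $\R$-reduced but lies neither in $\st(\im(\Phi))$ nor in $G$. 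So no purely rewriting-based argument can place $(r\sigma)\downR$ in $s(\Phi)$, and your fallback (plugging $G$-elements into skeletons of left-hand sides) does not help either, since the offending leaves here are subterms of the frame, not elements of $G$.

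What saves the invariant --- and what your proof never uses --- is the dichotomy between \textbf{A.1} and \textbf{A.2}. In the spine subcase one has $r\sigma = D'[t_{i_1},\ldots,z_{k_1},\ldots]$ where $D'$ is a \emph{public} subcontext of $D$ and each $t_{i_j}$ is carried by a recipe $M_{i_j}$ in $\Phi$; hence $r\sigma$ is syntactically deducible from $\Phi\cup\{z_1\ded z_1,\ldots,z_q\ded z_q\}$. When $r\sigma$ is already $\R$-reduced (as in the example above), property (a) of $\Ctx$ then forces rule \textbf{A.1} to apply, so \textbf{A.2} never fires and there is nothing to put into $s(\Phi)$. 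The remaining case, where the spine term $r\sigma$ is reducible, still requires an argument (any such redex sits at a non-variable, non-root position of $l$, hence corresponds to a critical overlap of $\R$ with $l$, and this is where convergence must be invoked); but without the \textbf{A.1}-preemption step your invariant (ii) cannot be established as stated, so the proof as written does not go through.
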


To conclude that deduction and static equivalence are decidable in
polynomial time~\cite{AbadiCortierTCS06}, we need to show that the
deduction facts and the equations are of polynomial size. This
requires a DAG representation for terms and visible equations. For our
implementation, we have chosen not to use DAGs for the sake of simplicity 
since DAGs require much heavier data structures. However,
similar techniques as those described in~\cite{AbadiCortierTCS06}
would apply to implement our procedure using DAGs.

\smallskip{}

For proving termination for the prefix theory~$\E_\pref$, it suffices
to consider $s(\phi) = \stext(\Phi)$,
where the notion of extended subterm is recursively defined as
follows:
\begin{itemize}
\item $\stext(a) =  \{a\} \;\;\; \mbox{ if $a$ is a constant or a variable}$
\item $\stext(f(t_1,\ldots,t_n))  =  \{f(t_1,\ldots,t_n)\}\cup\bigcup_{i=1}^n{\stext(t_i)}
\;\, f\in\{\dec,\langle,\rangle,\proj_1,\proj_2,\pref\}$
\item $\stext(\enc(t,u))  =  \{\enc(t,u),\enc(t_1,u)\}\cup\stext(t)\cup\stext(u)
\;\;\; \mbox{ if }t = \langle t_1,t_2\rangle$
\item $\stext(\enc(t,u)) = \{\enc(t,u)\}\cup\stext(t)\cup\stext(u)
\;\;\; \mbox{ otherwise}$.
\end{itemize}

\begin{proposition}
Consider the prefix theory~$\E_\pref$. 
For every \mbox{$S=(\Phi,\Psi)$} and~$S'$ such that $S\Longrightdotarrow^n
S'$, $n$ is polynomially bounded in the 
size of $\im(\Phi)$.
\end{proposition}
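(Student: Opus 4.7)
The plan is to apply the general recipe outlined just before the proposition: exhibit a function $s$ assigning to each frame $\Phi$ a finite set of terms $s(\Phi)$ that contains $\st(\im(\Phi))$, is polynomially bounded in the size of $\im(\Phi)$, and such that every deduction fact $M \ded t$ added by rule \textbf{A.2} satisfies $t \in s(\Phi)$. For the prefix theory I take $s(\Phi) = \stext(\Phi)$, using the definition of extended subterm given just before the statement.

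First, a routine induction on term structure yields $|\stext(t)| \leq 2\,|\st(t)|$: only the clause for $\enc(t,u)$ with $t = \langle t_1, t_2\rangle$ exceeds the standard subterm count, by adding the single extra term $\enc(t_1, u)$. Hence $|\stext(\Phi)|$ is linear in the size of $\im(\Phi)$.

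Second, and this is the core of the proof, one verifies stability under \textbf{A.2} by case analysis on the rewrite rule $l \to r$ of $\E_\pref$ and on the proper decomposition $l = D[l_1, \ldots, l_n, y_1, \ldots, y_p, z_1, \ldots, z_q]$ used by the A.2 instance. For the three rules inherited from $\E_\enc$, namely $\dec(\enc(x,y),y) \to x$ and the two projection rules, $r$ is a strict subterm of $l$; a short check of each decomposition producing a ground $(r\sigma)\downR$ shows that $r\sigma$ is a subterm of some $t_i$, and thus lies in $\st(\Phi) \subseteq \stext(\Phi)$. For the new rule $\pref(\enc(\langle x, y\rangle, z)) \to \enc(x, z)$, the requirement that $(r\sigma)\downR$ be ground forces both $x$ and $z$ to lie in $\dom(\sigma)$, which excludes every decomposition where $z$ appears as an exposed $z_k$-variable. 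The only surviving case is $D = \pref(\w_1)$ with $l_1 = \enc(\langle x, y\rangle, z)$; here the deduction fact $M_1 \ded t_1$ yields $t_1 = \enc(\langle u, v\rangle, w) \in \im(\Phi)$, so $(r\sigma)\downR = \enc(u, w)$, which is precisely the extra term introduced by the $\enc$-clause of $\stext$ applied to $t_1$, and therefore lies in $\stext(\Phi)$.

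Third, the polynomial bound on $n$ follows by a counting argument similar to the subterm case. By induction on the derivation, every deduction fact $M \ded t$ ever added to the frame has $t$ in the closure of $\im(\Phi)$ under $\stext$, and iterating $\stext$ strictly decreases the leftmost-pair-depth of each newly generated $\enc$-term, so this closure has size polynomial in $|\im(\Phi)|$. One-to-one-ness of the frame then bounds the number of strict \textbf{A.2} applications by this size; strict \textbf{A.1} applications are bounded by the polynomial number of combinations of rule, decomposition and tuples of matching facts in the resulting (polynomial-size) frame; and B-rule applications are controlled by Proposition~\ref{prop:B_termination}. The main obstacle is the enumeration in the A.2-stability step, in particular to rule out the decomposition $D = \pref(\enc(\w_1, \w_2))$ with $l_1 = \langle x, y\rangle$, where $z$ becomes a $z_1$-variable and $r\sigma = \enc(x\sigma, z_1)$ is non-ground, so that \textbf{A.2} does not fire there.
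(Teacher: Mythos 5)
Your proof is correct and follows exactly the route the paper intends: instantiate the general recipe with $s(\Phi)=\stext(\Phi)$, verify stability under \textbf{A.2} by enumerating the decompositions of each rule (for the prefix rule only $D=\pref(\w_1)$ can trigger \textbf{A.2}, since every other decomposition leaves $z$ as an exposed $z_k$-variable and hence $(r\sigma)\downR$ non-ground, and that case produces precisely the extra term $\enc(t_1,u)$ provided by the $\enc$-clause of $\stext$), and then count using one-to-one-ness and Proposition~\ref{prop:B_termination}. Your observation that $\stext$ as literally defined is not idempotent --- so that one must pass to its closure, which remains polynomial because iteration only descends the leftmost pair spine of nested ciphertexts --- is a genuine and necessary refinement of a point the paper leaves implicit.
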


We then deduce that deduction
and static equivalence are decidable for the equational theory
$\E_\pref$, which is a new decidability result.

\begin{corollary} 
Deduction and static equivalence are decidable in polynomial time for the equational theory
$\E_\pref$.
\end{corollary}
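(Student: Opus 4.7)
The plan is to assemble three ingredients already established in the paper. First, the rewrite system associated with $\E_\pref$ has been shown to be layered convergent, so by the corollary following Proposition~\ref{pro:maxequivsimple} the saturation procedure never fails, regardless of the chosen function $\Ctx$. Second, the preceding proposition bounds the number of strict rule applications $S \Longrightdotarrow^n S'$ polynomially in the size of $\im(\Phi)$ (combined with Proposition~\ref{prop:B_termination} for the intermediate blocks of \textbf{B}~rules). Third, Theorem~\ref{theo:soundcomp} tells us that once a saturated state $(\Phi,\Psi)$ is reached, the algorithms described immediately after it correctly decide deduction and static equivalence.

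Putting these together, starting from $\Init(\varphi)$ we reach a saturated state in a polynomial number of steps, and the deduction algorithm then reduces to testing syntactic deducibility of $t\downR$ from $\Phi$, while the static-equivalence algorithm performs, for each of the polynomially many quantified equations in $\Psi_1 \cup \Psi_2$, a pair of rewrite-normalization and syntactic comparison checks against the other frame. All of these post-saturation tests are polynomial in the size of the saturated states.

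The main obstacle is the polynomial bound on the \emph{size} of the objects manipulated during saturation, not just on the number of steps. A naive tree representation of the recipes added by rule~\textbf{A.2}, namely $D[M_1,\ldots,M_{n+p},\cst a,\ldots,\cst a]$, can cause an exponential blowup when an $M_i$ itself has been produced by an earlier \textbf{A.2} application. As already observed after the subterm case, this is resolved by implementing the procedure over a DAG representation of terms and recipes, so that each newly introduced recipe shares the occurrences of its subrecipes and thus contributes only polynomially to the total size; the function $\Ctx$ chosen as in Section~\ref{sec:fail} (syntactic deducibility modulo the empty theory, solved by induction on $t\downR$) can be evaluated in polynomial time over such a DAG, and computing $R$-normal forms of terms in $\E_\pref$ is polynomial as well.

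Combining the polynomial bound on the number of saturation steps with the polynomial bound on the size of each step, we conclude that both decision procedures run in polynomial time, which yields the corollary.
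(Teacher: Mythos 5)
Your proposal is correct and follows essentially the same route as the paper: non-failure from layered convergence of $\E_\pref$, polynomial bound on the number of saturation steps via the extended-subterm argument, correctness of the post-saturation tests from Theorem~\ref{theo:soundcomp}, and a DAG representation to keep recipe sizes polynomial (a point the paper itself raises when discussing the subterm case). Your explicit treatment of the potential size blowup from rule~\textbf{A.2} is a welcome elaboration of what the paper leaves implicit.
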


Similarly, we may retrieve decidability 
of deduction and static
equivalence for~$\E_\homo$ and~$\E_\blind$. However, we provide another
criterion that allows one to derive these facts from existing results.


\subsection{A semantic criterion} We now provide a
semantic criterion that more generally explains why our procedure
succeeds on theories previously known to be
decidable~\cite{AbadiCortierTCS06}. This criterion intuitively states that the set
of deducible terms from any initial frame $\varphi$
should be equivalent to a set of \emph{syntactically} deducible terms.
Provided that failures are prevented and assuming a \emph{fair}
strategy for rule application, we prove that this criterion is a
necessary and sufficient condition for our procedure to terminate.

\begin{definition}[Fair derivation]\label{def:fair}
  An infinite derivation 
\[(\Phi_0, \Psi_0) 
\Longrightarrow \ldots \Longrightarrow (\Phi_n,
  \Psi_n) \Longrightarrow \ldots
\]
  is \emph{fair} iff along this derivation,
  \begin{enumerate}[(a)]
  \item \textbf{B} rules are applied with greatest priority, and
  \item whenever a \textbf{A} rule is applicable for some instance
    $(l \to r, D, t_1,\ldots,t_n,\ldots)$, eventually the same
    instance of rule is applied during the derivation.
  \end{enumerate}
\end{definition}

Fairness implies that any deducible term is eventually syntactically
deducible. This result follows from Lemma~\ref{lem:compbase} and Lemma~\ref{lem:comp}.

\begin{lemma}
\label{lem:ded}
Let $S_0 = (\Phi_0, \Psi_0) \Longrightarrow \ldots \Longrightarrow
(\Phi_n, \Psi_n) \Longrightarrow \ldots$ be an infinite fair
derivation from a state $S_0$. For every ground term $t$ such that $\Phi_0
\vdash_\E t$, either $(\Phi_0, \Psi_0) \Longrightarrow^*
\bot$ or there exists $i$ such that $\Phi_i \vdash t\downR$.
\end{lemma}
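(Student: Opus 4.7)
The plan is to prove a strengthened statement by induction: for every index $j \geq 0$, every ground term $u$ with $\Phi_j \vdash u$, and every $k$ such that $u \to^*_\R t\downR$ via a reduction of length $k$, either $(\Phi_0, \Psi_0) \Longrightarrow^* \bot$ or there exists $i \geq j$ with $\Phi_i \vdash t\downR$. Applying this at $j=0$, with $u = t_0$ the plain term of any recipe witnessing $\Phi_0 \vdash_\E t$ (which satisfies $t_0 \to^*_\R t\downR$ by convergence of $\R$), yields the lemma.

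The base case $k=0$ is immediate (take $i=j$ since $u = t\downR$). For the inductive step, pick $u \to_\R u'$ on a shortest rewriting path to $t\downR$ and apply Lemma~\ref{lem:comp} to the recipe witnessing $\Phi_j \vdash u$ and to this step, at state $(\Phi_j, \Psi_j)$. Either $(\Phi_j, \Psi_j) \Longrightarrow^* \bot$, and hence $(\Phi_0, \Psi_0) \Longrightarrow^* \bot$ and we are done, or Lemma~\ref{lem:comp} provides a finite auxiliary derivation $(\Phi_j, \Psi_j) \Longrightarrow^* (\Phi^*, \Psi^*)$ consisting of some \textbf{B} rules possibly followed by a single \textbf{A} rule, together with a term $u''$ such that $\Phi^* \vdash u''$ and $u' \to^*_\R u''$. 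By confluence of $\R$, then $u'' \to^*_\R t\downR$ in strictly fewer than $k$ steps.

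The key use of fairness, and the main obstacle, is to transport this auxiliary derivation into the actual fair sequence: I must exhibit some $j' \geq j$ with $\Phi_{j'} \vdash u''$. The auxiliary derivation is a finite sequence of specific rule instances. By fairness condition~(a), \textbf{B} rules are applied with greatest priority, and by Proposition~\ref{prop:B_termination} they terminate after finitely many steps; hence every \textbf{B}-instance available at $(\Phi_j, \Psi_j)$ is executed within a bounded number of steps of the fair derivation. The single \textbf{A} instance of the auxiliary derivation, if any, remains applicable at each later state of the fair sequence (its premises are plain terms that are monotonically preserved and, if needed, produced by the intervening \textbf{B} saturation) and is eventually applied by fairness condition~(b). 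At the resulting index $j'$, every plain term of $\Phi^*$ is syntactically deducible from $\Phi_{j'}$; the recipes may differ from those of $\Phi^*$, but since frames are one-to-one, this does not affect syntactic deducibility of $u''$. The induction hypothesis, applied at $j'$ to $u''$ with a shorter reduction, yields the desired index $i \geq j' \geq j$ with $\Phi_i \vdash t\downR$.
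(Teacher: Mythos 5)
Your overall architecture is the paper's: take a plain term $u$ witnessing $\Phi_0 \vdash_\E t$, peel off one rewrite step at a time via Lemma~\ref{lem:comp}, use fairness~(a) together with Proposition~\ref{prop:B_termination} to reach a \textbf{B}-saturated state, and use fairness~(b) plus the fact that an \textbf{A}-instance is determined by its plain-term data $(l\to r, D, t_1,\ldots,t_n,\ldots)$ to conclude that the relevant instance eventually fires in the actual derivation. There is, however, one genuine gap, plus a step you should reorganize.

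The gap is the induction measure. You induct on the length $k$ of a (shortest) reduction $u \to^*_{\R} t\downR$ and claim that ``by confluence'' the term $u''$ produced by Lemma~\ref{lem:comp} reaches $t\downR$ in strictly fewer than $k$ steps. This does not follow: you only know $u \to_{\R} u' \to^*_{\R} u''$, and the shortest reduction from $u''$ to the normal form can be longer than $k$ --- nothing prevents $u''$ from lying on a long detour out of $u'$. As stated, the inductive step cannot be closed. The fix is what the paper does: since $\R$ is terminating, the relation $v < v'$ iff $v' \to^+_{\R} v$ is well founded, and $u'' \le u' < u$ gives exactly the strict decrease needed; induct on that order rather than on the length of one particular reduction.

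The second point is more presentational. By applying Lemma~\ref{lem:comp} once at $(\Phi_j,\Psi_j)$ and then ``transporting'' its auxiliary derivation, you must argue that the \emph{specific} term $u''$ it returns is still syntactically deducible after the fair derivation has performed its own (different) \textbf{B}-saturation and fired the \textbf{A}-instance; in the \textbf{A.1} case $u''$ depends on what $\Ctx$ returns, which may differ between the auxiliary state and the state where the instance actually fires. The paper sidesteps this by invoking Lemma~\ref{lem:comp} twice: once at the \textbf{B}-saturated state $(\Phi_{i_1},\Psi_{i_1})$, where its derivation reduces to ``either some $t''$ with $t'\to^* t''$ is already deducible, or a single \textbf{A}-instance is applicable'', and once more at the state $(\Phi_{i_2},\Psi_{i_2})$ where fairness fires that instance, accepting whatever (possibly different) $t''$ comes out --- harmless once the induction is on the well-founded rewriting order. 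With the corrected measure and this reorganization, the rest of your argument goes through.
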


\begin{proof}
Let $t$ be a ground term deducible from $\Phi_i$ modulo~$\E$.
There exists $t_0$ such that $M \ded_{\Phi_i} t_0$ and $t_0 \to^* t\downR$.
This means that there exist a (public) context $C$ and some deduction facts 
$M_1\ded t_1,\ldots, M_n\ded t_n\in\Phi_i$ such that $M = C[M_1, \ldots, M_n]$
and $t_0 = C[t_1, \ldots, t_n]$. 

We show that 
either $(\Phi_i, \Psi_i)
\Longrightarrow^* \bot$ or
there exists $j \geq i$ such that $t\downR$ is
syntactically deducible from $\Phi_j$, 
by induction on $t_0$ equipped with the order
$<$ induced by the rewriting relation (that is $t_1<t_2$ if and only if
$t_2\to^+ t_1$).

\noindent\emph{Base case: $t_0 = t\downR$.} 
In such a case, since $\Phi_i \vdash t_0$, 
we have that $\Phi_i \vdash t\downR$. This allows us to conclude.

\smallskip{}
\noindent\emph{Induction step: $t_0 \to t' \to^* t\downR$.}

Along a fair derivation, \textbf{B} rules are applied in priority. Hence, 
we choose the smallest
$i_1 \geq i$ such that no more \textbf{B} rules can be applied from $(\Phi_{i_1},
\Psi_{i_1})$.
Note indeed that there is no infinite derivation with only \textbf{B}
rules (Proposition~\ref{prop:B_termination}).
We have still that $C[M_1,\ldots,M_n]\ded_{\Phi_{i_1}} t_0\to t'$.

Applying Lemma~\ref{lem:comp} and observing that no \textbf{B} rule
can be applied from $(\Phi_{i_1}, \Psi_{i_1})$, 
we are in one of the following cases:
\begin{itemize}
\item $(\Phi_{i_1}, \Psi_{i_1}) \Longrightarrow \bot$. In such a case, we
  easily conclude since $(\Phi_0,\Psi_0) \Longrightarrow^* \bot$.
\item $\Phi_{i_1} \vdash t''$ for some $t''$ such that $t'\to^*_{\R} t''$. In
  such a case, we conclude by applying our induction hypothesis since $t''<
  t'< t_0$. There exists $j \geq i_1$ such that $\Phi_j \vdash t\downR$.
\item Otherwise an instance ($l \to r$, $D$, $t_1, \ldots, t_n$, \ldots) of a \textbf{A}
rule is applicable. Note that this instance is entirely determined by the rewrite rule
$l \to r$ involved in the rewriting step $t_0 \to t'$, the deduction facts
$M_i \ded t_i$ ($1 \leq i \leq n$) and the public context that witness the
fact that $\Phi_i \vdash t_0$.
\end{itemize}

By fairness, we know that a \textbf{A}
  rule will be applied along the derivation for the same instance  ($l \to r$, $D$, $t_1, \ldots,
  t_n$, \ldots).  Let $i_2$ be the indice on which this instance is applied.
  We have that $i_2 \geq i_1$.
Note that since \textbf{B} rules are applied in
  priority, $(\Phi_{i_2}, \Psi_{i_2})$ is saturated for \textbf{B} rules.
Either, we have that 
 $(\Phi_{i_2}, \Psi_{i_2}) \Longrightarrow \bot$ (and thus $(\Phi_i, \Psi_i)
 \Longrightarrow^* \bot$) or
  $(\Phi_{i_2}, \Psi_{i_2}) \Longrightarrow (\Phi_{{i_2}+1}, \Psi_{{i_2}+1})$.

We have that $C[M_1, \ldots, M_n] \ded_{\Phi_{i_2}} t_0$ and $t_0 \to_\R t'$.
By Lemma~\ref{lem:comp}, either $(\Phi_{i_2}, \Psi_{i_2}) \Longrightarrow \bot$ or
there exists $(\Phi'_{i_2}, \Psi'_{i_2})$, $M'$ and $t''$ such that:
\begin{itemize}
\item  $(\Phi_{i_2}, \Psi_{i_2}) \Longrightarrow (\Phi'_{i_2}, \Psi'_{i_2})$;
\item $M'\ded_{\Phi'_{i_2}} t''$ with $t'\to^*_{\R} t''$; and
\item $\Psi'_{i_2} \models C[M_1, \ldots, M_n] \bowtie M'$.
\end{itemize}
Actually, the instance of the \textbf{A} rule that is applied in this
derivation is entirely determined by the rewrite rule $l \to r$ involved in
the rewriting step $t_0 \to t'$,  the public context $C$ and the deduction
facts $M_i \ded t_i$ ($1 \leq i \leq n$) that witness the fact that $\Phi_i
\vdash t_0$ (and thus $\Phi_{i_2} \vdash t_0$). Hence, we have
that $(\Phi'_{i_2}, \Psi'_{i_2}) = (\Phi_{{i_2}+1}, \Psi_{{i_2}+1})$.

Thus we have that $M'\ded_{\Phi'_{{i_2}+1}} t''$ with $t'' \to^* t\downR$ and $t''
< t' <t$. We can apply our induction hypothesis, either  $(\Phi_{{i_2}+1},
\Psi_{{i_2}+1}) \Longrightarrow^* \bot$ (and thus $(\Phi_i, \Psi_i)
\Longrightarrow^* \bot$) or there exists $j \geq {i_2}+1$ such that $\Phi_j
\vdash t\downR$.
\end{proof}

Our termination criteria  (Property~$(ii)$ below) is a semantic criterion.
It is related to the notion \emph{locally stable} introduced in~\cite{AbadiCortierTCS06}. 

\begin{proposition}[criterion for termination]
\label{theo:termination}
  Let $\varphi$ be an initial frame such that $\Init(\varphi)
  \,\, \not \!\!\Longrightarrow^* \bot$.
  The following conditions are equivalent:
  \begin{enumerate}[(i)]
  \item There exists a     saturated couple $(\Phi, \Psi)$ such that
    $\Init(\varphi) \Longrightarrow^* (\Phi, \Psi)$.
  \item There exists a (finite) initial frame $\varphi_s$ such that
    for every term $t$, $t$ is deducible from $\varphi$ modulo $\E$ iff
    $t\downR$ is syntactically deducible from $\varphi_s$.
 \item  There exists no fair infinite derivation
    starting from $\Init(\varphi)$.
  \end{enumerate}
\end{proposition}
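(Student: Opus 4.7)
The plan is to establish the cyclic chain $(i) \Rightarrow (ii) \Rightarrow (iii) \Rightarrow (i)$. Throughout, the hypothesis $\Init(\varphi) \not\Longrightarrow^* \bot$ guarantees that rule \textbf{A.3} is never selected in any derivation from $\Init(\varphi)$.

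For $(i) \Rightarrow (ii)$, given a saturated reachable state $(\Phi,\Psi)$, I would take $\varphi_s$ to be the initial frame obtained by relabeling each deduction fact of $\Phi$ with a fresh parameter, so that $\im(\varphi_s) = \im(\Phi)$. By Theorem~\ref{theo:soundcomp}(1), a ground term $t$ is deducible from $\varphi$ modulo $\E$ iff $t\downR$ is syntactically deducible from $\Phi$ (one may always pick $M = N$ in the right-hand side). Since syntactic deducibility from a frame depends only on its image of plain terms, this is equivalent to syntactic deducibility of $t\downR$ from $\varphi_s$. For $(iii) \Rightarrow (i)$, I would construct a fair derivation from $\Init(\varphi)$ by giving priority to \textbf{B}-rules and scheduling the remaining \textbf{A}-rule instances in a round-robin fashion. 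By $(iii)$ this derivation is finite, and by the non-failure hypothesis it cannot end in $\bot$, so it terminates at a state where no rule is applicable, i.e.\ a saturated state reachable from $\Init(\varphi)$.

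The main obstacle is $(ii) \Rightarrow (iii)$, which I would prove by contradiction. Assume a fair infinite derivation $(\Phi_0,\Psi_0) \Longrightarrow (\Phi_1,\Psi_1) \Longrightarrow \ldots$ exists and let $\varphi_s$ witness $(ii)$, with $\im(\varphi_s) = \{s_1,\ldots,s_m\}$. I may assume each $s_j$ is $\R$-reduced by restricting $\varphi_s$ to its $\R$-reduced facts: this restriction is still a witness of $(ii)$, because any $s_j$ occurring as a subterm of the $\R$-reduced term $t\downR$ is itself $\R$-reduced. Each $s_j$ is then trivially syntactically deducible from $\varphi_s$, hence deducible from $\varphi$ modulo $\E$ by $(ii)$; Lemma~\ref{lem:ded} (together with non-failure) therefore produces an index $k$ such that $\Phi_k \vdash s_j$ for every $j$.

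The key step is to show that rule \textbf{A.2} cannot fire at any stage $i \geq k$. Indeed, such a step would add a ground $\R$-reduced term $(r\sigma)\downR$ which, by soundness (Lemma~\ref{lem:sound}), is deducible from $\varphi$ modulo $\E$. Applying $(ii)$ gives $(r\sigma)\downR \in \Fpub[\{s_1,\ldots,s_m\}]$, and composing the witnessing context with the syntactic deducibility of each $s_j$ from $\Phi_k \subseteq \Phi_i$ yields $\Phi_i \vdash (r\sigma)\downR$. Property~(a) of $\Ctx$ then forces $\Ctx(\Phi_i \cup \{z_1 \ded z_1,\ldots,z_q \ded z_q\} \vdash^?_\R r\sigma) \neq \bot$, so \textbf{A.1} is selected in place of \textbf{A.2}. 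Once \textbf{A.2} is excluded, only \textbf{B.2} can enlarge the frame, and since it only adds subterms of existing terms, a straightforward induction yields $\im(\Phi_i) \subseteq \st(\im(\Phi_k))$ for all $i \geq k$, a finite set. Hence $\Phi$ stabilizes after finitely many further \textbf{B.2} steps; with $\Phi$ fixed, the rules \textbf{A.1} and \textbf{B.1} can only produce finitely many distinct equations (each is determined by a finite choice of rewrite rule, decomposition, and deduction facts from $\Phi$), contradicting the assumed infiniteness of the derivation.
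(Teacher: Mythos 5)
Your proof is correct and follows essentially the same route as the paper: $(i)\Rightarrow(ii)$ via Theorem~\ref{theo:soundcomp}, $(iii)\Rightarrow(i)$ by running a fair derivation, and $(ii)\Rightarrow(iii)$ by contradiction, using Lemma~\ref{lem:ded} to make the $s_j$ syntactically deducible and then showing that property~(a) of $\Ctx$ blocks \textbf{A.2}, so the frame stabilizes and only finitely many equations can be added. Your normalization of $\varphi_s$ to $\R$-reduced facts and the subterm bound $\im(\Phi_i)\subseteq\st(\im(\Phi_k))$ are sound refinements of details the paper handles implicitly (it instead waits for \textbf{B}-saturation so the frame is literally frozen).
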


\begin{proof}
$(iii) \Rightarrow (i)$: trivial. Indeed by using a fair derivation we will
  eventually reach a weakly saturated state.
$(i) \Rightarrow (ii)$: Let $\Phi = \{M_1 \ded s_1, \ldots, M_\ell \ded
  s_\ell\}$ and  $\varphi_s = \{\w_1 \ded s_1, \ldots, \w_\ell \ded s_\ell\}$.
  Let $t$ be a ground term. By
  Theorem~\ref{theo:soundcomp}, we have that $\exists M \, . \, M \ded_\varphi^\E t$
  iff $\exists M \, .\, M \ded_\Phi t\downR$, i.e. $\exists M \, .\, M
  \ded_{\varphi_s} t\downR$.
$(ii) \Rightarrow (iii)$: we need to prove that
there exists no fair
    infinite derivation starting from $\Init(\varphi)$.

Let $\varphi_s = \{\w_1 \ded s_1,
    \ldots, \w_\ell \ded s_\ell\}$ an initial frame
 such that for every~$t$,  $\exists M \, .\,
    M \ded_\varphi^\E t$ is
    equivalent to $\exists M \, . \,M \ded_{\varphi_s} t\downR$.
Assume by contradiction that there is an infinite fair derivation
$(\Phi_0, \Psi_0) 
\Longrightarrow \ldots \Longrightarrow (\Phi_n,
  \Psi_n) \Longrightarrow \ldots$ with $(\Phi_0, \Psi_0)  = \Init(\varphi)$.

By Lemma~\ref{lem:ded} and since $\Init(\varphi)
    \,\,\not\!\!\Longrightarrow^* \bot$, we deduce that 
there exists $i_0$ such that each $s_i$, $1\leq i\leq \ell$ is
syntactically deducible from $\Phi_{i_0}$.
Since there is no infinite derivation with only \textbf{B} rules (Proposition~\ref{prop:B_termination}), we
    can also assume that no \textbf{B} rule can be applied from $\Phi_{i_0}$.
We have that $\exists M \, .\,
    M \ded_\varphi^\E t$ is
 now   equivalent to $\exists M \, . \,M \ded_{\Phi_{i_0}} t\downR$
    thus the \textbf{A.2} rule cannot be applied either.
We deduce that no deduction facts are added to $\Phi_{i_0}$ along the
    derivation, that is $\Phi_{j} = \Phi_{i_0}$ for every $j\geq
    i_0$.
Since no deduction fact are added, only a finite number of 
 \textbf{A.1} rules can be applied, which contradicts the existence of
    an infinite chain. \qed
\end{proof}


Together with the syntactic criterion described in
Section~\ref{sec:fail} to prevent non-failure, 
this criterion (Property~$(ii)$) allows us to prove 
decidability of deduction
and static equivalence for layered convergent theories
that belong to the class of {locally stable} theories defined 
in~\cite{AbadiCortierTCS06}.  As a consequence, our procedure always
saturates for the theories of blind signatures and homomorphic
encryption since those theories are layered and have been proved
locally stable~\cite{AbadiCortierTCS06}.  Other examples of layered
convergent theories enjoying this criterion can be found
in~\cite{AbadiCortierTCS06} (e.g. a theory of addition).
While in~\cite{AbadiCortierTCS06} the decision algorithm needs 
to be adapted for each theory, we propose a single (and efficient) 
algorithm that ensures a unified treatment of all these theories.


\section{Implementation: the tool YAPA} 
\label{sec:experiment}

YAPA (Yet Another Protocol Analyzer) is an Ocaml implementation of the
saturation procedure presented in Section~\ref{sec:rules} with several optional optimizations.
It can be freely downloaded\footnote{\url{http://www.lsv.ens-cachan.fr/~baudet/yapa/index.html}}
together with a brief manual and examples.

The tool takes as input an equational theory described by a finite
convergent rewrite system, as well as frame definitions and
queries.
The procedure starts by computing the decompositions of the
rewrite system.
By default, the following optimization is done: provided that the
rewrite rules are given in an order compatible with the sets $\R_0
\subseteq \ldots \subseteq \R_{N+1}$ of Definition~\ref{def:layered},
the tool is able to recognize layered theories and to
pre-compute the associated contexts~$C$ related to condition~(ii) of
this definition. This allows resolving the failure cases as soon as they appear, rather than later on, when the saturation procedure has made enough progress.
This optimization was studied in a first version of this article~\cite{BCD-RTA09} but as the practical benefits appear to be minor (see below), we chose not to keep these technical developments in this version for the sake of notational simplicity.

Another optimization concerns a specific treatment of subterm
convergent theories but does not induce any difference with the
theoretical procedure presented here.
Except for the first (optional) optimization mentioned above, the algorithm follows the procedure described in Section~\ref{sec:rules}, using a minimal function $\Ctx$ in the sense in Section~\ref{sec:termination-fail}, and a fair strategy of rule application (see Definition~\ref{def:fair}).

\medskip{}

We have conducted several experiments on a PC Intel Core 2 Duo at
2.4~GHz with 2~Go RAM 
for various equational theories (see below) and
found that YAPA provides an efficient way to check static equivalence
and deducibility. Those examples are available at
\url{http://www.lsv.ens-cachan.fr/~baudet/yapa/index.html}.
The figures given below are valid for the versions with and without optimizations. 

For the case of $\E_\enc$,
we have run YAPA on the frames:
\begin{itemize}
\item  $\varphi_n = \{\w_1\ded t_n^0,\w_2\ded
\cst c_0,\w_3\ded \cst c_1\}$, and 
\item $\varphi_n'=\{\w_1\ded t_n^1,\w_2\ded \cst c_0,\w_3\ded
\cst c_1\}$, 
\end{itemize}
\noindent where $t_0^i = \cst c_i$ and $t_{n+1}^i =
\langle\enc(t_n^i,\cst k_n^i),\cst k_n^i\rangle$, $i\in\{0,1\}$. These examples
allow us to increase the (tree, non-DAG) size of the distinguishing
tests exponentially, while the sizes of the frames grow
linearly. Despite the size of the output
, we have observed satisfactory performances for the tool.

\begin{center}
\begin{tabular}{|c|c|c|c|c|c|c|c|c|c|c|}
\hline 
\begin{tabular}{c}Equational \\theory \end{tabular}& 
\begin{tabular}{c}$\E_\enc$\\$n=10$\end{tabular}&
\begin{tabular}{c}$\E_\enc$\\$n=14$\end{tabular}&
\begin{tabular}{c}$\E_\enc$\\$n=16$\end{tabular}&
\begin{tabular}{c}$\E_\enc$\\$n=18$\end{tabular}&
\begin{tabular}{c}$\E_\enc$\\$n=20$\end{tabular}
\\\hline
Execution time &
$<$ 1s  &
1,7s &
 8s &
30s &
$<$ 3min 
\\\hline
\end{tabular}
\end{center}

We have also experimented YAPA on several convergent theories, e.g. $\E_\blind$, $\E_\homo$,
$\E_\pref$ and the theory of addition $\E_\add$ defined
in~\cite{AbadiCortierTCS06}.

\paragraph*{Comparison with ProVerif}

In comparison with the tool ProVerif~\cite{BlanchetCSFW01,BlanchetAbadiFournetJLAP07}, here
instrumented to check static equivalences, our test samples suggest a
running time between one and two orders of magnitude faster for
YAPA. Also we did not succeed in making ProVerif terminate on the two
theories $\E_\homo$ and $\E_\add$. Of course, these results are not entirely surprising given that
ProVerif is tailored for the more general (and difficult) problem of
protocol (in)security under active adversaries. In particular
ProVerif's initial preprocessing of the rewrite system appears more
substantial 
than ours and does not terminate on the theories $\E_\homo$ and $\E_\add$
(although termination is guaranteed for linear or subterm-convergent theories~\cite{BlanchetAbadiFournetJLAP07}).


\paragraph*{Comparison with KiSs.}
The tool KiSs (Knowledge in Security protocolS) is  a C++ implementation of the procedure described
in~\cite{CDK-cade09}. This procedure reused the same
concepts than the one presented in a preliminary version of this work~\cite{BCD-RTA09}.
The performances of the tool YAPA are comparable to the performances of
KiSs. However, since the tool KiSs implements DAG representations for
terms, it does better on the example developed above.
From the point of view of the equational theories the tools are able to deal
with, they are incomparable.
KiSs allows one to consider some equational theories for which our procedure
fails (e.g. the theory of trapdoor bit commitment). 

Conversely our procedure is guaranteed to terminate (without failure) for
theories that are not considered by the procedure implemented in KiSS.
The only general class of theory for which KiSs has been proved to terminate
is the class of subterm convergent equational theory.

\section{Conclusion and future work}
\label{sec:conclu}

We have proposed a procedure for checking deducibility and 
static equivalence. Our procedure is correct and complete for 
any convergent theory and is efficient, as shown by its implementation 
within the tool YAPA. Since deducibility and static equivalence 
are undecidable in general, our algorithm may fail or may not terminate.
We have identified a large class of equational theories 
(called layered convergent) for which non-failure of the procedure 
is ensured. Since termination can then often be easily proved by hand, 
we have obtained a new decidability result for the prefix theory.
We have also proposed a semantic (and exact) characterization for 
the procedure to terminate. This again yields a new decidability 
result for locally stable, layered convergent theories.

As further work, we would like to extend our procedure to theories with 
associative and commutative operators. A first possibility would be to 
implement the decidability result of~\cite{CortierDelaune-LPAR07-monoidal} 
for monoidal theories (that include many theories with 
associative and commutative operators) and to combine the two procedures 
using the combination theorem of~\cite{ACD-frocos07}.
However, it seems much more efficient to integrated associativity and 
commutativity directly and this could even open the way to a more powerful combination technique.

The tool KiSS, developed recently~\cite{CDK-cade09}, supports 
several equational theories for which our procedure fails. Conversely our 
procedure is guaranteed to terminate (without failure) for classes of theories 
that are not considered by the procedure implemented in KiSS.
It would be interesting to compare the techniques and possibly 
to combine them in order to capture more theories.


\bibliographystyle{acmtrans}
\bibliography{biblio}

\newpage\appendix

\section{Appendix}






\label{app:app-comp}


\noindent\usebox{\lemcompsyntded}

\begin{proof}
By hypothesis, we have that $N \ded_{\Phi} t$. This means that there exists a
public context~$C$ and some facts $M_1 \ded t_1, \ldots, M_n \ded t_n \in
\Phi$ such that $N = C[M_1, \ldots, M_n]$ and $t = C[t_1,\ldots, t_n]$.
Let~$C$ be such a context whose size is minimal. We
show the result by structural induction on~$C$.

\smallskip{}

\noindent{\emph{Base case:}} $C$ is reduced to an
hole.  Let $(\Phi',\Psi') =
(\Phi, \Psi)$ and $N'= N$. The result trivially holds.

\smallskip{}

\noindent{\emph{Induction step:}} $C = f(C_1,\ldots, C_r)$ with $f \in \F_{\pub}$ of arity~$r$.
In such a case, we have ${t = f(u_1,\ldots, u_r)}$ and 
$C_i[M_1,\ldots, M_n] \ded_{\Phi} u_i$ with $u_i \in \st(t_0)$
for each $1 \leq i \leq r$. Thus, we can apply our induction hypothesis. We
deduce that there exists $(\Phi_1, \Psi_1)$ and terms $N'_1, \ldots N'_r$ such
that:

\begin{itemize}
\item $(\Phi, \Psi) \Longrightarrow^* (\Phi_1, \Psi_1)$ using $\mathbf{B}$ rules, 
\item  $N'_i\ded u_i \in \Phi_1$ and $\Psi_1\models C_i[M_1,\ldots, M_n]
  \bowtie N'_i$ for each $1\leq i \leq r$.
\end{itemize}

\noindent From this we easily deduce that $\Psi_1 \models N \bowtie f(N'_1,\ldots, N'_r)$.
We apply one $\mathbf{B}$ rule.
We have that $M_0 \ded t_0, N'_1 \ded u_1, \ldots,
N'_r \ded u_r \in \Phi_1$, $t = f(u_1,\ldots, u_r) \in \st(t_0)$ and $f \in
\F_{\pub}$. 
We distinguish two cases:

\smallskip{}

\noindent \emph{Rule \textbf{B.1}}: Assume that for all $M_t$ we have that $(M_t \ded t) \not\in
\Phi_1$. 

Let $\Phi'= \Phi_1 \cup \{f(N'_1,\ldots, N'_r) \ded t\}$, $\Psi'= \Psi_1$ and $N'=
f(N'_1,\ldots, N'_r)$. In order to conclude it remains to show that $\Psi'
\models N \bowtie N'$.
This is an easy consequence of the fact that
$\Psi_1 \models   N \bowtie f(N'_1,\ldots, N'_r)$.

\smallskip{}

\noindent \emph{Rule \textbf{B.2}}.
Assume that  there exists~$M_t$ such that $M_t \ded t \in \Phi_1$.

Let $\Phi'= \Phi_1$, $\Psi'= \Psi_1 \cup \{f(N'_1,\ldots, N'_r) \bowtie M_t\}$
and $N'=M_t$. In order to conclude it remains to show that $\Psi'\models N
\bowtie N'$. We have ${\Psi'\models f(N'_1,\ldots, N'_r) \bowtie N'}$ and
$\Psi'\models N \bowtie f(N'_1,\ldots, N'_r)$. This allows us to conclude.
\qed
\end{proof}


 \noindent\usebox{\lemcompsynteq}

\begin{proof}
By hypothesis, we have that $M \ded_{\Phi} t$ and $N \ded_{\Phi} t$ for some
term $t$.
By definition of $\ded_{\Phi}$, we have that
\begin{itemize}
\item  $M = C[M_1, \ldots, M_k]$, $N = C'[N_1, \ldots, N_\ell]$ for some contexts~${C,C'}$,
\item the facts  $M_1 \ded t_1,
\ldots, M_k \ded t_k$ and  $N_1 \ded u_1,\ldots, N_\ell \ded u_\ell$ are  in
$\Phi$,
\item $C[t_1, \ldots, t_k] = C'[u_1, \ldots, u_\ell]$.
\end{itemize}
We prove the result by structural induction on~$C$ and~$C'$. We assume
w.l.o.g. that $C$ is smaller than $C'$ (in terms of number of symbols).

\smallskip{}

\noindent \emph{Base case:} $C$ is reduced to an hole.
We have that ${C[M_1, \ldots, M_k] = M_1}$.
By hypothesis, we have that $N \ded_{\Phi} t = t_1$
 and thus $t \in \st(t_1)$. Thanks to Lemma~\ref{lem:compbasebase}, there exists
$(\Phi',\Psi')$ and~$N'$ such that
$(\Phi,\Psi) \Longrightarrow^* (\Phi',\Psi')$ using a \textbf{B} rule,
${N'\ded t_1 \in \Phi'}$ and $\Psi'\models N \bowtie N'$.
Since $M_1 \ded t_1$ and $N'\ded t_1$ are both in $\Phi'$, we deduce that
$N'=M_1$. Hence we have that $N' = M$ and thus we easily conclude.

\smallskip{}

\noindent \emph{Induction step:} $C =  f(C_1, \ldots, C_r)$ and $C'= f(C'_1, \ldots, C'_r)$
where $f \in \F_\pub$ is a symbol of arity~$r$ and $C_1, \ldots , C_r,
C'_1, \ldots, C'_r$ are contexts. Moreover, we have
that  $C_i[t_1, \ldots, t_k] = C'_i[u_1, \ldots,
  u_\ell]$ for every $1 \leq i \leq r$,
By applying the induction hypothesis, we deduce that there exists $(\Phi',
\Psi')$ such that 
\begin{itemize}
\item $(\Phi, \Psi) \Longrightarrow^* (\Phi', \Psi')$, and 
\item $\Psi'\models C_i[M_1, \ldots, M_k] \bowtie C'_i[N_1, \ldots,
  N_\ell]$ for every $1 \leq i \leq r$.
\end{itemize}
Hence, we have that $\Psi'\models M \bowtie N$. This allows us to conclude.
\qed
\end{proof}




The following lemma justifies the notion of decomposition
(Definition~\ref{def:decomp}) as far as completeness is concerned.

\begin{lemma}[decomposition of a context reduction]\label{lem:decomp}
  Let $\Phi$ be a frame, $l$~a~(plain) term, $\sigma$ a
  substitution, and $M$ a term such that $M \ded_\Phi l\sigma$.
  Then there exist 
  \begin{itemize}
  \item a $(n,p,q)$-decomposition $D$ of $l$, written
    $l = D[l_1,\ldots,l_n,y_1,\ldots y_{p+q}]$,
  \item $n$ deduction facts $M_1 \ded t_1$, \ldots,  $M_n \ded t_n$ in $\Phi$,
  \item $p+q$ recipes $N_1$, \ldots, $N_{p+q}$
  \end{itemize}
  such that
  \begin{itemize}
  \item for every $1 \leq i \leq n$,\; $t_i = l_i \sigma$ and
  \item for every $1 \leq j \leq p+q$,\; $N_j \ded_\Phi y_j \sigma$.
  \end{itemize}
  In particular, $D[M_1,\ldots,M_n,N_1,\ldots N_{p+q}] \ded_\Phi l \sigma$.
  
  Besides, if $l$ is a left-hand side of rule in $\R$ and $\Phi$ is
  $\R$-reduced, $D$ is a proper decomposition (i.e. $D \neq \w_1$).
\end{lemma}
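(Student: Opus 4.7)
The plan is to carry out a direct construction guided by the context witnessing $M \ded_\Phi l\sigma$. By definition of $\ded_\Phi$, there exist a public context $C$ and deduction facts $M'_1 \ded t'_1, \ldots, M'_k \ded t'_k$ in $\Phi$ such that $M = C[M'_1, \ldots, M'_k]$ and $l\sigma = C[t'_1, \ldots, t'_k]$. The shape of $C$ will tell me where to cut $l$ into the skeleton $D$ and its leaves (the $l_i$'s, $y_j$'s, and $z_k$'s).

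Concretely, I define the \emph{frontier} $S \subseteq \pos(l)$ to be the set of $\leq$-minimal positions $\alpha$ at which either $l|_\alpha$ is a variable, or $C|_\alpha$ is a parameter. At any proper prefix $\beta$ of some $\alpha \in S$, the subterm $l|_\beta$ is non-variable and $C|_\beta$ is not a parameter, and since $l\sigma|_\beta = C|_\beta[t'_1, \ldots, t'_k]$ has a common non-variable root symbol viewed from either side, a short induction shows that $l$ and $C$ carry the same public function symbol at every position strictly above $S$. I then build $D$ by replacing, inside $l$, each subterm at position $\alpha \in S$ by a fresh parameter, while keeping the public function symbols (and any public ground subterm that happens to sit above $S$ without any descendant in it) unchanged elsewhere. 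For each $\alpha \in S$ with $l|_\alpha$ non-variable, the frontier condition forces $C|_\alpha = \w_m$ for some $m$, so $t'_m = l|_\alpha\sigma$ and I harvest $l|_\alpha$ as a candidate $l_i$ with recipe $M'_m$. For each $\alpha \in S$ with $l|_\alpha = x$ a variable, the subterm $M|_\alpha$ of $M$ (equal to $M'_m$ when $C|_\alpha = \w_m$, and to $C|_\alpha[M'_1, \ldots, M'_k]$ otherwise) satisfies $M|_\alpha \ded_\Phi x\sigma$. After merging equal non-variable subterms into a single $l_i$ and equal variables into a single variable (the corresponding parameter of $D$ is then reused at all their positions), and after classifying each remaining variable as a $y_j$ when it appears in $\var(l_1, \ldots, l_n)$ and as a $z_k$ otherwise, I obtain the desired $(n,p,q)$-decomposition $l = D[l_1, \ldots, l_n, y_1, \ldots, y_p, z_1, \ldots, z_q]$ together with the claimed recipes, and the identity $D[M_1, \ldots, M_n, N_1, \ldots, N_{p+q}] \ded_\Phi l\sigma$ follows by plugging everything back.

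For the \emph{besides} clause, I argue by contradiction. Suppose $l$ is a left-hand side of a rule of $\R$, $\Phi$ is $\R$-reduced, and $D = \w_1$. Then $S$ reduces to the root. Since left-hand sides of rewrite rules are not variables, $l|_\epsilon$ is non-variable, forcing $C$ to be the single parameter $\w_m$ at the root and $l\sigma = t'_m \in \im(\Phi)$. But $\im(\Phi)$ consists of $\R$-reduced terms, whereas $l\sigma$ is reducible in head position by the rule with left-hand side $l$, a contradiction.

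The main subtlety I expect to encounter is purely one of bookkeeping: the classification of a variable stopping at some $\alpha \in S$ as a $y_j$ rather than a $z_k$ must be performed only after all the $l_i$'s have been fixed, because a variable appearing as a decomposition leaf may very well also occur inside some non-variable subterm selected as a distinct $l_i$ at another frontier position, and the definition requires the $y_j$'s to be exactly the variables of $\var(l_1, \ldots, l_n)$ that reappear outside. Once this bookkeeping is settled, the verification that $(D, (l_i)_i, (y_j)_j, (z_k)_k)$ fulfils every clause of Definition~\ref{def:decomp} (distinct $l_i$'s, mutually distinct variable lists, correct placement $y_j \in \var(l_1, \ldots, l_n)$ and $z_k \notin \var(l_1, \ldots, l_n)$) follows directly from the way the frontier $S$ was chosen.
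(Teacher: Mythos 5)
Your proposal is correct and takes essentially the same route as the paper: your frontier $S$ is precisely the set of hole positions of the paper's ``largest common context'' $D_0$ of $l$ and $C[x_1,\ldots,x_m]$, and the subsequent harvesting of the non-variable pieces (matched to parameters of $C$) and of the variable pieces, the deduplication, and the $y$/$z$ classification mirror the paper's construction. The properness argument --- ruling out a variable left-hand side and then contradicting $\R$-reducedness of $\im(\Phi)$ via head-reducibility of $l\sigma$ --- is also the paper's.
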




\begin{proof}
  Since $M \ded_\Phi l\sigma$, by definition there exists $C$ and
  $M^0_1 \ded t^0_1$, \ldots, $M^0_m \ded t^0_m$ in $\Phi$ such that
  $M=C[M^0_1,\ldots, M^0_m]$ and $l\sigma=C[t^0_1,\ldots, t^0_m]$.

  Let $x_1$, \ldots, $x_m$ be fresh variables. Given that
  $C[x_1,\ldots,x_m]$ and $l$ unify and have distinct variables, there
  exists a largest common context~$D_0$ such that
    $l = D_0[l^0_1,\ldots,l^0_{a}, y^0_1,\ldots, y^0_{b}]$ and 
    $C = D_0[\w_{j_1},\ldots,\w_{j_{a}}, D_1, \ldots, D_{b}]$
  where the terms $l^0_i$ are not variables and $D_0$ uses all his
  parameters: in particular 
  $l\sigma=C[t^0_1,\ldots, t^0_m]$ means that
  \begin{itemize}
  \item for every $1 \leq k \leq a$,\; $l^0_k \sigma = t^0_{j_k}$, and
  \item for every $1 \leq k \leq b$,\; $y^0_k \sigma = D_k[t^0_1,\ldots, t^0_m]$
  \end{itemize}
  
  Let $n$ be the cardinal of $\{l^0_1,\ldots,l^0_{a}\}$. For each
  distinct $l_i$ in $\{l^0_1,\ldots,l^0_{a}\}$ ($1 \leq i \leq n$), we
  choose $k$ in $\{1,\ldots,a\}$ such that $l_i = l^0_k$ and define
  $M_i=M^0_k$ and $t_i=l^0_k \sigma=l_i\sigma$. Besides, for every $k'$ such
  that $l^0_{k'}= l^0_k$, we define $w_{k'} = \w_i$.

  Let $p$ be the cardinal of $\{y ^0_1,\ldots, y^0_{b}\} \cap
  \var(l_1,\ldots,l_n)$. For each distinct $y_j$ in $\{y ^0_1,\ldots,
  y^0_{b}\}$ ($1 \leq j \leq p$), we choose $k$ in $\{1,\ldots,b\}$
  such that $y_j = y^0_k$ and define $N_j = D_k[M^0_1,\ldots,
  M^0_m]$. Besides, for every $k'$ such that $y^0_{k'}= y^0_k$, we
  define $w_{a+k'} = \w_{p+j}$.

  Let $q=b-p$. We repeat the same operation for each distinct $y_j$ in
  $\{y ^0_1,\ldots, y^0_{b}\}-\var(l_1,\ldots,l_n)$ ($p+1 \leq j \leq p+q$).

  Finally, we let $D=D_0[w_1,\ldots,w_{a+b}]$. By construction, we
  have that
  \begin{itemize}
  \item $l = D[l_1,\ldots,l_n,y_1,\ldots y_{p+q}]$,
  \item the $l_i$ are mutually distinct non-variable terms and the
    $y_i$ are mutually distinct variables.
  \item $y_i \in \var(l_1, \ldots, l_n)$ iff $i \leq p$.
  \item $M_i \ded t_i$ is in $\Phi$,
  \item for every $1 \leq i \leq n$,\; $t_i = l_i \sigma$, and
  \item for every $1 \leq j \leq p+q$,\; $N_j \ded_\Phi y_j \sigma$.
  \end{itemize}
  
  \smallskip

  \noindent As for the last sentence, if $D$ is a parameter, so is
  $D_0$. As $l = y^0_k$ is impossible for a convergent system $\R$, we
  have $D_0 = \w_k$ with $k \leq a$.
  Hence $C = \w_{j_k}$ and $t^0_{k}=C[t^0_1,\ldots,t^0_k]=l\sigma$ is
  not $\R$-reduced. \qed
\end{proof}


\noindent\usebox{\lemcompred}
 
\begin{proof}
  By hypothesis, there exist a (public) context~$C$ and some
  deduction facts $M^0_1 \ded t^0_1$, \ldots, $M^0_{m_0} \ded
  t^0_{m_0} \in \Phi$ such that $M = C[M^0_1,\ldots, M^0_{m_0}]$ and
  $t = C[t^0_1,\ldots, t^0_{m_0}]$.

  Moreover, there exist a position~${\alpha}$, a substitution~$\sigma$ and a
  rewrite rule ${l \to r \in \R}$ such that $t|_{\alpha} = l\sigma$ and
  $t'= t[r\sigma]_{\alpha}$.

  We note that ${\alpha}$ must be a (symbol) position of $C$ since the
  $t_i^0$ are $\R$-reduced. Hence we may write $C|_{\alpha}[t^0_1,\ldots,
  t^0_{m_0}] = l\sigma$.

  By Lemma~\ref{lem:decomp}, we deduce that there exist 
  \begin{itemize}
  \item a proper $(n,p,q)$-decomposition $D$ of $l$\,: 
    $l = D[l_1,\ldots,l_n,y_1,\ldots y_{p}, z_1,\ldots z_{q}]$,
  \item $M_1 \ded t_1$, \ldots,  $M_n \ded t_n$ in $\Phi$,
  \item $N_1$, \ldots, $N_{p+q}$
  \end{itemize}
  such that
  \begin{itemize}
  \item for every $1 \leq i \leq n$, $t_i = l_i \sigma$,
  \item for every $1 \leq j \leq p$, $N_j \ded_\Phi y_j \sigma$, and
  \item for every $1 \leq k \leq q$, $N_{p+k} \ded_\Phi z_k \sigma$.
  \end{itemize}

  In particular, we obtain that
  \begin{eqnarray*}
    M|_{\alpha} = C|_{\alpha}[M^0_1,\ldots, M^0_{m_0}] &\ded_\Phi& C|_{\alpha}[t^0_1,\ldots, t^0_{m_0}] = l\sigma\\
    D[M_1,\ldots, M_n, N_1, \ldots, N_{p+q}] &\ded_\Phi& D[t_1,\ldots, t_n, y_1 \sigma, \ldots, y_p \sigma, z_1 \sigma, \ldots, z_q \sigma] = l\sigma
  \end{eqnarray*}
  
  Thus, by Lemma~\ref{lem:compbase}, there exists a derivation $(\Phi,
  \Psi) \Longrightarrow^* (\Phi_1, \Psi_1)$ using~\textbf{B}~rules such
  that $\Psi_1 \models M|_{\alpha} \qded D[M_1,\ldots, M_n, N_1, \ldots, N_{p+q}]$.

  Besides, since $y_j$ belongs to $\var(l_1,\ldots,l_n)$ by definition of
  decompositions, $y_j\sigma$ is a subterm of some
  $l_{i}\sigma=t_{i}$. Since $N_j \ded_\Phi y_j \sigma$, by
  applying Lemma~\ref{lem:compbasebase} repeatedly, we deduce that
  there exist some term $M_{n+1}$, \ldots, $M_{n+p}$ and a derivation
  $(\Phi_1, \Psi_1) \Longrightarrow^* (\Phi_2, \Psi_2)$ using
  \textbf{B}~rules such that for all $j$,
  \begin{itemize}
  \item $M_{n+j} \ded y_j \sigma$ is in $\Phi_2$, and
  \item $\Psi_2 \models M_{n+j} \qded N_j$.
  \end{itemize}

  Let $N = D[M_1,\ldots, M_{n+p}, N_{p+1}, \ldots, N_{p+q}]$.  We
  deduce that $N \ded_{\Phi_2} l\sigma$, and \[\Psi_2 \models M|_{\alpha}
  \qded D[M_1,\ldots, M_n, N_1, \ldots, N_{p+q}] \qded N\]


  \noindent We now consider the application to $(\Phi_2,\Psi_2)$ of a
  \textbf{A}~rule that involves the rewrite rule $l\to r$, the
  decomposition $D$, the plain terms
  $(t_1,\ldots,t_{n+p})=(l_1,\ldots,l_n, y_1,\ldots,y_p)\sigma$ and
  the substitution $\sigma'=\sigma|_V$ obtained by restricted the
  $\sigma$ to the domain $V=\var(l_1,\ldots,l_n)=\var(l_1,\ldots,l_n,
  y_1,\ldots,y_p)$.


  \smallskip{}

  \noindent \emph{Case} \textbf{A.3}. If $(r\sigma')\downR$ is not
  ground and $\Ctx(\Phi_2^+  \vdash^?_\R r\sigma') = \bot$ where $\Phi_2^+ = \Phi_2 \cup \{z_1 \ded
  z_1,\ldots, z_q \ded z_q\}$, then we may conclude that
  $(\Phi_2,\Psi_2) \Longrightarrow \bot$ by an instance of
  rule~\textbf{A.3} involving $l\to r$, the decomposition $D$ and the
  facts $M_1\ded t_1$,\ldots,$M_{n+p} \ded t_{n+p}$.

  \smallskip{}

  \noindent \emph{Case} \textbf{A.1}. If there exists $N_0 =
  \Ctx(\Phi_2^+ \vdash^?_\R r\sigma')$
  where $\Phi_2^+ = \Phi_2 \cup \{z_1 \ded z_1,\ldots, z_q \ded
  z_q\}$. By Property~(\ref{item:ctx_correct}) of $\Ctx$, let~$s_0$ be such
  that $N_0 \ded_{\Phi_2 \cup \{z_1,
    \ldots, z_q\}} s_0$ and $r\sigma' \rR^* s_0$, and define
  \begin{itemize}
  \item $\Phi' = \Phi_2$,
  \item $\Psi'= \Psi_2 \cup \{\forall z_1,\ldots, z_q. D[M_1,\ldots,
    M_{n+p},z_1,\ldots, z_q] \bowtie N_0\}$,
  \item $M'=M[M_0]_{\alpha}$ where $M_0= N_0 \;\{z_i \mapsto N_{p+i}\}_{1
      \leq i \leq q}$,
  \item $t''= t[t_0]_{\alpha} = t'[t_0]_{\alpha}$ where $t_0 = s_0 \;\{z_i \mapsto z_i
    \sigma\}_{1 \leq i \leq q}$.
  \end{itemize}
  By construction, we have $(\Phi_2,\Psi_2) \Longrightarrow
  (\Phi',\Psi')$ by an instance of rule \textbf{A.1}.

  Besides, $r\sigma' \rR^* s_0$ implies $t'|_{\alpha} = r\sigma \rR^*
  t_0$ and $t' \rR^* t''$.

  Given that $\alpha \in \pos(C)$ (where $C$ is the previously context related to
  $M \ded_{\Phi} t$) and $M_0 \ded_{\Phi'} t_0$, we have that
  $M'=M[M_0]_{\alpha}\ded_{\Phi'} t[t_0]_{\alpha} = t''$.

  It remains to show that $\Psi' \models M \bowtie M'$. Indeed, we
  have seen that $\Psi_2 \models M|_{\alpha} \qded N$ where $N =
  D[M_1,\ldots, M_{n+p}, z_1, \ldots, z_{q}]\{z_i \mapsto N_{p+i}\}_{1
    \leq i \leq q}$.
  Besides, by definition
  of $\Psi'$, it holds that $\Psi' \supseteq \Psi_2 \supseteq \Psi_1$ and we
  have that $\Psi'
  \models D[M_1,\ldots, M_{n+p}, z_1, \ldots, z_{q}] \qded
  N_0$. Therefore, $\Psi' \models M|_{\alpha} \qded M_0$ and $\Psi'
  \models M \bowtie M[M_0]_{\alpha}=M'$.

  \smallskip
  \noindent \emph{Case} \textbf{A.2}: if $(r\sigma')\downR$ is ground
  and $\Ctx(\Phi_2^+\vdash^?_\R
  r\sigma') = \bot$ where $\Phi_2^+ = \Phi_2 \cup
  \{z_1 \ded z_1,\ldots, z_q \ded z_q\}$, define
  \begin{itemize}
  \item $M_0 = D[M_1,\ldots, M_{n+p}, \cst a, \ldots, \cst a]$ and
    $t_0=(r\sigma')\downR$,
  \item $\Phi' = \Phi_2 \cup \{M_0 \ded t_0\}$,
  \item $\Psi'= \Psi_2 \cup\, \{\forall z_1,\ldots, z_q. D[M_1,\ldots,
    M_{n+p},z_1,\ldots, z_q] \bowtie M_0\}$,
  \item $M'= M[M_0]_\alpha$, and
  \item $t''= t[t_0]_\alpha$.
  \end{itemize}
  where~$\cst a$ is the fixed public constant of rule \textbf{A.2}.

  By construction, $(\Phi,\Psi) \Longrightarrow (\Phi',\Psi')$ by an
  instance of the \textbf{A.2} rule.

  Since $t_0$ is ground and $\sigma=\sigma'\sigma$, we
  have $t_0 = (r\sigma)\downR$. Therefore $t' = t[r\sigma]_\alpha
  \rR^* t[\,(r\sigma)\downR\,]_\alpha = t''$.

  Given that $\alpha \in
  \pos(C)$ 
  and by construction $M_0 \ded_{\Phi'} t_0$, we have
  $M'\ded_{\Phi'} t''$.

  It remains to show that $\Psi' \models M \bowtie M'$. Indeed, we
  have seen that $\Psi_2 \models M|_{\alpha} \qded N$ where $N =
  D[M_1,\ldots, M_{n+p}, z_1, \ldots, z_{q}]\{z_i \mapsto N_{p+i}\}_{1
    \leq i \leq q}$.
  By definition of $\Psi'$, it holds that $\Psi' \models N \qded M_0$
  hence $\Psi' \models M \bowtie M[N]_{\alpha} \bowtie M[M_0]_{\alpha} = M'$.

  \smallskip

  \noindent
  The additional properties claimed on the derivation are clear from
  the construction above. \qed
\end{proof}

\end{document}